%% file: main.tex
\documentclass{article}
\usepackage{iclr2027_conference,times}
\iclrfinalcopy
\usepackage{etoolbox}
\makeatletter
\patchcmd{\@maketitle}{Published as a conference paper at ICLR 2027}{Preprint}{}{\PackageWarning{main}{Header patch failed}}
\makeatother
\usepackage[T1]{fontenc}
\usepackage[utf8]{inputenc}
\usepackage{amsmath,amssymb,booktabs,array,graphicx}
\usepackage{xcolor,arydshln}
\usepackage{subcaption}
\usepackage{placeins}
\usepackage{hyperref,url}
\usepackage{amsthm}
\usepackage{array}
\usepackage{xspace}
\newtheorem{proposition}{Proposition}[section]
\newtheorem{lemma}[proposition]{Lemma}
\newtheorem{corollary}[proposition]{Corollary}
\theoremstyle{definition}
\newtheorem{remark}[proposition]{Remark}
\hypersetup{colorlinks=true,citecolor=blue,linkcolor=blue,urlcolor=blue}
\newcommand{\E}{\mathbb{E}}
\newcommand{\Cov}{\operatorname{Cov}}
\newcommand{\Var}{\operatorname{Var}}
\newcommand{\KL}{D_{\mathrm{KL}}}
\newcommand{\ours}{Redwing\xspace}
\title{Tokens Change, Structure Endures:\\Spectral Watermarking for Generated Speech}
\author{Kanghwi Lee$^{1,2}$\thanks{Work done during an internship at NAVER Cloud.} \quad Kyeongseok Jeong$^{2}$ \quad Jeongmin Liu$^{2}$ \\
{\normalfont $^{1}$Institute of Neuroinformatics, University of Zurich and ETH Zurich, Switzerland} \\
{\normalfont $^{2}$NAVER Cloud} \\
\texttt{kanlee@ini.ethz.ch}, \texttt{\{ks.jeong, jeongmin.liu\}@navercorp.com}
}
\begin{document}
\raggedbottom
\maketitle
\begin{abstract}
Watermarking is a promising tool for establishing the provenance of AI-generated speech. While many neural audio watermarking methods rely on a separately trained watermark generator, token-level watermarking is a training-free alternative that operates directly during generation. Its main weakness is retokenization: decoding generated speech to a waveform and encoding it again can change token identities and erode the watermark.
To make the watermark robust to these changes, we propose \ours, REtokenization-Durable Watermarking IN Generation. It builds a graph from the token substitutions observed under retokenization, whose Laplacian yields a basis that assigns similar values to tokens likely to substitute for one another. Over this basis, embedding and detection functions are jointly optimized to preserve watermark signal through retokenization while limiting embedding distortion and detector variability on unwatermarked speech.
On the Moshi full-duplex system, after eight consecutive passes of Mimi resynthesis, \ours achieves 80.7\% TPR at a calibrated 1\% FPR, compared with 8.3\% for KGW and at most 7.3\% for WMAR. It also has the highest TPR after eight passes through three other neural codecs (77.5–93.0\%), and the gains generalize to TTS models at a speech-quality cost close to that of KGW.
These results show that retokenization is not merely a source of noise: its transition structure can be exploited as a design principle for robust token-level watermarking.
\end{abstract}
\input{sections/introduction}
\input{sections/related_work}
\input{sections/method}
\input{sections/experiments}
\input{sections/results}
\input{sections/discussion}

\input{sections/statements}
\bibliography{main}
\bibliographystyle{iclr2027_conference}
\clearpage
\appendix
\input{sections/appendix}
\end{document}

%% file: sections/introduction.tex
\section{Introduction}
Speech language models can now hold spoken conversations and read any text aloud in a chosen voice
\citep{borsos2023audiolm,wang2023valle,moshi,cosyvoice,moss}. The same ability makes it easy to impersonate a person or to fake a recording, so the
provider of such a model needs a way to recognize the speech that its model produced. Watermarking serves this
purpose by hiding a signal in the generated speech that only a detector holding a secret key can find. A watermark is
useful, however, only if it survives what happens to the audio after generation, whether platforms re-encode it
routinely or someone processes it on purpose to hide where it came from. We refer to any such processing as an
\emph{attack}. Resynthesis through a neural codec, which encodes the audio into discrete tokens and
decodes it again, is a particularly strong attack, because the decoder rebuilds the waveform from a compact code.

Most audio watermarks are embedded into the finished waveform by a trained network, and a second network detects
them \citep{wavmark,audioseal,timbre,singh2024silentcipher}. Most of them survive codecs that keep the waveform close to the
original, such as MP3 at 64\,kbps, but repeated passes through low-bitrate neural codecs remove them in most cases. Even CRAW \citep{craw}, which is trained to survive neural codecs, is removed by a
few passes through the Mimi codec in our experiments. Token-level watermarks take a different route. Introduced for text
with KGW \citep{kgw} and now deployed at scale \citep{dathathri2024synthid}, they bias the logits of the tokens that a language
model samples during generation with a secret key and detect this bias statistically, without training any network. Speech language models generate speech as the tokens of a neural codec
\citep{zeghidour2022soundstream,encodec}, so the same idea applies to them directly, and a watermark in the tokens is a
natural candidate to survive codec resynthesis, since the codec is built to preserve what its tokens encode.

For speech, however, token-level watermarks face a problem that text does not pose. A text detector reads the very tokens
that the model sampled, whereas a speech detector must encode the received waveform back into tokens. Even when nothing
happens to the audio, some of the tokens change in this round trip
(Figure~\ref{fig:overview_a}). Each further pass of resynthesis repeats the round trip, so a watermark that relies on the
exact identity of each token weakens with every pass.

Our watermark, \ours (REtokenization-Durable Watermarking IN Generation), is designed around these substitutions rather
than against them. The codec does not
replace tokens at random, and a token is mostly replaced by one of a few others that sound alike. We count these substitutions on a speech corpus and connect tokens that often replace each other in a graph
over the vocabulary. The eigenvectors of its graph Laplacian with the smallest eigenvalues change little between
such tokens, and we use them as a basis, so that a substitution changes the value of any combination of them little. In this basis, we solve for an embedding function,
which biases the model's choice of tokens during generation, and a separate detection function, which scores the tokens
recovered from the audio. We choose them so that as much of the watermark as possible survives the round trip, while it
changes the generated speech little and rarely fires on unwatermarked speech. The method needs only substitution counts and
statistics of unwatermarked generations, so it applies to a released model and codec unchanged.

We evaluate the watermark on the dialogue model Moshi and on two text-to-speech (TTS) models, CosyVoice3 and MOSS-TTS. So
that the comparison is fair, our watermark changes the model's sampling distribution by no more than the KGW baseline does. After eight
passes through Moshi's codec, Mimi, \ours is still detected in 80.7\,\% of Moshi's answers at a false-positive rate
of 1\,\%, whereas KGW falls to 8.3\,\% and WMAR \citep{wmar}, which fine-tunes the codec so that more tokens survive, to
7.3\,\% or less. \ours also has the highest detection rate after
most other low-bitrate neural codecs. To test whether the approach generalizes beyond dialogue models, we
rebuild it from the codec of each of two text-to-speech (TTS) models. After eight passes through that codec, it is detected in 84.7\,\% of
CosyVoice3's clips and 99.8\,\% of MOSS-TTS's, against 9.2\,\% and 6.2\,\% for KGW. Its cost in speech quality is about that of KGW.

Our contributions are thus a basis for token-level watermarks built from the substitutions of a codec round trip,
embedding and detection functions solved in this basis, and evidence that the approach generalizes across models and
codecs.

%% file: sections/related_work.tex
\section{Related work}
\paragraph{Post-hoc audio watermarks.}
WavMark \citep{wavmark}, AudioSeal \citep{audioseal}, Timbre \citep{timbre} and SilentCipher
\citep{singh2024silentcipher} train an embedding and a detection network jointly, with distortions applied between the
two, and CRAW \citep{craw} adds neural-codec resynthesis to these distortions. Latent-Mark \citep{latentmark} instead
shifts the codec latent of the waveform. Post-hoc watermarks apply to the output of any
generator, but their robustness varies widely across edits and compression \citep{liu2024audiomarkbench,wen2025sok}, and
repeated resynthesis through low-bitrate neural codecs removes most of it (Section~\ref{sec:results}).

\paragraph{Generation-time watermarks.}
These are embedded during generation. Latent watermarking
\citep{sanroman2024latent} trains the generator on watermarked audio. Token-level watermarks bias the logits of the sampled
tokens. KGW \citep{kgw} biases the logits of a keyed green list of tokens and detects an excess of green tokens. Later work makes
them distortion-free or undetectable \citep{kuditipudi2024robust,hu2024unbiased,christ2024undetectable}, studies their
robustness \citep{zhao2024provable,kirchenbauer2024reliability} and deploys them \citep{dathathri2024synthid}
\citep[see][]{liu2024survey}. For audio generation, Aligned-IS \citep{alignedis} is
distortion-free, and WMAR \citep{wmar} fine-tunes the codec so that more tokens survive re-encoding. HiPT \citep{hipt} is closest to our work. It applies
KGW to clusters of tokens that retokenization tends to confuse. Our basis also comes from these substitutions, but it
gives every token a real value that varies smoothly over the substitution graph, and our embedding and detection
functions differ from each other and are solved for, whereas KGW and HiPT use one random green list for both
(Appendix~\ref{app:related}).

\paragraph{Neural codecs and speech language models.}
Neural codecs such as SoundStream \citep{zeghidour2022soundstream}, EnCodec \citep{encodec}, DAC \citep{dac},
SpeechTokenizer \citep{speechtokenizer}, SNAC \citep{snac} and Mimi \citep{moshi} quantize speech into streams of tokens
with residual vector quantization or variants of it \citep{vandenoord2017vqvae}, and audio language models generate these tokens directly
\citep{borsos2023audiolm,wang2023valle,copet2023musicgen}.

\paragraph{Spectral bases on graphs.}
Our basis uses the eigenvectors of a graph Laplacian with the
smallest eigenvalues, which vary slowly across strongly connected nodes, as in spectral clustering, Laplacian eigenmaps
and graph signal processing
\citep{chung1997spectral,shi2000normalized,belkin2003laplacian,luxburg2007tutorial,shuman2013emerging}.

%% file: sections/method.tex
\providecommand{\syncW}{2}
\section{Methods}
\label{sec:method}
\begin{figure}[t]
\centering
\begin{subfigure}{\linewidth}
\centering
\includegraphics[width=\linewidth]{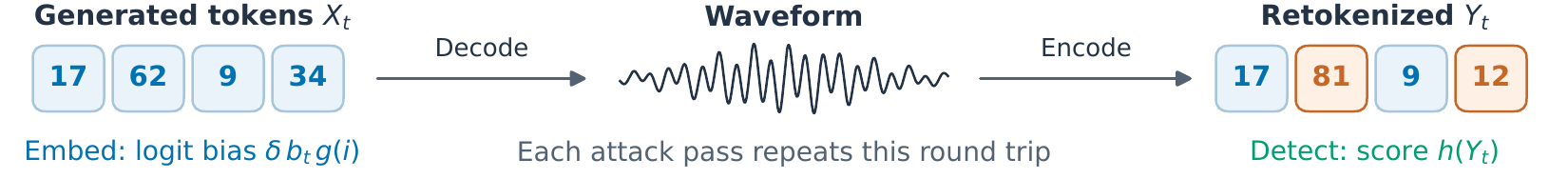}
\caption{Token identities change when the audio is decoded and encoded again.}
\label{fig:overview_a}
\end{subfigure}\\[4pt]
\begin{subfigure}{\linewidth}
\centering
\includegraphics[width=\linewidth]{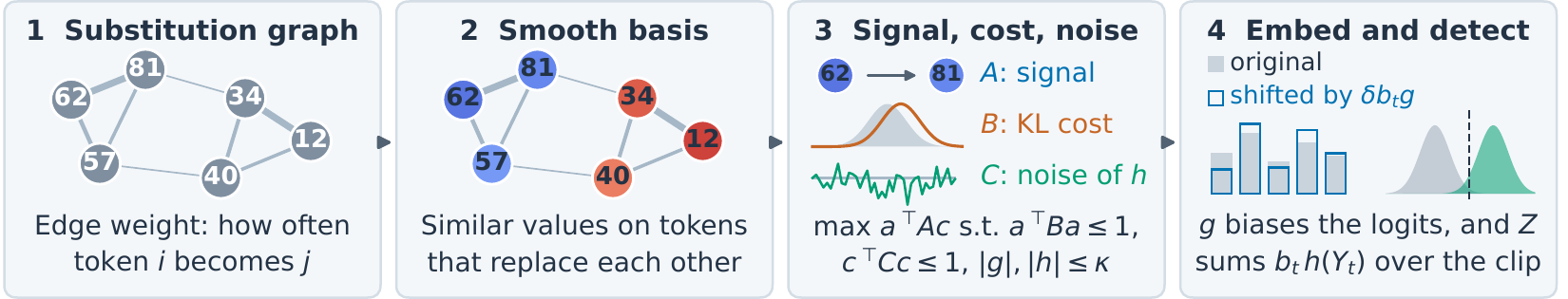}
\caption{Design the watermark around these substitutions.}
\label{fig:overview_b}
\end{subfigure}
\caption{\textbf{Overview.} (a) Some tokens change in every round trip, even without an attack. (b) We build a basis
from the observed substitutions and solve in it for the embedding and detection functions. Token IDs, graphs and
matrices are schematic.}
\label{fig:overview}
\end{figure}

The watermark is a keyed bias on the token logits of a speech language model, detected from the tokens recovered from the
audio. Appendix~\ref{app:method} gives derivations and implementation details.

\subsection{Setting}
\label{sec:setting}
\label{sec:detect}
A speech language model, whether it holds a dialogue or reads a text aloud (TTS), generates audio as the
tokens of a neural codec. Many speech codecs use residual vector quantization
(RVQ), in which several quantizers encode each frame in turn, and each quantizer level forms one \emph{stream} of tokens. At
frame $t$, the language model emits one token $X_{s,t}\in\{1,\ldots,V\}$ on each stream $s$, and the codec decoder turns these
tokens into a waveform. Encoding that waveform again gives tokens $Y_{s,t}$, which can differ from $X$ even when the audio is
unchanged. We call this re-encoding \emph{retokenization}, and the random map from $X$ to $Y$ the \emph{retokenization channel}.

The watermark is written on a subset $\mathcal S$ of the streams. A secret key defines a pseudorandom sign
$b_{s,t}\in\{-1,+1\}$ for every stream and frame. For each stream $s\in\mathcal S$, an \emph{embedding function}
$g_s\in\mathbb R^V$ assigns a real value to every token. During generation, the logits of stream $s$ at frame $t$ are shifted
before temperature scaling and sampling,
\begin{equation}
 \ell'_{s,t}(i)=\ell_{s,t}(i)+\delta\, b_{s,t}\, g_s(i),\qquad i\in\{1,\ldots,V\},
\label{eq:write}
\end{equation}
where $\delta\ge0$ is the watermark strength. Frames with $b_{s,t}=+1$ thus favor tokens with large $g_s$, and frames with
$b_{s,t}=-1$ those with small $g_s$.

The detector encodes any audio it receives with the codec encoder, whether or not it was generated, scores each resulting token $Y_{s,t}$ with a \emph{detection function} $h_s\in\mathbb R^V$, and checks whether the scores
rise and fall with the key signs,
\begin{equation}
 Z(\tau)=\frac{\sum_{s\in\mathcal S}\sum_t b_{s,t+\tau+d_s}\,h_s(Y_{s,t})}
 {\sqrt{\sum_{s\in\mathcal S}\sum_t h_s(Y_{s,t})^2}},\qquad
 Z^\star=\max_{|\tau|\le\syncW}Z(\tau),
\label{eq:detector}
\end{equation}
where $d_s$ is the generation delay of stream $s$, and the offset $\tau$ absorbs shifts of the
frame grid caused by retokenization. The numerator adds up how well the scores agree with the signs, and the denominator
scales this sum by the size of the scores. If the signs are modeled as independent fair signs, independent of the tokens, $Z(\tau)$
has mean zero and unit variance on any unwatermarked clip. On watermarked audio the agreement accumulates over frames, and $Z(\tau)$ is predicted
to grow with the square root of their number (Appendix~\ref{app:detect}). A useful watermark therefore needs an embedding function
that changes the sampling distribution little and a detection function that follows the signs after retokenization while
adding little noise on unwatermarked speech. Because $Z^\star$ is a maximum over offsets, it is not standard normal, so we
set its threshold empirically (Section~\ref{sec:experiments}).

\subsection{A basis from token substitutions}
\label{sec:basis}
We construct $g_s$ and $h_s$ in the same way for every stream and omit the index $s$ from here on. We start from the substitutions
that retokenization makes on speech recordings.
Each recording is encoded with the codec of the language model (Mimi for Moshi),
decoded, and encoded again. After aligning the two token
sequences in time (Appendix~\ref{app:counts}), we count $N_{ij}$, the number of frames in which token $i$ was recovered as
token $j$. We keep the tokens observed at least 50 times, which we call the \emph{support}, and connect them in the
\emph{substitution graph} by how often either replaced the other,
\begin{equation}
 W_{ij}=\tfrac12\left(N_{ij}+N_{ji}\right)\mathbf 1\{i\ne j\},\qquad
 D_{ii}=\textstyle\sum_j W_{ij},\qquad
 L=I-D^{-1/2}WD^{-1/2}.
\label{eq:graph}
\end{equation}
Here $L$ is the normalized Laplacian of the substitution graph \citep{chung1997spectral,luxburg2007tutorial}. We exclude self-transitions, because with them
the leading modes become indicators of single tokens that tend to survive, which carry little of the watermark
(Appendix~\ref{app:basis}).

Let $u_k$ be a unit eigenvector of $L$ with eigenvalue $\lambda_k$. The rescaled vector $\phi_k=D^{-1/2}u_k$, set to zero outside
the support, is a function on the vocabulary that satisfies
\begin{equation}
 \lambda_k=\tfrac12\textstyle\sum_{i,j}W_{ij}\big(\phi_k(i)-\phi_k(j)\big)^2
\label{eq:rayleigh}
\end{equation}
(Appendix~\ref{app:basis}). The eigenvalue is thus a weighted sum of squared differences between the values of $\phi_k$ on
tokens that replace each other. A function with small $\lambda_k$ takes similar values on such tokens, so retokenization
changes it little. By design, we discard the constant functions, which have $\lambda=0$ and
do not vary within connected tokens, and keep the functions $\phi_1,\ldots,\phi_K$ of the next $K=16$ eigenvalues as the
\emph{basis} (Appendix~\ref{app:hyper}). Every linear combination of these functions is smooth in the same sense, because its
weighted sum of squared differences is at most $\lambda_K$ times its squared size (Appendix~\ref{app:basis}). We write
$\phi(i)=(\phi_1(i),\ldots,\phi_K(i))\in\mathbb R^K$ for the basis values of token $i$.

\subsection{Signal, cost, and noise}
\label{sec:moments}
Let $p_t$ be the sampling distribution of the language model at frame $t$. How well a pair of functions works depends on
three properties of the basis, which three $K\times K$ matrices measure: how its values carry over from a generated token
to the token that the detector recovers ($A$), how much they vary among the tokens that the language model is likely to
sample ($B$), and how they are distributed on unwatermarked speech ($C$).

\paragraph{Surviving signal $A$.} A shift in the generated tokens helps the detector only if it is still present in the
recovered tokens. Let the \emph{transition matrix} $P_{ij}=N_{ij}/\sum_kN_{ik}$ be the probability that token $i$ is recovered as token $j$.
Unlike the substitution graph, $P$ includes $j=i$, since a token that survives is recovered as itself. The expected
basis values of the token recovered from token $i$ are $(P\phi)(i)=\sum_jP_{ij}\phi(j)$, and
\begin{equation}
 A=\E_t\Big[\Cov_{X\sim p_t}\big(\phi(X),(P\phi)(X)\big)\Big]
\label{eq:A}
\end{equation}
is the covariance between the basis values of the generated token and the expected basis values of the recovered token,
averaged over frames.

\paragraph{Write cost $B$.} Every change to the sampling distribution moves the generated speech away from what the model
would have produced, so we measure the price of a bias by the KL divergence it causes, the same quantity that we hold
equal across methods (Section~\ref{sec:experiments}). A bias can change the distribution only through the tokens that
the language model is likely to sample. A function that takes the same value on all of them leaves the distribution
unchanged and costs nothing, and the more it varies among them, the higher its cost. The matrix
\begin{equation}
 B=\E_t\Big[\Cov_{X\sim p_t}\big(\phi(X)\big)\Big]
\label{eq:B}
\end{equation}
measures this variation for the basis, averaged over frames. Section~\ref{sec:solve} shows that, to first order, it
determines the KL divergence of the bias.

\paragraph{Null variance $C$.} Let $H_0$ denote unwatermarked speech, generated by the same language model without the bias,
and $Y$ the tokens that the detector obtains from it. The mean and covariance of their basis values are
\begin{equation}
 \mu_0=\E_{H_0}\big[\phi(Y)\big],\qquad C=\Cov_{H_0}\big(\phi(Y)\big).
\label{eq:abc}
\end{equation}
On unwatermarked speech every term in the numerator of Eq.~\eqref{eq:detector} is noise, and $C$ determines how large
it is (Section~\ref{sec:solve}). We estimate all four quantities from unwatermarked generations for a separate set of fitting prompts
(Section~\ref{sec:experiments}).

\subsection{Embedding and detection functions}
\label{sec:solve}
Both functions are linear combinations of the basis,
\begin{equation}
 g(i)=a^\top\phi(i),\qquad h(i)=c^\top\big(\phi(i)-\mu_0\big),\qquad a,c\in\mathbb R^K.
\label{eq:scores}
\end{equation}
Every such function inherits the smoothness of the basis (Section~\ref{sec:basis}), so the design reduces to choosing the
coefficient vectors $a$ and $c$. The detection function subtracts $\mu_0$ so that it has zero mean on unwatermarked speech
(Appendix~\ref{app:firstorder}).

\paragraph{Objective.} The FPR threshold is set on unwatermarked audio, so a pair of embedding and detection functions is better the more it raises
the mean of $Z$ on watermarked audio at a given cost. Write $\beta=\delta/T$ for sampling temperature $T$. To leading order in
$\beta$, the signal, cost and noise of one frame are quadratic forms in the matrices of Section~\ref{sec:moments}
(Appendix~\ref{app:firstorder}),
\begin{equation}
 \E\big[b_th(Y_t)\big]\approx\beta\,a^\top Ac,\qquad
 \E_t\big[\KL(q_t\Vert p_t)\big]\approx\tfrac12\beta^2\,a^\top Ba,\qquad
 \E_{H_0}\big[h(Y)^2\big]=c^\top Cc,
\label{eq:signal}
\end{equation}
In order, these are the mean contribution of one frame to the numerator of Eq.~\eqref{eq:detector}, the KL divergence
of the biased sampling distribution $q_t$ from $p_t$, and the mean squared detector score on unwatermarked speech. 
A clip of $n$ frames and a fixed cost at
$\varepsilon$ nats per frame gives (Appendix~\ref{app:objective})
\begin{equation}
 \E[Z]\approx\sqrt{2\varepsilon n}\;J(a,c),\qquad J(a,c)=\frac{a^\top Ac}{\sqrt{a^\top Ba}\,\sqrt{c^\top Cc}}.
\label{eq:ratio}
\end{equation}
With $\varepsilon$ and $n$ fixed, this approximate $\E[Z]$ depends only on $J$, which depends only on $a$ and $c$. We
therefore choose the coefficients that maximize $J$, the signal per unit
of cost and of noise. They are $a=B^{-1/2}u$ and $c=C^{-1/2}v$, where $u$ and $v$ are the leading singular vectors of
$B^{-1/2}AC^{-1/2}$, as in canonical correlation analysis \citep{hotelling1936relations} (Appendix~\ref{app:objective}).

\paragraph{Bounded problem.} The maximizer of $J$ gives extreme values to a few rarely sampled tokens, which $B$ and $C$
barely penalize. When such a token becomes a candidate, the bias forces or excludes it, and speech quality collapses
before the watermark becomes detectable (Appendix~\ref{app:bounded}). We therefore solve a surrogate that caps the cost and the
noise and bounds the value of every token,
\begin{equation}
 \max_{a,c}\ a^\top Ac\quad\text{subject to}\quad a^\top Ba\le1,\quad c^\top Cc\le1,\quad
 \big|g(i)\big|,\big|h(i)\big|\le\kappa\ \text{for all tokens } i.
\label{eq:solve}
\end{equation}
We use $\kappa=5$ (Appendix~\ref{app:hyper}). For fixed $c$ the problem is a second-order cone program \citep{boyd2004convex} in $a$, and vice versa, so we
alternate between the two (Appendix~\ref{app:bounded}).

%% file: sections/experiments.tex
\section{Experimental setup}
\label{sec:experiments}

\paragraph{Language models and data.}
Our primary language model is Moshi \citep{moshi}, a full-duplex dialogue model that generates eight streams of Mimi
tokens, each with a vocabulary of 2,048, at 12.5 frames per second. We prompt it with 1,000 spoken questions from WildVoice
\citep{voicebench}, split into a fitting set of 150 prompts for the statistics of Section~\ref{sec:moments}, a validation
set of 250 for choosing the strength $\delta$ and the watermarked streams, and a test set of 600 for all reported results.
To test whether the construction transfers, we also watermark two TTS language models, CosyVoice3
\citep{cosyvoice} and MOSS-TTS \citep{moss}, which read 250 validation and 600 test texts in the voices of WildVoice
speakers. Each has its own basis and functions, built from its own codec. For all three, the substitutions are counted
on 20 hours of LibriSpeech \citep{librispeech} (Appendix~\ref{app:protocol}). We watermark streams 1--4 of Moshi and
1--16 of the 32 of MOSS-TTS.

\paragraph{Baselines.}
KGW \citep{kgw} runs on the same streams as our method, with $\delta=2$ and one green list of a quarter of the
vocabulary for all frames, as in WMAR \citep{wmar}. WMAR keeps this watermark and fine-tunes Mimi so that more tokens
survive retokenization. We use its released codecs, fine-tuned with (FT+Augs) or without (FT) augmentations, so it
appears only on Moshi. The four post-hoc methods, AudioSeal \citep{audioseal}, WavMark \citep{wavmark}, Timbre
\citep{timbre} and CRAW \citep{craw}, watermark the unwatermarked outputs of each language model with their released weights. We
also tested HiPT \citep{hipt}, but at its nominal 1\,\% level it flags almost all unwatermarked Moshi answers, which its
authors confirmed, so we omit it. Aligned-IS \citep{alignedis}, which we ported to Moshi, is discussed in
Appendix~\ref{app:alignedis}.

\paragraph{Matching the cost.}
A stronger watermark is easier to detect but changes the speech more, so we compare methods at the same cost, the KL
divergence of the biased distribution $q_t$ from $p_t$, summed over the watermarked streams and averaged over the
validation frames. For a fair comparison, we set the budget, how far our $q_t$ may
diverge from $p_t$, to the cost of KGW with $\delta=2$ on the same streams. We use the largest $\delta$ on a validation grid whose cost stays within it, which gives $\delta=0.7$ on Moshi and
$\delta=0.9$ on both TTS models (Figures~\ref{fig:moshi_b} and~\ref{fig:moshi_c}, Table~\ref{tab:quality}). We select
$\delta$ from this measured cost, since its first-order prediction underestimates it on the TTS models
(Appendix~\ref{app:protocol}).

\paragraph{Attacks.}
The detector always encodes the received audio with the codec of the language model, so even audio that
arrives as generated, which we call the identity condition, is read as the retokenized sequence $Y$ rather than the
generated tokens $X$. An attack applies up to eight passes, each of which decodes and re-encodes the waveform, either with
the language model's own codec (native passes) or with EnCodec at 6\,kbps \citep{encodec}, SpeechTokenizer
\citep{speechtokenizer}, SNAC \citep{snac}, the 16\,kHz DAC model \citep{dac} and, on the TTS models, Mimi (foreign
passes). High-fidelity codecs and signal-processing attacks are in Appendix~\ref{app:codecs} and~\ref{app:dsp}.

\paragraph{Thresholds and error rates.}
Each detector's threshold is set so that at most 1\,\% of a calibration set of unwatermarked answers, used for
nothing else, exceed it (Appendix~\ref{app:data}). It is not recalibrated per attack, so every true-positive rate (TPR)
is at a false-positive rate (FPR) of at most 1\,\% on the calibration answers, and the held-out FPR is in
Appendix~\ref{app:fpr}. Intervals for TPR and FPR are 95\,\% Wilson score intervals
\citep{wilson1927probable}. In the tables, bold marks the best value in each column of a
model and every value whose interval overlaps it, and nothing when all overlap.

\paragraph{Speech quality.}
We score quality with UTMOSv2 \citep{utmosv2}, DNSMOS Pro \citep{dnsmospro} and the general and TTS versions of NISQA
\citep{nisqa}. Moshi sometimes gives an empty answer, with no word in its text stream, which the predictors rate as
very poor speech, so we score Moshi's quality on non-empty answers only, while detection uses all answers (Appendix~\ref{app:nonempty}). The TTS models, which read a given text, give no
empty answers.

%% file: sections/results.tex
\section{Results}
\label{sec:results}

\begin{figure}[t]
\centering
\includegraphics[width=\linewidth]{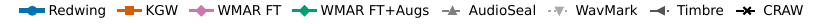}\\[2pt]
\begin{subfigure}[t]{0.41\linewidth}
\centering
\includegraphics[width=\linewidth]{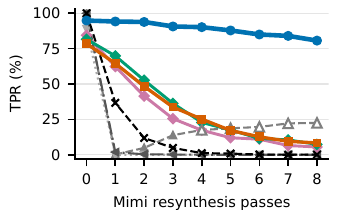}
\caption{Repeated Mimi resynthesis}
\label{fig:moshi_a}
\end{subfigure}\hfill
\begin{subfigure}[t]{0.28\linewidth}
\centering
\includegraphics[width=\linewidth]{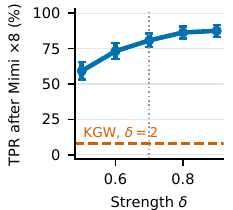}
\caption{Detection vs.\ $\delta$}
\label{fig:moshi_b}
\end{subfigure}\hfill
\begin{subfigure}[t]{0.28\linewidth}
\centering
\includegraphics[width=\linewidth]{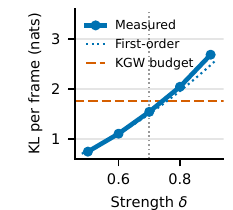}
\caption{Cost vs.\ $\delta$}
\label{fig:moshi_c}
\end{subfigure}
\caption{\textbf{Detection and cost on Moshi.} TPR at the fixed threshold (calibrated FPR $\le 1\,\%$). (a) On the test set,
where pass 0 is the identity condition. Open markers show attacks that flag more than 5\,\% of unwatermarked answers. (b, c) On
the validation set. Dashed lines show KGW, whose cost is the budget. Dotted lines mark the selected $\delta=0.7$ and,
in (c), the first-order prediction.}
\label{fig:moshi}
\end{figure}

\begin{table}[t]
\caption{\textbf{Detection on Moshi after eight codec passes.} Values are TPR (\%) at the fixed threshold $\pm$ the
half-width of the 95\,\% interval. Post-hoc methods are below the line, and bold is as defined in
Section~\ref{sec:experiments}. $^\ast$The same attack also flags more than 5\,\% of unwatermarked answers
(Table~\ref{tab:fpr}), so the cell is not bold. WavMark, which detects 0\,\% after every codec here, is omitted
(Appendix~\ref{app:codecs}).}
\label{tab:moshi_codecs}
\centering\footnotesize\setlength{\tabcolsep}{3pt}
\input{tables/r_t1_moshi_codecs}
\end{table}

\paragraph{Detection on Moshi.}
Figure~\ref{fig:moshi_a} follows detection through repeated Mimi resynthesis. Our watermark keeps 80.7\,\% after eight
passes, while KGW falls to 8.3\,\%. WMAR does no better and ends at 7.3\,\% or below. The post-hoc methods start near 100\,\%, but one Mimi pass removes almost all
of their watermark, and CRAW, which lasts longest, is gone after four passes. AudioSeal drops to 0.2\,\% after one pass
and climbs back to 22.5\,\% after eight, as each pass raises its score on all audio toward its fixed threshold, which then also
flags 10.5\,\% of unwatermarked answers, against 0.5\,\% for ours (Appendix~\ref{app:fpr}). At this
false-positive rate, AudioSeal is far from usable.

Our watermark also has the highest TPR after eight passes through EnCodec, SpeechTokenizer and SNAC
(Table~\ref{tab:moshi_codecs}). AudioSeal's 82.3\,\% after EnCodec says little about its watermark, since its threshold
then also flags 75.3\,\% of unwatermarked answers. Only on DAC16 does a post-hoc method, CRAW, do better
than ours, which still beats every token-domain baseline there.

\begin{figure}[t]
\centering
\begin{subfigure}[t]{0.38\linewidth}
\centering
\includegraphics[width=\linewidth]{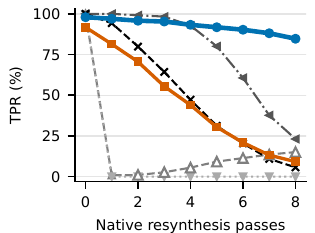}
\caption{CosyVoice3}
\label{fig:tts_a}
\end{subfigure}\hspace{0.02\linewidth}
\begin{subfigure}[t]{0.38\linewidth}
\centering
\includegraphics[width=\linewidth]{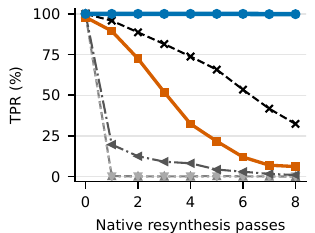}
\caption{MOSS-TTS}
\label{fig:tts_b}
\end{subfigure}\hspace{0.02\linewidth}
\begin{minipage}[t]{0.17\linewidth}
\vspace{-1.35in}
\includegraphics[width=\linewidth]{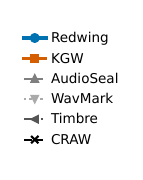}
\end{minipage}
\caption{\textbf{Repeated native resynthesis on the TTS models.} TPR on the test sets, as in Figure~\ref{fig:moshi_a}.}
\label{fig:tts}
\end{figure}

\begin{table}[t]
\caption{\textbf{Detection on the TTS models after eight codec passes.} The layout is as in Table~\ref{tab:moshi_codecs}, with each
model's own resynthesis (Native) as the first codec.}
\label{tab:tts_codecs}
\centering\footnotesize\setlength{\tabcolsep}{3pt}
\input{tables/r_t4_tts_codecs}
\end{table}

\paragraph{Transfer to text-to-speech models.}
Rebuilt from each model's own codec, the construction carries over to both TTS models (Figure~\ref{fig:tts}). After
eight native passes, our watermark keeps 84.7\,\% on CosyVoice3 and 99.8\,\% on MOSS-TTS, against 9.2\,\% and 6.2\,\%
for KGW and at most 32.4\,\% for the post-hoc methods. It also has the highest TPR after every foreign codec
(Table~\ref{tab:tts_codecs}). Its weakest case is Mimi on MOSS-TTS, but there no other method exceeds
10\,\%.

\begin{table}[t]
\caption{\textbf{Speech quality at the selected strengths.} Mean $\pm$ 95\,\% CI (higher is better), for Moshi on
non-empty answers. KL/frame is the cost. $^\dagger$Decoded with WMAR's fine-tuned decoder, which changes quality even
without a watermark. Bold (Section~\ref{sec:experiments}) covers only the watermarked rows.}
\label{tab:quality}
\centering\footnotesize\setlength{\tabcolsep}{3pt}
\input{tables/r_t2_quality}
\end{table}

\paragraph{Speech quality.}
Our watermark costs about as much quality as KGW (Table~\ref{tab:quality}). On Moshi, it lowers
UTMOSv2 by 0.10, as much as KGW does, and keeps the highest DNSMOS Pro and NISQAv2 among the watermarked rows. It
changes every predictor by at most 0.01 on CosyVoice3 and lowers UTMOSv2 by 0.12 on MOSS-TTS, against 0.09 for KGW. CRAW, the only post-hoc
method that survives some codecs, lowers UTMOSv2 by more than 0.2 on every model
(Appendix~\ref{app:quality}).

\paragraph{Ablations.}
Removing one component at a time shows where the robustness comes from (Table~\ref{tab:ablation} in
Appendix~\ref{app:ablation}). After eight Mimi passes, a random basis under the same bounded solve keeps 52.3\,\%, and a
basis from the transition matrix $P$ keeps 68.5\,\%, against 80.7\,\% for ours. The solved functions matter as well. A
fixed random function keeps only 16.8\,\%, and detecting with the embedding function in place of the detection function
keeps 76.5\,\%. Without the amplitude bounds, the cost grows so steeply with $\delta$ that the budget allows only
$\delta=0.07$, where the watermark keeps 18.7\,\% and UTMOSv2 falls well below that of every other variant. DAC16 is the
exception, where both alternative bases and detection with $g$ do better than ours.

%% file: tables/r_t1_moshi_codecs.tex
\begin{tabular}{lrrrrr}
\toprule
Method & Mimi (native) & EnCodec & SpeechTok. & SNAC & DAC16 \\
\midrule
Redwing & \textbf{80.7}\,{\scriptsize$\pm$3.2} & \textbf{85.0}\,{\scriptsize$\pm$2.9} & \textbf{77.5}\,{\scriptsize$\pm$3.3} & \textbf{93.0}\,{\scriptsize$\pm$2.1} & 34.2\,{\scriptsize$\pm$3.8} \\
KGW & 8.3\,{\scriptsize$\pm$2.2} & 16.5\,{\scriptsize$\pm$3.0} & 5.3\,{\scriptsize$\pm$1.8} & 63.2\,{\scriptsize$\pm$3.8} & 12.5\,{\scriptsize$\pm$2.6} \\
WMAR FT & 5.5\,{\scriptsize$\pm$1.8} & 24.7\,{\scriptsize$\pm$3.4} & 2.2\,{\scriptsize$\pm$1.2} & 58.5\,{\scriptsize$\pm$3.9} & 10.3\,{\scriptsize$\pm$2.4} \\
WMAR FT+Augs & 7.3\,{\scriptsize$\pm$2.1} & 15.7\,{\scriptsize$\pm$2.9} & 2.0\,{\scriptsize$\pm$1.2} & 57.2\,{\scriptsize$\pm$3.9} & 20.7\,{\scriptsize$\pm$3.2} \\
\midrule
AudioSeal & 22.5$^\ast$\,{\scriptsize$\pm$3.3} & 82.3$^\ast$\,{\scriptsize$\pm$3.0} & 1.5\,{\scriptsize$\pm$1.0} & 0.0\,{\scriptsize$\pm$0.3} & 0.2\,{\scriptsize$\pm$0.5} \\
Timbre & 0.0\,{\scriptsize$\pm$0.3} & 0.3\,{\scriptsize$\pm$0.6} & 0.0\,{\scriptsize$\pm$0.3} & 0.0\,{\scriptsize$\pm$0.3} & 0.0\,{\scriptsize$\pm$0.3} \\
CRAW & 0.2\,{\scriptsize$\pm$0.5} & 38.8\,{\scriptsize$\pm$3.9} & 1.2\,{\scriptsize$\pm$0.9} & 4.0\,{\scriptsize$\pm$1.6} & \textbf{47.7}\,{\scriptsize$\pm$4.0} \\
\bottomrule
\end{tabular}

%% file: tables/r_t4_tts_codecs.tex
\begin{tabular}{lrrrrrr}
\toprule
Method & Native & Mimi & EnCodec & SpeechTok. & SNAC & DAC16 \\
\midrule
\multicolumn{7}{l}{\textbf{CosyVoice3}} \\
Redwing & \textbf{84.7}\,{\scriptsize$\pm$2.9} & \textbf{68.8}\,{\scriptsize$\pm$3.7} & \textbf{89.6}\,{\scriptsize$\pm$2.5} & \textbf{61.2}\,{\scriptsize$\pm$3.9} & \textbf{91.9}\,{\scriptsize$\pm$2.2} & \textbf{94.0}\,{\scriptsize$\pm$1.9} \\
KGW & 9.2\,{\scriptsize$\pm$2.3} & 1.2\,{\scriptsize$\pm$0.9} & 9.9\,{\scriptsize$\pm$2.4} & 2.8\,{\scriptsize$\pm$1.4} & 29.3\,{\scriptsize$\pm$3.6} & 21.9\,{\scriptsize$\pm$3.3} \\
\midrule
AudioSeal & 15.1$^\ast$\,{\scriptsize$\pm$2.9} & 30.0$^\ast$\,{\scriptsize$\pm$3.7} & 56.4$^\ast$\,{\scriptsize$\pm$4.0} & 0.3\,{\scriptsize$\pm$0.6} & 0.0\,{\scriptsize$\pm$0.3} & 0.2\,{\scriptsize$\pm$0.5} \\
Timbre & 23.1\,{\scriptsize$\pm$3.4} & 0.3\,{\scriptsize$\pm$0.6} & 3.0\,{\scriptsize$\pm$1.4} & 0.0\,{\scriptsize$\pm$0.3} & 0.0\,{\scriptsize$\pm$0.3} & 0.2\,{\scriptsize$\pm$0.5} \\
CRAW & 5.7\,{\scriptsize$\pm$1.9} & 1.2\,{\scriptsize$\pm$0.9} & 61.5\,{\scriptsize$\pm$3.9} & 10.7\,{\scriptsize$\pm$2.5} & 11.7\,{\scriptsize$\pm$2.6} & 69.7\,{\scriptsize$\pm$3.7} \\
\midrule[\heavyrulewidth]\addlinespace[2pt]
\multicolumn{7}{l}{\textbf{MOSS-TTS}} \\
Redwing & \textbf{99.8}\,{\scriptsize$\pm$0.5} & \textbf{35.9}\,{\scriptsize$\pm$3.9} & \textbf{95.1}\,{\scriptsize$\pm$1.8} & \textbf{65.1}\,{\scriptsize$\pm$3.8} & \textbf{93.1}\,{\scriptsize$\pm$2.1} & \textbf{83.5}\,{\scriptsize$\pm$3.0} \\
KGW & 6.2\,{\scriptsize$\pm$1.9} & 0.3\,{\scriptsize$\pm$0.6} & 0.7\,{\scriptsize$\pm$0.7} & 0.7\,{\scriptsize$\pm$0.7} & 3.8\,{\scriptsize$\pm$1.6} & 2.2\,{\scriptsize$\pm$1.2} \\
\midrule
AudioSeal & 0.0\,{\scriptsize$\pm$0.3} & 9.3\,{\scriptsize$\pm$2.3} & 44.1$^\ast$\,{\scriptsize$\pm$4.0} & 0.2\,{\scriptsize$\pm$0.5} & 0.0\,{\scriptsize$\pm$0.3} & 0.5\,{\scriptsize$\pm$0.6} \\
Timbre & 1.0\,{\scriptsize$\pm$0.9} & 0.2\,{\scriptsize$\pm$0.5} & 1.8\,{\scriptsize$\pm$1.1} & 0.0\,{\scriptsize$\pm$0.3} & 0.0\,{\scriptsize$\pm$0.3} & 0.0\,{\scriptsize$\pm$0.3} \\
CRAW & 32.4\,{\scriptsize$\pm$3.7} & 0.2\,{\scriptsize$\pm$0.5} & 23.0\,{\scriptsize$\pm$3.4} & 0.7\,{\scriptsize$\pm$0.7} & 1.5\,{\scriptsize$\pm$1.0} & 33.9\,{\scriptsize$\pm$3.8} \\
\bottomrule
\end{tabular}

%% file: tables/r_t2_quality.tex
\begin{tabular}{lrrrrr}
\toprule
Method & UTMOSv2 & DNSMOS Pro & NISQAv2 & NISQA-TTS & KL/frame \\
\midrule
\multicolumn{6}{l}{\textbf{Moshi (non-empty answers)}} \\
Unwatermarked & 3.15\,{\scriptsize$\pm$0.03} & 4.38\,{\scriptsize$\pm$0.04} & 4.64\,{\scriptsize$\pm$0.04} & 3.56\,{\scriptsize$\pm$0.05} & 0 \\
Unwatermarked, FT decoder$^\dagger$ & 3.04\,{\scriptsize$\pm$0.03} & 4.33\,{\scriptsize$\pm$0.04} & 4.55\,{\scriptsize$\pm$0.04} & 3.40\,{\scriptsize$\pm$0.04} & 0 \\
Unwatermarked, FT+Augs decoder$^\dagger$ & 3.16\,{\scriptsize$\pm$0.03} & 4.36\,{\scriptsize$\pm$0.04} & 4.61\,{\scriptsize$\pm$0.04} & 3.54\,{\scriptsize$\pm$0.05} & 0 \\
Redwing & \textbf{3.05}\,{\scriptsize$\pm$0.03} & \textbf{4.47}\,{\scriptsize$\pm$0.02} & \textbf{4.62}\,{\scriptsize$\pm$0.03} & \textbf{3.52}\,{\scriptsize$\pm$0.04} & 1.52 \\
KGW & \textbf{3.06}\,{\scriptsize$\pm$0.04} & 4.35\,{\scriptsize$\pm$0.05} & 4.51\,{\scriptsize$\pm$0.06} & \textbf{3.44}\,{\scriptsize$\pm$0.05} & 1.73 \\
WMAR FT$^\dagger$ & 2.97\,{\scriptsize$\pm$0.04} & 4.30\,{\scriptsize$\pm$0.05} & 4.43\,{\scriptsize$\pm$0.05} & 3.27\,{\scriptsize$\pm$0.05} & 1.73 \\
WMAR FT+Augs$^\dagger$ & \textbf{3.04}\,{\scriptsize$\pm$0.04} & 4.32\,{\scriptsize$\pm$0.05} & 4.46\,{\scriptsize$\pm$0.06} & 3.41\,{\scriptsize$\pm$0.05} & 1.73 \\
\midrule[\heavyrulewidth]\addlinespace[2pt]
\multicolumn{6}{l}{\textbf{CosyVoice3}} \\
Unwatermarked & 2.91\,{\scriptsize$\pm$0.03} & 4.30\,{\scriptsize$\pm$0.03} & 3.78\,{\scriptsize$\pm$0.07} & 3.52\,{\scriptsize$\pm$0.05} & 0 \\
Redwing & 2.91\,{\scriptsize$\pm$0.03} & 4.29\,{\scriptsize$\pm$0.03} & 3.77\,{\scriptsize$\pm$0.07} & 3.52\,{\scriptsize$\pm$0.05} & 0.60 \\
KGW & 2.91\,{\scriptsize$\pm$0.03} & 4.29\,{\scriptsize$\pm$0.03} & 3.73\,{\scriptsize$\pm$0.07} & 3.50\,{\scriptsize$\pm$0.05} & 0.61 \\
\midrule[\heavyrulewidth]\addlinespace[2pt]
\multicolumn{6}{l}{\textbf{MOSS-TTS}} \\
Unwatermarked & 2.85\,{\scriptsize$\pm$0.04} & 4.20\,{\scriptsize$\pm$0.03} & 3.61\,{\scriptsize$\pm$0.07} & 3.28\,{\scriptsize$\pm$0.05} & 0 \\
Redwing & 2.73\,{\scriptsize$\pm$0.04} & 4.16\,{\scriptsize$\pm$0.03} & 3.60\,{\scriptsize$\pm$0.07} & 3.24\,{\scriptsize$\pm$0.05} & 2.44 \\
KGW & 2.77\,{\scriptsize$\pm$0.03} & 4.19\,{\scriptsize$\pm$0.03} & 3.65\,{\scriptsize$\pm$0.06} & 3.27\,{\scriptsize$\pm$0.05} & 2.46 \\
\bottomrule
\end{tabular}

%% file: sections/discussion.tex
\section{Discussion and limitations}

\paragraph{Why the watermark survives.}
Three properties of the design explain the robustness. First, the basis gives similar values to tokens that replace each
other, so a substituted token still carries most of its value, even after several passes. With a green list, a
substituted token is green only by chance, and every pass loses the tokens it changes. Second, the embedding and detection
functions are solved separately, since one function must shift the sampling distribution cheaply and the other must pick up what
survives with little noise on unwatermarked speech (Corollary~\ref{cor:matched}). Third, the amplitude bounds
spread the bias over many tokens, which keeps the cost moderate. Removing any of the three lowers the TPR after eight Mimi passes
(Table~\ref{tab:ablation}).

\paragraph{Transfer to other codecs.}
The functions are fitted to the round trip of the language model's own codec, yet after eight passes through a foreign
low-bitrate codec our TPR stays above 50\,\% in 12 of the 14 pairs of model and codec in Tables~\ref{tab:moshi_codecs}
and~\ref{tab:tts_codecs}. This requires that a foreign pass, followed by re-encoding with the model's codec, mostly
replaces each token by one of its neighbors in the native substitution graph. Counting the substitutions of a foreign pass
would test this and explain the two exceptions.

\paragraph{Token-domain and post-hoc watermarks are complementary.}
Post-hoc methods, which read the waveform directly, are detected almost perfectly without attack, survive codecs that
preserve the waveform closely, and are unaffected by time shifts and cropping (Appendix~\ref{app:codecs}
and~\ref{app:dsp}). Repeated passes through low-bitrate neural codecs, however, remove most of their watermark, and pitch
shifts remove all of it. Ours shows the opposite pattern. It lives in the tokens that such codecs reproduce and
survives neural resynthesis and pitch shifts, but not sub-frame shifts or cropping. A system could embed both
watermarks, and how they interact is left for future work. Comparisons of the two should also report the FPR on attacked unwatermarked
audio, a large share of which AudioSeal's threshold flags after some codecs.

\paragraph{Leaving the model unchanged.}
WMAR adapts the codec to the watermark, but under repeated Mimi resynthesis it does no better than KGW, and its
fine-tuned decoder changes quality even without a watermark (Table~\ref{tab:quality}). \ours instead adapts the
watermark to the released codec, and the recipe carries over to two TTS models at a quality cost close to KGW's.

\paragraph{Limitations.}
Our watermark is strongest under resynthesis through low-bitrate neural codecs, and post-hoc methods are the better
choice for codecs that keep the waveform nearly intact. Short answers carry too few frames. Almost
all of our misses without attack, and 72 of 116 after eight Mimi passes, come from answers of at most 4\,s
(Appendix~\ref{app:duration}). The offset search recovers shifts of whole frames, but sub-frame shifts, changes of speed
and cropping misalign the frame grid and remove most of the watermark, as for every token-level method
(Appendix~\ref{app:dsp}). An untested detector that encodes the
audio at several sub-frame offsets could recover fractional shifts, though not speed changes, at the price of more
computation and a higher threshold. Finally, the functions are fitted to one codec, and counting
substitutions over several codecs may make the basis more robust to foreign ones.

\section{Conclusion}
Token-level watermarks for speech face a step that text watermarks never do, because the detector must encode the received
audio back into tokens, and each round trip through a codec replaces some of them. We showed that these substitutions are
structured enough to design around. Our method \ours builds a basis from them and solves for embedding and detection functions that
keep the signal surviving retokenization, without changing the language model or its codec. At no more than the cost of
KGW, it keeps 80.7\,\% TPR on Moshi after eight Mimi passes, against 8.3\,\% for KGW and at most 7.3\,\% for WMAR, and it
transfers to CosyVoice3 and MOSS-TTS. Its weaknesses, frame misalignment and some foreign codecs, point to better
synchronization in the detector and a basis built from several codecs.

%% file: sections/statements.tex
\subsection*{AI use statement}
We used coding agents built on large language models as assistants. In research ideation and execution, they helped
refine the experiments we designed, helped edit the code, and helped analyze the results. They also helped find and
summarize related work, proofread the manuscript and the proofs in Appendix~\ref{app:method}, produce the figures and
tables from the evaluation outputs, and format the references. We did not use them to generate, clean or reformat data
sets. The authors reviewed all AI-assisted work, checked the code, the derivations and every number against the
evaluation outputs, and take full responsibility for the content of this paper.

\subsection*{Ethics statement}
Our aim is to make speech generated by language models traceable, which helps to identify synthetic speech used for
impersonation or fraud. A detector can also wrongly flag unwatermarked speech, so we set every threshold on unwatermarked
answers and report the false-positive rate after the attacks (Appendix~\ref{app:fpr}). A detection alone should not be
taken as proof of origin. We also report the attacks that remove our watermark, namely sub-frame shifts, changes of speed
and cropping (Appendix~\ref{app:dsp}). The TTS models read texts in the voices of speakers from the public WildVoice data
set, and the generated speech is used only for evaluation and the demo. No human subjects took part, and all data sets,
models, codecs and baselines are public and used under their licenses.

\subsection*{Reproducibility statement}
Section~\ref{sec:method} and Appendix~\ref{app:method} define the basis, the functions and the solver, with all
derivations. Section~\ref{sec:experiments} and Appendix~\ref{app:protocol} give the data sets, thresholds, choice of
strength, attacks and baseline settings. All models, codecs, baselines and data sets are public. A demo with
audio samples is at \url{https://hwiora.github.io/redwing-demo/}. Following our company's regulations,
the full code will be released after legal review.

%% file: sections/appendix.tex
\section{Method details and derivations}
\label{app:method}
This appendix follows the order of Section~\ref{sec:method}. Section~\ref{app:notation} collects the notation.
Section~\ref{app:counts} describes how the substitution counts are collected, and Section~\ref{app:basis} proves the
properties of the spectral basis. Section~\ref{app:firstorder} derives the first-order expressions for signal, cost, and
noise, and Section~\ref{app:objective} derives the objective $J$ and its maximum. Section~\ref{app:bounded} explains why the
bounds are needed and how the bounded problem is solved, Section~\ref{app:hyper} reports the choice of $K$ and $\kappa$, and
Section~\ref{app:detect} analyzes the detection statistic. Unless stated otherwise, numbers refer to Moshi.

\subsection{Notation}
\label{app:notation}
Everything in this appendix concerns one stream, whose index is omitted.

\begin{center}
\small
\begin{tabular}{p{0.2\linewidth}p{0.72\linewidth}}
\toprule
Symbol & Meaning\\
\midrule
\multicolumn{2}{l}{\textit{Tokens and retokenization}}\\
$V$ & vocabulary size of the stream\\
$X_t$, $Y_t$ & token generated at frame $t$, and the token recovered from the audio by retokenization\\
$\mathcal S$, $d_s$ & set of watermarked streams, and the generation delay of stream $s$\\
$N_{ij}$ & number of aligned frames in which token $i$ was recovered as token $j$ (Appendix~\ref{app:counts})\\
$P\in\mathbb R^{V\times V}$ & transition matrix of the retokenization channel, $P_{ij}=\Pr(Y=j\mid X=i)$, rows summing to one\\
\addlinespace
\multicolumn{2}{l}{\textit{Substitution graph and basis}}\\
$W$, $D$, $L$ & weights of the substitution graph (Eq.~\eqref{eq:graph}, without self-transitions), its degrees, and its
normalized Laplacian\\
$\lambda_k$, $K$ & $k$-th nontrivial eigenvalue of $L$, and number of basis functions ($K=16$)\\
$\phi_k\in\mathbb R^V$ & the $k$-th basis function, a real value per token\\
$\phi(i)\in\mathbb R^K$ & the values of all basis functions on token $i$\\
$\Phi\in\mathbb R^{V\times K}$ & the basis matrix, whose $k$-th column is $\phi_k$ and whose $i$-th row is $\phi(i)^\top$\\
\addlinespace
\multicolumn{2}{l}{\textit{Watermark}}\\
$b_{s,t}$ & keyed pseudorandom sign of stream $s$ at frame $t$\\
$g$, $h\in\mathbb R^V$ & embedding and detection functions, $g=\Phi a$ and $h=(\Phi-\mathbf 1\mu_0^\top)c$\\
$a$, $c\in\mathbb R^K$ & coefficients of the embedding and detection functions in the basis\\
$T$, $\delta$, $\beta$ & sampling temperature, watermark strength, and $\beta=\delta/T$\\
$\kappa$ & bound on $|g(i)|$ and $|h(i)|$ for every token ($\kappa=5$)\\
\addlinespace
\multicolumn{2}{l}{\textit{Statistics}}\\
$p_t$, $q_t$ & sampling distribution of the language model at frame $t$, without and with the bias\\
$H_0$ & unwatermarked speech from the same language model, after retokenization\\
$\pi_0\in\mathbb R^V$ & token frequencies under $H_0$\\
$A$, $B$ & surviving signal and write cost of the basis, Eqs.~\eqref{eq:A} and~\eqref{eq:B}\\
$\mu_0$, $C$ & mean $\E_{H_0}[\phi(Y)]$ and covariance $\Cov_{H_0}(\phi(Y))$ of the basis under $H_0$\\
$J$, $M$ & objective of Eq.~\eqref{eq:ratio}, and $M=B^{-1/2}AC^{-1/2}$\\
\addlinespace
\multicolumn{2}{l}{\textit{Detection}}\\
$Z(\tau)$, $Z^\star$ & detection statistic at frame offset $\tau$, and its maximum over $|\tau|\le R$ (Eq.~\eqref{eq:detector})\\
$R$ & one-sided range of the offset search ($R=2$)\\
$n$ & number of frames of a clip on one stream\\
$\mathbf 1$ & the all-ones vector\\
\bottomrule
\end{tabular}
\end{center}

\subsection{Substitution counts}
\label{app:counts}
The substitution counts record, for every token, which tokens the codec returns in its place after the audio is decoded and
encoded again. They define both the substitution graph of Eq.~\eqref{eq:graph} and the transition matrix $P$.
The counts are collected on 20 hours of LibriSpeech train-clean-100, which is disjoint from all generated speech. From the
utterances of at least one second in this 100-hour split, we draw 5,680 at random with a fixed seed until they fill
20 hours. The same utterances and the same procedure are used for all three language models.
Each utterance is
processed in four steps.
\begin{enumerate}
\item The waveform is encoded with the codec used by the language model, which gives tokens $X$, then resynthesized and
 encoded again, which gives tokens $Y$. For Moshi, resynthesis is Mimi decoding. For CosyVoice3, it is the model's
 flow-matching decoder and HiFT vocoder, prompted with the utterance itself, with the prompt trimmed to a whole number of
 token frames so that the resynthesized speech stays on the token grid. For MOSS-TTS, it is decoding with its audio
 tokenizer at all 32 quantizers.
\item The two token sequences are aligned at the frame level. Among the offsets $o\in\{-3,\ldots,3\}$, we choose the one at
 which the first stream agrees most often,
 \begin{equation}
  o^\star=\arg\max_{|o|\le3}\ \#\{t: X_{1,t}=Y_{1,t+o}\},
 \end{equation}
 and pair frame $t$ of $X$ with frame $t+o^\star$ of $Y$ on every stream. On every model, $o^\star=0$ for every utterance.
\item Two frames at each end of the utterance are dropped.
\item For each stream, $N_{ij}$ counts the aligned frames with $X_t=i$ and $Y_{t+o^\star}=j$.
\end{enumerate}

\paragraph{Support.} The support is the set of tokens with at least 50 counted occurrences as a source,
$\sum_jN_{ij}\ge50$, so that every token in the graph has enough counts to estimate its substitutions. The substitution
graph and the basis are built on the support. A token of the support that is never replaced ($D_{ii}=0$) has no edge in the
graph and receives zero basis values.
Outside the support, the rows of $\Phi$ are zero, so that
\begin{equation}
 g(i)=0,\qquad h(i)=-c^\top\mu_0\qquad\text{for tokens $i$ outside the support.}
\end{equation}
The bounds of Eq.~\eqref{eq:solve} are imposed on every token of the vocabulary, including these.

\paragraph{Transition matrix.} The transition matrix of Section~\ref{sec:moments} is the row-normalized count matrix. Unlike
the graph, it keeps the diagonal, because $P$ describes the whole channel, including the tokens that survive,
\begin{equation}
 P_{ij}=\frac{N_{ij}}{\sum_kN_{ik}}.
\end{equation}
A token that never occurs as a source has no counts. For such a token we set $P_{ii}=1$, so that $P$ is row-stochastic,
$P\mathbf 1=\mathbf 1$. These tokens lie outside the support and have zero basis values, so the completion does not change
$P\Phi$, which is all that $A$ uses.

\subsection{Spectral basis}
\label{app:basis}
Section~\ref{sec:method} uses three properties of the basis. Each basis function is smooth over the substitution graph
(Proposition~\ref{prop:rayleigh}), the basis is orthonormal and centered (Proposition~\ref{prop:orthonormal}), and every
combination of basis functions is smooth as well (Proposition~\ref{prop:combination}). This section proves them and then
explains why the graph leaves out self-transitions.

The substitution graph of Eq.~\eqref{eq:graph} has symmetric weights $W_{ij}\ge0$, no self-loops, and degrees $D_{ii}=\sum_jW_{ij}$. Its
normalized Laplacian is $L=I-D^{-1/2}WD^{-1/2}$. Throughout this section, the graph is restricted to the support tokens with
positive degree, so that $D^{-1/2}$ exists. Tokens with $D_{ii}=0$ receive zero basis values (Appendix~\ref{app:counts}).

\begin{proposition}[Smoothness identity, Eq.~\ref{eq:rayleigh}]
\label{prop:rayleigh}
Let $u$ be a unit eigenvector of $L$ with eigenvalue $\lambda$, and let $\phi=D^{-1/2}u$. Then
\begin{equation}
 \lambda=\frac12\sum_{i,j}W_{ij}\big(\phi(i)-\phi(j)\big)^2 .
\end{equation}
\end{proposition}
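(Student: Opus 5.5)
The plan is to compute the Rayleigh quotient of $L$ at $u$ in two ways: once from the eigenvalue equation, and once by expanding $L=I-D^{-1/2}WD^{-1/2}$ and rewriting everything in terms of $\phi$. Because $u$ is a unit eigenvector, $Lu=\lambda u$ gives
\begin{equation*}
 u^\top L u=\lambda\, u^\top u=\lambda .
\end{equation*}
It therefore suffices to show that $u^\top L u$ equals the weighted sum of squared differences of $\phi$. We work on the support tokens with positive degree, as stipulated at the start of this section, so that $D^{-1/2}$ is defined and $u=D^{1/2}\phi$.

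The first step substitutes $u=D^{1/2}\phi$ into the two parts of $L$. For the identity part,
\begin{equation*}
 u^\top u=\phi^\top D\phi=\sum_i D_{ii}\,\phi(i)^2 .
\end{equation*}
For the off-diagonal part,
\begin{equation*}
 u^\top D^{-1/2}WD^{-1/2}u=\phi^\top W\phi=\sum_{i,j}W_{ij}\,\phi(i)\phi(j).
\end{equation*}
Hence
\begin{equation*}
 \lambda=u^\top Lu=\phi^\top(D-W)\phi .
\end{equation*}
This reduces the claim to the standard identity for the combinatorial Laplacian $D-W$.

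The second step expands the right-hand side of the proposition:
\begin{equation*}
 \tfrac12\sum_{i,j}W_{ij}\big(\phi(i)-\phi(j)\big)^2
 =\tfrac12\sum_{i,j}W_{ij}\phi(i)^2+\tfrac12\sum_{i,j}W_{ij}\phi(j)^2-\sum_{i,j}W_{ij}\phi(i)\phi(j).
\end{equation*}
Summing over $j$ in the first sum gives $D_{ii}=\sum_jW_{ij}$. The second sum gives the same result after using the symmetry $W_{ij}=W_{ji}$, which holds by the construction in Eq.~\eqref{eq:graph}. Together the two half-sums give $\sum_iD_{ii}\phi(i)^2=\phi^\top D\phi$, and the cross term is $\phi^\top W\phi$. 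The expansion therefore equals $\phi^\top(D-W)\phi=\lambda$.

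Since there are no self-loops, the diagonal terms would vanish anyway, although the argument does not need this. The only step that requires care is the domain bookkeeping. The basis is defined on the whole vocabulary, with zero values outside the positive-degree support, so we must check that the sum over all $i,j$ in the vocabulary agrees with the sum over the positive-degree support. It does: $W_{ij}=0$ whenever $i$ or $j$ lies outside the support or has zero degree, because such a token has no edges, so those terms contribute nothing. Apart from this check, the proof is a routine algebraic expansion.
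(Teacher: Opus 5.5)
Your proposal is correct and follows essentially the same route as the paper: write $\lambda=u^\top Lu$, substitute $u=D^{1/2}\phi$ to get $\phi^\top D\phi-\phi^\top W\phi$, and use $D_{ii}=\sum_jW_{ij}$ with the symmetry of $W$ to match the weighted sum of squared differences. Your extra check that the sum over the whole vocabulary equals the sum over the positive-degree support, since out-of-support and zero-degree tokens carry no edge weight, is a harmless addition that the paper leaves implicit.
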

\begin{proof}
Because $Lu=\lambda u$ and $u^\top u=1$, $\lambda=u^\top Lu$. Substituting $u=D^{1/2}\phi$ gives
\begin{equation}
 \lambda=u^\top u-u^\top D^{-1/2}WD^{-1/2}u=\phi^\top D\phi-\phi^\top W\phi .
\end{equation}
Since $D_{ii}=\sum_jW_{ij}$ and $W$ is symmetric,
\begin{equation}
 \phi^\top D\phi=\sum_{i,j}W_{ij}\,\phi(i)^2=\frac12\sum_{i,j}W_{ij}\big(\phi(i)^2+\phi(j)^2\big),\qquad
 \phi^\top W\phi=\sum_{i,j}W_{ij}\,\phi(i)\phi(j).
\end{equation}
Subtracting the second expression from the first gives
$\frac12\sum_{i,j}W_{ij}\big(\phi(i)^2-2\phi(i)\phi(j)+\phi(j)^2\big)=\frac12\sum_{i,j}W_{ij}\big(\phi(i)-\phi(j)\big)^2$.
\end{proof}
In particular $\lambda\ge0$, and $\lambda=0$ exactly when $\phi$ is constant on every connected component of the graph. The
number of such eigenvalues equals the number of components \citep{luxburg2007tutorial}. We discard these functions by design. A
function that is constant on each component does not vary within any group of tokens that replace each other, so it
describes which component a token belongs to rather than how substitutions move within one. In practice we compute the eigenvectors of $D^{-1/2}WD^{-1/2}$, whose eigenvalues are $1-\lambda$, and keep the
$K$ largest after the trivial ones.

\begin{proposition}[Orthonormality and centering]
\label{prop:orthonormal}
Let $U=[u_1,\ldots,u_K]$ hold the retained unit eigenvectors and $\Phi=D^{-1/2}U$. Then
\begin{equation}
 \Phi^\top D\Phi=I\qquad\text{and}\qquad \mathbf 1^\top D\phi_k=0\quad\text{for every }k .
\end{equation}
\end{proposition}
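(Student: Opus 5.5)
Both claims follow from the spectral theorem applied to the symmetric matrix $L$, together with the fact that the discarded trivial eigenvectors span exactly the null space of $L$.

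\textbf{Orthonormality.} Since $L=I-D^{-1/2}WD^{-1/2}$ is real symmetric on the support tokens with positive degree, we may take its unit eigenvectors to be orthonormal. This is automatic for distinct eigenvalues. Within a repeated eigenvalue we choose an orthonormal basis of the eigenspace, which is what a symmetric eigensolver returns. Hence $U^\top U=I$, and substituting $\Phi=D^{-1/2}U$ gives
\[
\Phi^\top D\Phi=U^\top D^{-1/2}DD^{-1/2}U=U^\top U=I .
\]

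\textbf{Centering.} The first step is to identify the null space of $L$. By Proposition~\ref{prop:rayleigh} and the remark after it, $Lu=0$ exactly when $D^{-1/2}u$ is constant on every connected component. The null space is therefore spanned by the vectors $D^{1/2}\mathbf 1_{\mathcal C}$, one for each component $\mathcal C$. One can also check directly that $D^{-1/2}WD^{-1/2}\,D^{1/2}\mathbf 1_{\mathcal C}=D^{-1/2}W\mathbf 1_{\mathcal C}=D^{1/2}\mathbf 1_{\mathcal C}$.

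The retained eigenvectors $u_k$ belong to nonzero eigenvalues, so they are orthogonal to this null space. Thus $u_k^\top D^{1/2}\mathbf 1_{\mathcal C}=0$ for every component. Summing over components gives $u_k^\top D^{1/2}\mathbf 1=0$, which is
\[
\mathbf 1^\top D\phi_k=\mathbf 1^\top D\,D^{-1/2}u_k=\mathbf 1^\top D^{1/2}u_k=0 .
\]
Tokens outside the support, and support tokens with zero degree, have $\phi_k=0$ and $D_{ii}=0$, so extending to the full vocabulary leaves both identities unchanged.

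\textbf{Main obstacle.} The only delicate point is the case of a retained eigenvalue that coincides numerically with $0$, or of degenerate eigenspaces. Here we rely on the construction: the $\lambda=0$ eigenspace is discarded in full, and the solver returns orthonormal vectors, so the retained vectors lie in the orthogonal complement of the null space.
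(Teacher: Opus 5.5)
Your proof is correct and follows the paper's argument: orthonormality comes from the spectral theorem applied to the symmetric matrix $L$, and centering comes from the fact that $D^{1/2}\mathbf 1$ lies in the zero eigenspace of $L$, to which every retained eigenvector (nonzero eigenvalue) is orthogonal. The paper checks $LD^{1/2}\mathbf 1=0$ directly instead of going through the per-component null vectors $D^{1/2}\mathbf 1_{\mathcal C}$, so your decomposition by components is a harmless detour rather than a needed step.
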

\begin{proof}
First, $\Phi^\top D\Phi=U^\top D^{-1/2}DD^{-1/2}U=U^\top U=I$, because a real symmetric matrix such as $L$ has an orthonormal basis of
eigenvectors (the spectral theorem). Second, $L D^{1/2}\mathbf 1=D^{1/2}\mathbf 1-D^{-1/2}W\mathbf 1=D^{1/2}\mathbf 1-D^{-1/2}D\mathbf 1=0$, so
$D^{1/2}\mathbf 1$ is a discarded eigenvector with eigenvalue zero. Every retained $u_k$ is orthogonal to it, so
$\mathbf 1^\top D\phi_k=(D^{1/2}\mathbf 1)^\top u_k=0$.
\end{proof}
The basis is therefore orthonormal and centered under the degree weights.

\begin{proposition}[Smoothness of linear combinations]
\label{prop:combination}
Let $\lambda_1\le\cdots\le\lambda_K$ be the retained eigenvalues and $g=\Phi a$ for any $a\in\mathbb R^K$. Then
\begin{equation}
 \frac12\sum_{i,j}W_{ij}\big(g(i)-g(j)\big)^2\le\lambda_K\,g^\top Dg .
\end{equation}
The same holds for $h=(\Phi-\mathbf 1\mu_0^\top)c$.
\end{proposition}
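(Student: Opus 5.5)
The plan is to rewrite the left-hand side as a Laplacian quadratic form and then expand in the eigenbasis. I will work on the restricted graph of support tokens with positive degree, as in Proposition~\ref{prop:rayleigh}. Tokens outside it carry no edges and have $D_{ii}=0$, so they contribute nothing to either side.

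The first step repeats the computation in the proof of Proposition~\ref{prop:rayleigh}, but for an arbitrary vector $f$ instead of an eigenvector. Using symmetry of $W$ and $D_{ii}=\sum_j W_{ij}$, it gives
\[
\tfrac12\sum_{i,j}W_{ij}\big(f(i)-f(j)\big)^2=f^\top(D-W)f .
\]
Setting $v=D^{1/2}f$, this becomes $v^\top L v$.

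For $g=\Phi a=D^{-1/2}Ua$ we get $v=Ua$. Since the $u_k$ are orthonormal eigenvectors of $L$,
\[
v^\top L v=\sum_k\lambda_k a_k^2\le\lambda_K\|a\|^2 .
\]
Moreover $\|a\|^2=v^\top v=g^\top Dg$, which is also the content of $\Phi^\top D\Phi=I$ in Proposition~\ref{prop:orthonormal}. This proves the claim for $g$. The only fact used is that every retained eigenvalue satisfies $\lambda_k\le\lambda_K$.

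The case of $h$ is the one place needing care, since $h$ is not in the span of the basis. Write $h=\Phi c-m\mathbf 1$ with the scalar $m=c^\top\mu_0$. A constant shift leaves every difference $h(i)-h(j)$ unchanged, so the left-hand side for $h$ equals that for $\tilde g=\Phi c$. By the first part, that is at most $\lambda_K\,\tilde g^\top D\tilde g$.

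It then remains to show $\tilde g^\top D\tilde g\le h^\top Dh$. Expanding,
\[
h^\top Dh=\tilde g^\top D\tilde g-2m\,\mathbf 1^\top D\tilde g+m^2\,\mathbf 1^\top D\mathbf 1 .
\]
The cross term vanishes because $\mathbf 1^\top D\phi_k=0$ for every $k$ (Proposition~\ref{prop:orthonormal}). Hence $h^\top Dh=\tilde g^\top D\tilde g+m^2\,\mathbf 1^\top D\mathbf 1\ge\tilde g^\top D\tilde g$, which gives the bound for $h$.

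I expect this centering step to be the main obstacle, because it is the only point where the bound could fail. It works because the discarded constant mode is $D$-orthogonal to the retained basis, and because the shift $\mathbf 1$ is constant over the support, including components on which the trivial eigenvector was discarded.
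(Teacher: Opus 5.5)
Your proof is correct and follows the paper's argument. You rewrite the left side as $g^\top(D-W)g$, diagonalize it in the retained eigenvectors using $\Phi^\top D\Phi=I$, and handle $h$ by the shift invariance of the differences together with $h^\top Dh\ge(\Phi c)^\top D(\Phi c)$; you simply unpack this last inequality as $h^\top Dh=(\Phi c)^\top D(\Phi c)+m^2\,\mathbf 1^\top D\mathbf 1$ via the centering identity of Proposition~\ref{prop:orthonormal}, and like the paper you implicitly use the harmless fact $\lambda_K\ge0$ (Proposition~\ref{prop:rayleigh}) when multiplying it by $\lambda_K$.
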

\begin{proof}
As in the proof of Proposition~\ref{prop:rayleigh}, the left side equals $g^\top(D-W)g$. Because
$\Phi^\top(D-W)\Phi=U^\top LU=\operatorname{diag}(\lambda_1,\ldots,\lambda_K)$ and $\Phi^\top D\Phi=I$,
\begin{equation}
 g^\top(D-W)g=\sum_k\lambda_ka_k^2\le\lambda_K\sum_ka_k^2=\lambda_K\,g^\top Dg .
\end{equation}
For $h$, subtracting the constant $c^\top\mu_0$ does not change the differences $h(i)-h(j)$, so the left side is the same as
for $\Phi c$, and $h^\top Dh\ge(\Phi c)^\top D(\Phi c)$ by Proposition~\ref{prop:orthonormal}.
\end{proof}
Every function in the span of the basis is therefore at least as smooth, relative to its size, as the $K$-th basis function.

\paragraph{Why self-transitions are excluded.} A self-transition records that a token survived, not which token replaced
it. If the graph kept them, a token that often survives would be linked mostly to itself. On Moshi's deeper streams,
retaining self-transitions indeed produces leading modes concentrated on individual tokens (Table~\ref{tab:selfloops}),
instead of modes that vary smoothly over groups of tokens that replace each other. The following calculation explains why
eigenvalues can lie near the survival shares, and the localization itself is measured in the table. Suppose the substitution graph kept self-transitions, with $n_i=N_{ii}$. Its weights and
degrees would be
\begin{equation}
 W'=W+\operatorname{diag}(n),\qquad D'=D+\operatorname{diag}(n),
\end{equation}
and its normalized adjacency $S'=D'^{-1/2}W'D'^{-1/2}$ would have the diagonal entries
\begin{equation}
 r_i=S'_{ii}=\frac{n_i}{D_{ii}+n_i},
\end{equation}
the share of token $i$'s counted transitions in which it survives. Let $e_i$ be the indicator of token $i$. Then
\begin{equation}
 S'e_i=r_i\,e_i+\varepsilon_i,\qquad \|\varepsilon_i\|^2=\sum_{j\ne i}\frac{W_{ij}^2}{D'_{ii}D'_{jj}} .
\end{equation}
For a symmetric matrix, the distance from any number $r$ to the nearest eigenvalue is at most $\|S'e-re\|$ for every unit
vector $e$. Hence $S'$ has an eigenvalue within $\|\varepsilon_i\|$ of $r_i$. The residual $\varepsilon_i$ is small when token $i$
survives often and its substitutes are spread over many tokens, so the tokens that survive most reliably place eigenvalues
near their survival shares at the top of the spectrum. This bound alone does not imply that the corresponding eigenvectors
concentrate on single tokens, because eigenvalues that lie close together can mix their eigenvectors. Whether they do is an
empirical question, which Table~\ref{tab:selfloops} answers for Moshi's channel.

Table~\ref{tab:selfloops} shows this on Moshi's channel. We measure how widely a unit vector $u$ is spread by its
participation ratio $1/\sum_iu(i)^4$, the effective number of tokens it covers. With self-transitions, 11--15 of the 16
leading modes on each of streams 3--8 are single-token indicators, and their eigenvalues are within 0.07--0.14 of the
survival probability of their token. Without self-transitions, the leading modes spread over 70--230 tokens. The effect
depends on the codec. On CosyVoice3 and on MOSS-TTS heads 1--16, self-transitions narrow the modes (for CosyVoice3, the median
participation ratio falls from 264 to 110) but create almost no single-token modes (none on CosyVoice3, and one, on head
13, on MOSS-TTS). We exclude self-transitions on every model. On Moshi's deeper streams this is decisive, and elsewhere it keeps
the basis a description of substitutions rather than of survival.
\begin{table}[ht]
\caption{\textbf{Self-transitions and the basis on Moshi.} The counts come from Moshi's own Mimi channel, on the support of
tokens with at least 50 occurrences. The survival share is the fraction of counted transitions that are self-transitions.
Participation is the median participation ratio, the effective number of tokens covered, of the 16 leading nontrivial
modes of the normalized adjacency, computed without and with self-transitions. The last row counts how many of these 16
modes put at least half of their mass on one token when self-transitions are kept.}
\label{tab:selfloops}
\centering\small
\begin{tabular}{lcccccccc}
\toprule
Stream & 1 & 2 & 3 & 4 & 5 & 6 & 7 & 8\\
\midrule
Survival share & 0.72 & 0.39 & 0.26 & 0.25 & 0.20 & 0.21 & 0.19 & 0.17\\
Participation, without & 116 & 161 & 228 & 172 & 99 & 183 & 70 & 183\\
Participation, with & 79 & 151 & 1.2 & 1.1 & 1.1 & 1.2 & 1.2 & 1.0\\
Single-token modes, with & 0 & 2 & 15 & 14 & 15 & 11 & 14 & 15\\
\bottomrule
\end{tabular}
\end{table}

\subsection{First-order signal, cost, and noise}
\label{app:firstorder}
\label{app:derivation}
This section proves the three expressions of Eq.~\eqref{eq:signal}. Throughout, ``first-order'' means the leading order
in $\beta$, which is linear for the signal and quadratic for the KL cost. We first write the covariances of
Section~\ref{sec:moments} in matrix form, then state the assumptions, and then treat signal, cost, and noise in turn.

\begin{lemma}[Covariances in matrix form]
\label{lem:matrix}
For a probability vector $p$, let $\Sigma_p=\operatorname{diag}(p)-pp^\top$. For functions $f,f'\in\mathbb R^V$,
\begin{equation}
 \Cov_{X\sim p}\big(f(X),f'(X)\big)=f^\top\Sigma_pf' .
\end{equation}
Moreover $\Sigma_p\mathbf 1=0$ and $P\mathbf 1=\mathbf 1$. Consequently,
\begin{equation}
 A=\E_t\big[\Phi^\top\Sigma_{p_t}P\Phi\big],\qquad B=\E_t\big[\Phi^\top\Sigma_{p_t}\Phi\big],\qquad
 C=\Phi^\top\Sigma_{\pi_0}\Phi,\qquad \mu_0=\Phi^\top\pi_0 .
\end{equation}
\end{lemma}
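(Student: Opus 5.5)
The plan is to first establish the covariance identity by direct expansion, and then obtain each matrix expression by applying it entrywise, with $f$ and $f'$ ranging over the columns of $\Phi$ or $P\Phi$.

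For a probability vector $p$ and $f,f'\in\mathbb R^V$, I write $\E_{X\sim p}[f(X)f'(X)]=\sum_ip_if(i)f'(i)=f^\top\operatorname{diag}(p)f'$ and $\E_{X\sim p}[f(X)]=p^\top f$. Subtracting the product of means then gives $f^\top\operatorname{diag}(p)f'-f^\top pp^\top f'=f^\top\Sigma_pf'$. The two auxiliary facts are one line each. First, $\Sigma_p\mathbf 1=p-p(p^\top\mathbf 1)=0$, because $p$ sums to one. Second, $P\mathbf 1=\mathbf 1$ holds because every row of $P$ sums to one. For rows with counts this is by the row normalization of Appendix~\ref{app:counts}, and for tokens that never occur as a source it is by the completion $P_{ii}=1$.

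For the matrix forms, the $(k,l)$ entry of $\Cov_{X\sim p_t}(\phi(X),(P\phi)(X))$ is $\Cov(\phi_k(X),(P\phi_l)(X))$, where $(P\phi_l)(i)=\sum_jP_{ij}\phi_l(j)$ is the $i$-th entry of the $l$-th column of $P\Phi$. The identity gives $\phi_k^\top\Sigma_{p_t}P\phi_l$, so the whole matrix is $\Phi^\top\Sigma_{p_t}P\Phi$, and averaging over $t$ gives $A$. Taking $P=I$ in the same computation gives $B$. For $H_0$, I model the recovered token $Y$ as distributed according to the token frequencies $\pi_0$. Then $\mu_0=\E[\phi(Y)]=\sum_i\pi_0(i)\phi(i)=\Phi^\top\pi_0$, and the identity with $p=\pi_0$ gives $C=\Phi^\top\Sigma_{\pi_0}\Phi$. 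Here $\Phi$ has zero rows outside the support, as stated in Appendix~\ref{app:counts}, so all sums may be taken over the full vocabulary. The completion of $P$ on unseen tokens affects only those zero rows of $\Phi$, so it leaves $P\Phi$, and hence $A$, unchanged.

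I expect no step to be a real obstacle. The only point needing care is interpretive: $C$ and $\mu_0$ are defined as pooled moments of $\phi(Y)$ over all frames of unwatermarked speech. I will treat $\pi_0$ as this pooled marginal law of $Y$, the empirical frequency over frames in the estimates, and not as a per-frame average of covariances. With that reading, $C$ is a single covariance, with no $\E_t$ in front.

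The statement also records $\Sigma_p\mathbf 1=0$ and $P\mathbf 1=\mathbf 1$, which are not used in the matrix forms themselves. Their consequence is that constant shifts of $f$ do not change these quadratic forms; for example, $(\Phi-\mathbf 1\mu_0^\top)^\top\Sigma_{\pi_0}(\Phi-\mathbf 1\mu_0^\top)=C$. I will note this as the property presumably needed later for centering $h$.
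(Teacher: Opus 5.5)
Your proposal is correct and follows the paper's proof essentially step for step. You expand the covariance directly into $f^\top\operatorname{diag}(p)f'-f^\top pp^\top f'$, get the two identities from $p^\top\mathbf 1=1$ and the row-stochasticity of $P$, apply the identity entrywise with $f=\phi_k$ and $f'=P\phi_l$ (for $A$) or $f'=\phi_l$ (for $B$ and $C$), and take $Y\sim\pi_0$ for $C$ and $\mu_0$. Your extra remarks, that the completion $P_{ii}=1$ leaves $P\Phi$ unchanged and that $\Sigma_{\pi_0}\mathbf 1=0$ makes $C$ invariant to centering, agree with what the paper uses in the appendix on substitution counts and in the noise proposition.
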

\begin{proof}
The covariance is $\sum_ip(i)f(i)f'(i)-\big(\sum_ip(i)f(i)\big)\big(\sum_ip(i)f'(i)\big)=f^\top\operatorname{diag}(p)f'-f^\top pp^\top f'$.
Next, $\Sigma_p\mathbf 1=p-p\,(p^\top\mathbf 1)=p-p=0$, and $P\mathbf 1=\mathbf 1$ because each row of $P$ sums to one. The matrix
forms follow by applying the first identity to each pair of basis functions, with $f=\phi_k$ and $f'=\phi_l$ for $B$ and $C$,
and $f=\phi_k$ and $f'=P\phi_l$ for $A$. Under $H_0$, the tokens are distributed as $\pi_0$, which gives $C$ and
$\mu_0=\sum_i\pi_0(i)\phi(i)=\Phi^\top\pi_0$.
\end{proof}

\paragraph{Assumptions.} Fix one frame, and let $p$ be the sampling distribution of the language model at that frame. We
assume the following.
\begin{enumerate}
\item The sampling support is fixed (for example, the top-$k$ set), and on it $p(i)\propto\exp(\ell(i)/T)$.
\item The sign $b$ of the current frame is uniform on $\{-1,+1\}$ and independent of the context that determines $p$.
\item Given the generated token $X$, the recovered token $Y$ follows the transition matrix $P$, independently of $b$.
\end{enumerate}
The first assumption says that the sampler is a softmax at temperature $T$ over a set of candidate tokens that the bias
does not change. The second holds because the signs
come from a secret key. The third treats retokenization as a channel that acts on each token without knowledge of the key.
The remark at the end of this section lists what these assumptions leave out.
Under the first assumption, the biased logits of Eq.~\eqref{eq:write} give the tilted distribution
\begin{equation}
 q_\beta(i)=p(i)\,e^{\beta bg(i)-\psi(\beta)},\qquad \psi(\beta)=\log\sum_ip(i)\,e^{\beta bg(i)} .
\label{eq:tilt}
\end{equation}

\begin{lemma}[Derivative of a tilted expectation]
\label{lem:derivative}
For every function $f\in\mathbb R^V$,
\begin{equation}
 \frac{d}{d\beta}\E_{q_\beta}\big[f(X)\big]=b\,\Cov_{q_\beta}\big(g(X),f(X)\big).
\end{equation}
\end{lemma}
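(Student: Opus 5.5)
The plan is to differentiate the tilted expectation directly, using the explicit form of $q_\beta$ in Eq.~\eqref{eq:tilt}. Write
\begin{equation}
 \E_{q_\beta}[f(X)]=\sum_i p(i)\,f(i)\,e^{\beta bg(i)-\psi(\beta)} .
\end{equation}
Since the sum is finite over a fixed support (Assumption~1), we may differentiate term by term. The first step is to compute the derivative of the exponent. Its $i$-th term contributes $bg(i)-\psi'(\beta)$, so
\begin{equation}
 \frac{d}{d\beta}\E_{q_\beta}[f(X)]=\sum_iq_\beta(i)\,f(i)\big(bg(i)-\psi'(\beta)\big).
\end{equation}

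The second step identifies $\psi'(\beta)$ as a tilted mean. Differentiating the log-partition function gives
\begin{equation}
 \psi'(\beta)=\frac{\sum_ip(i)\,bg(i)\,e^{\beta bg(i)}}{\sum_ip(i)\,e^{\beta bg(i)}}=b\,\E_{q_\beta}[g(X)] .
\end{equation}
Substituting this into the expression above yields
\begin{equation}
 b\Big(\E_{q_\beta}[g(X)f(X)]-\E_{q_\beta}[g(X)]\,\E_{q_\beta}[f(X)]\Big)=b\,\Cov_{q_\beta}\big(g(X),f(X)\big),
\end{equation}
which is the claim. Equivalently, in the notation of Lemma~\ref{lem:matrix}, the derivative equals $b\,g^\top\Sigma_{q_\beta}f$.

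There is no real obstacle. The only point that needs care is that $\psi$ must be differentiated together with the numerator, because the normalization depends on $\beta$. Its derivative is exactly the centering term that turns the raw second moment $\E_{q_\beta}[g(X)f(X)]$ into a covariance. The argument uses nothing about $b$ except that it is a constant multiplying $g$, so it holds for either sign.
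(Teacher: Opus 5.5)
Your proof is correct and follows the paper's argument essentially line for line. Like the paper, you differentiate $\sum_i f(i)\,p(i)\,e^{\beta bg(i)-\psi(\beta)}$ term by term, identify $\psi'(\beta)=b\,\E_{q_\beta}[g(X)]$, and substitute to obtain $b\,\Cov_{q_\beta}\big(g(X),f(X)\big)$.
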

\begin{proof}
Differentiating $\E_{q_\beta}[f]=\sum_if(i)p(i)e^{\beta bg(i)-\psi(\beta)}$ term by term gives
\begin{equation}
 \frac{d}{d\beta}\E_{q_\beta}[f]=\sum_if(i)\,q_\beta(i)\big(bg(i)-\psi'(\beta)\big).
\end{equation}
Differentiating $\psi$ gives $\psi'(\beta)=b\sum_iq_\beta(i)g(i)=b\,\E_{q_\beta}[g]$. Substituting,
$\frac{d}{d\beta}\E_{q_\beta}[f]=b\big(\E_{q_\beta}[fg]-\E_{q_\beta}[f]\,\E_{q_\beta}[g]\big)=b\,\Cov_{q_\beta}(g,f)$.
\end{proof}

\begin{proposition}[Signal]
\label{prop:signal}
Averaged over the sign $b$,
\begin{equation}
 \E\big[b\,h(Y)\big]=\beta\,a^\top\Phi^\top\Sigma_pP\Phi\,c+O(\beta^2),
\end{equation}
and averaging over frames gives $\beta\,a^\top Ac+O(\beta^2)$ with $A$ as in Eq.~\eqref{eq:A}.
\end{proposition}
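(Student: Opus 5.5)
We condition on the sign $b$ and the frame's context. By Assumption~3, $\E[h(Y)\mid X]=(Ph)(X)$, so $\E[h(Y)\mid b]=\E_{q_\beta}[(Ph)(X)]$, where $q_\beta$ is the tilted distribution of Eq.~\eqref{eq:tilt}. It therefore suffices to expand $F(\beta)=\E_{q_\beta}[(Ph)(X)]$ to first order in $\beta$ and then multiply by $b$ and average over $b$.

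At $\beta=0$ we have $q_0=p$, which does not depend on $b$. By Assumption~2, $b$ is independent of $p$ and has mean zero, so $\E[b\,F(0)]=F(0)\,\E[b]=0$ and the zeroth-order term vanishes. For the linear term, Lemma~\ref{lem:derivative} with $f=Ph$ gives $F'(0)=b\,\Cov_p(g,Ph)$, and hence $b\cdot\beta F'(0)=b^2\beta\Cov_p(g,Ph)=\beta\Cov_p(g,Ph)$, since $b^2=1$.

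Next we put this in matrix form with Lemma~\ref{lem:matrix}: $\Cov_p(g,Ph)=g^\top\Sigma_pPh$. Substituting $g=\Phi a$ and $h=\Phi c-\mathbf 1(\mu_0^\top c)$, the constant term drops out, because $P\mathbf 1=\mathbf 1$ and $\Sigma_p\mathbf 1=0$. This leaves $a^\top\Phi^\top\Sigma_pP\Phi c$. Averaging over frames and using the matrix form of $A$ in Lemma~\ref{lem:matrix} gives $\beta a^\top Ac$.

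The main obstacle is controlling the remainder uniformly, so that it is genuinely $O(\beta^2)$. By Taylor's theorem the remainder is $\tfrac12\beta^2F''(\tilde\beta)$ for some $\tilde\beta$ between $0$ and $\beta$. Applying Lemma~\ref{lem:derivative} again, $F''$ is a third central moment of $g,g,Ph$ under $q_{\tilde\beta}$. On the fixed finite support (Assumption~1), $g$ and $Ph$ are bounded. They are bounded by $\kappa$ under the constraints of Eq.~\eqref{eq:solve}, or simply by finiteness of the vocabulary. So $|F''|\le C\|g\|_\infty^2\|h\|_\infty$ uniformly in $\beta$, $b$ and the frame, and the remainder stays $O(\beta^2)$ after averaging over $b$ and over frames.
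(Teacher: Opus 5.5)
Your proof is correct and matches the paper's argument step for step. Both use Assumption~3 to reduce to $\E_{q_\beta}[(Ph)(X)]$, expand to first order with Lemma~\ref{lem:derivative}, remove the zeroth-order term with Assumption~2 and $b^2=1$, and drop the centering constant because $\Sigma_pP\mathbf 1=\Sigma_p\mathbf 1=0$. Your Lagrange-remainder bound through the third central moment of $(g,g,Ph)$, which is uniform because $|g|,|h|\le\kappa$ and $P$ is row-stochastic, adds a little rigor: the paper states $O(\beta^2)$ without checking that it holds uniformly over frames.
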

\begin{proof}
By the third assumption, $\E[h(Y)\mid X]=(Ph)(X)$, so $\E[h(Y)\mid b]=\E_{q_\beta}[(Ph)(X)]$. A first-order expansion around
$\beta=0$ with Lemma~\ref{lem:derivative} gives
\begin{equation}
 \E_{q_\beta}\big[(Ph)(X)\big]=\E_p\big[(Ph)(X)\big]+\beta\,b\,\Cov_p\big(g(X),(Ph)(X)\big)+O(\beta^2).
\end{equation}
Multiplying by $b$ and using $b^2=1$,
\begin{equation}
 \E\big[b\,h(Y)\mid b\big]=b\,\E_p\big[(Ph)(X)\big]+\beta\,\Cov_p\big(g(X),(Ph)(X)\big)+O(\beta^2).
\end{equation}
By the second assumption, $b$ is independent of $p$ and has mean zero, so averaging over $b$ removes the first term. By
Lemma~\ref{lem:matrix}, the covariance equals $g^\top\Sigma_pPh$. Substituting $g=\Phi a$ and $h=\Phi c-\mathbf 1\mu_0^\top c$,
\begin{equation}
 g^\top\Sigma_pPh=a^\top\Phi^\top\Sigma_pP\Phi c-a^\top\Phi^\top\Sigma_pP\mathbf 1\,(\mu_0^\top c)=a^\top\Phi^\top\Sigma_pP\Phi c,
\end{equation}
because $\Sigma_pP\mathbf 1=\Sigma_p\mathbf 1=0$.
\end{proof}

\begin{proposition}[Cost]
\label{prop:cost}
The KL divergence of the tilted distribution from the original one is
\begin{equation}
 \KL(q_\beta\Vert p)=\tfrac12\beta^2\,a^\top\Phi^\top\Sigma_p\Phi\,a+O(\beta^3),
\end{equation}
and averaging over frames gives $\tfrac12\beta^2\,a^\top Ba+O(\beta^3)$ with $B$ as in Eq.~\eqref{eq:B}.
\end{proposition}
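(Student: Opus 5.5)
I will treat $\KL(q_\beta\Vert p)$ as a smooth function of $\beta$ and Taylor-expand it around $\beta=0$ to second order, using the tilted form of Eq.~\eqref{eq:tilt} and Lemma~\ref{lem:derivative}.

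\textbf{Closed form.} From Eq.~\eqref{eq:tilt}, $\log q_\beta(i)-\log p(i)=\beta b g(i)-\psi(\beta)$. Hence
\begin{equation}
 \KL(q_\beta\Vert p)=\beta b\,\E_{q_\beta}[g]-\psi(\beta).
\end{equation}

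\textbf{Values and derivatives at zero.} At $\beta=0$ we have $\psi(0)=\log\sum_ip(i)=0$, so the divergence vanishes. Its derivative is $b\E_{q_\beta}[g]+\beta b\frac{d}{d\beta}\E_{q_\beta}[g]-\psi'(\beta)$. Since $\psi'(\beta)=b\E_{q_\beta}[g]$, as shown in the proof of Lemma~\ref{lem:derivative}, this reduces to $\beta b\frac{d}{d\beta}\E_{q_\beta}[g]$. Lemma~\ref{lem:derivative} with $f=g$ turns it into $\beta b^2\Var_{q_\beta}(g)=\beta\Var_{q_\beta}(g)$, using $b^2=1$. This derivative vanishes at $\beta=0$. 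Its own derivative at $\beta=0$ equals $\Var_p(g)$, because the term $\beta\frac{d}{d\beta}\Var_{q_\beta}(g)$ vanishes there.

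\textbf{Remainder and matrix form.} Taylor's theorem gives $\KL=\tfrac12\beta^2\Var_p(g(X))+O(\beta^3)$. The remainder is controlled because the finite sum makes $\psi$ analytic, so the third derivative is bounded on a neighbourhood of zero. By Lemma~\ref{lem:matrix}, $\Var_p(g)=g^\top\Sigma_pg=a^\top\Phi^\top\Sigma_p\Phi a$, using $g=\Phi a$. Averaging over frames and applying linearity of expectation together with the matrix form of $B$ in Lemma~\ref{lem:matrix} gives the averaged statement.

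\textbf{Main obstacle.} The only delicate point is uniformity of the $O(\beta^3)$ term across frames when averaging. I would argue that $|g|\le\kappa$, or more generally that $g$ is bounded on the finite vocabulary. Then every cumulant of $g$ under any $q_\beta$ is bounded by a constant depending only on $\|g\|_\infty$. The third derivative of $\KL$ in $\beta$, which is a polynomial in the tilted cumulants, is therefore uniformly bounded, and the remainder survives the expectation $\E_t$.
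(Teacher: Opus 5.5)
Your proof is correct and follows essentially the same route as the paper: both write the divergence as $\beta\psi'(\beta)-\psi(\beta)$, Taylor-expand to second order at $\beta=0$ using $\psi''(0)=\Var_p(g)$, and finish with Lemma~\ref{lem:matrix}. The paper expands $\psi$ and $\psi'$ separately, whereas you differentiate the combination to get $\frac{d}{d\beta}\KL=\beta\Var_{q_\beta}(g)$, which is only a cosmetic difference; your extra argument that the $O(\beta^3)$ remainder is uniform over frames, because the tilted cumulants are bounded in terms of $\max_i|g(i)|$, makes rigorous a step the paper leaves implicit when it averages over frames.
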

\begin{proof}
From Eq.~\eqref{eq:tilt}, $\log\big(q_\beta(i)/p(i)\big)=\beta bg(i)-\psi(\beta)$, so
\begin{equation}
 \KL(q_\beta\Vert p)=\E_{q_\beta}\big[\beta bg(X)-\psi(\beta)\big]=\beta\psi'(\beta)-\psi(\beta),
\end{equation}
using $\psi'(\beta)=b\,\E_{q_\beta}[g]$ from the proof of Lemma~\ref{lem:derivative}. At $\beta=0$, $\psi(0)=0$,
$\psi'(0)=b\,\E_p[g]$, and $\psi''(0)=b^2\Var_p(g)=\Var_p(g)$. Taylor expansion of both terms gives
\begin{equation}
 \beta\big(\psi'(0)+\beta\psi''(0)\big)-\Big(\beta\psi'(0)+\tfrac12\beta^2\psi''(0)\Big)+O(\beta^3)=\tfrac12\beta^2\Var_p(g)+O(\beta^3).
\end{equation}
Finally, $\Var_p(g)=g^\top\Sigma_pg=a^\top\Phi^\top\Sigma_p\Phi a$ by Lemma~\ref{lem:matrix}.
\end{proof}

\begin{proposition}[Noise, and the effect of centering]
\label{prop:noise}
On unwatermarked speech, one frame's contribution $b\,h(Y)$ has mean zero and variance
\begin{equation}
 \E_{H_0}\big[h(Y)^2\big]=c^\top Cc .
\end{equation}
For the uncentered function $\tilde h=\Phi c$, the variance is $c^\top Cc+(c^\top\mu_0)^2$, while the signal of
Proposition~\ref{prop:signal} is the same as for $h$.
\end{proposition}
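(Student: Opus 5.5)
Under $H_0$ no bias is applied, so the recovered token $Y$ of a frame is distributed as $\pi_0$ and, by the second assumption, is independent of the key sign $b$. The plan is to compute the first two moments of $b\,h(Y)$ directly from this independence and Lemma~\ref{lem:matrix}. Then we repeat the computation for $\tilde h$ and compare the signal by reusing the last display in the proof of Proposition~\ref{prop:signal}.

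\emph{Mean.} By independence, $\E_{H_0}[b\,h(Y)]=\E[b]\,\E_{H_0}[h(Y)]=0$, since $\E[b]=0$. Centering is not even needed for this step. As a consistency check, however, we note that $\E_{H_0}[h(Y)]=\pi_0^\top(\Phi-\mathbf 1\mu_0^\top)c=\mu_0^\top c-\mu_0^\top c=0$. This uses $\mu_0=\Phi^\top\pi_0$ and $\pi_0^\top\mathbf 1=1$ from Lemma~\ref{lem:matrix}.

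\emph{Variance.} Since $b^2=1$, we have $(b\,h(Y))^2=h(Y)^2$, and because the mean is zero,
\[
\Var=\E_{H_0}[h(Y)^2].
\]
We expand this as $\Var_{\pi_0}(h)+(\E_{\pi_0}h)^2$. The second term vanishes by the check above. For the first, the constant shift $-c^\top\mu_0$ does not change the variance, so
\[
\Var_{\pi_0}(h)=\Var_{\pi_0}(\Phi c)=c^\top\Phi^\top\Sigma_{\pi_0}\Phi\,c=c^\top Cc
\]
by Lemma~\ref{lem:matrix}.

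\emph{Uncentered function.} For $\tilde h=\Phi c$ the same decomposition gives
\[
\E_{H_0}[\tilde h(Y)^2]=\Var_{\pi_0}(\Phi c)+(\pi_0^\top\Phi c)^2=c^\top Cc+(c^\top\mu_0)^2 .
\]
The mean of $b\,\tilde h(Y)$ is still zero because $b$ is independent of $Y$.

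\emph{Signal.} In the proof of Proposition~\ref{prop:signal}, the term $-a^\top\Phi^\top\Sigma_pP\mathbf 1\,(\mu_0^\top c)$ vanishes because $\Sigma_pP\mathbf 1=\Sigma_p\mathbf 1=0$. Hence the first-order signal of $h$ equals $g^\top\Sigma_pP\Phi c$, which is exactly the signal of $\tilde h$; it holds frame by frame and therefore after averaging.

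There is no real obstacle. The one point needing care is the exact meaning of ``under $H_0$'': we must state that the noise computation assumes recovered tokens are marginally distributed as $\pi_0$ with $b$ independent of them. Cross-frame dependence matters only for the clip-level statistic and is handled in Appendix~\ref{app:detect}.
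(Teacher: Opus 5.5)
Your proof is correct and follows the paper's own argument essentially step by step. You use the independence of $b$ and $Y$ under $H_0$ for the zero mean, the split $\E_{H_0}[f(Y)^2]=\Var_{H_0}(f(Y))+(\pi_0^\top f)^2$ together with $\pi_0^\top h=0$ and Lemma~\ref{lem:matrix} for the two variances, and $\Sigma_p\mathbf 1=0$ (invariance of the covariance under constant shifts of $h$) for the unchanged signal.
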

\begin{proof}
Under $H_0$ the sign $b$ is independent of $Y$ and has mean zero, so $\E[b\,h(Y)]=0$ and
$\E[(b\,h(Y))^2]=\E_{H_0}[h(Y)^2]$. For any function $f$, $\E_{H_0}[f(Y)^2]=\Var_{H_0}(f(Y))+(\pi_0^\top f)^2$. For the centered
function,
\begin{equation}
 \pi_0^\top h=\pi_0^\top\Phi c-(\pi_0^\top\mathbf 1)\,\mu_0^\top c=\mu_0^\top c-\mu_0^\top c=0,
\end{equation}
and $\Var_{H_0}(h(Y))=h^\top\Sigma_{\pi_0}h=c^\top\Phi^\top\Sigma_{\pi_0}\Phi c=c^\top Cc$, because $\Sigma_{\pi_0}\mathbf 1=0$. For
$\tilde h=\Phi c$, the variance is the same and the mean is $\pi_0^\top\Phi c=c^\top\mu_0$, which gives the second statement. The
signal is unchanged because the covariance in Proposition~\ref{prop:signal} does not depend on constant shifts of $h$.
\end{proof}
Centering therefore removes the variance $(c^\top\mu_0)^2$ per frame without changing the signal. In the detector this variance
appears as the term $m\sum_tb_t$ with $m=c^\top\mu_0$. The embedding function needs no centering, because adding the same
constant to every logit leaves the sampling distribution unchanged.

\begin{remark}[Scope]
These are leading-order expansions at $\delta=0$ for a fixed sampling support, linear in $\beta$ for the signal and
quadratic for the KL cost. They do not describe support
changes caused by the bias, samplers with additional renormalization, or the effect of the watermark on later contexts. They
are used to design the functions, and the behavior at the deployed strength is measured in Section~\ref{sec:results}.
\end{remark}

\subsection{Objective}
\label{app:objective}
This section explains the objective $J$ of Section~\ref{sec:solve}, states the approximation behind it, and derives its
maximum.

\paragraph{Where the signal comes from.} The threshold is set on unwatermarked audio, where $Z$ has mean zero and unit
variance, so a pair of functions is better the higher it raises the mean of $Z$ on watermarked audio. To first order in
$\beta$, the bias multiplies the probability of token $i$ by $1+\beta b\,(g(i)-\bar g)$, where $\bar g$ is the mean of $g$
under $p_t$. In a frame with $b=+1$, tokens with large $g$ therefore become more likely, and in a frame with $b=-1$ less
likely. Retokenization carries part of this change to the recovered token through $P$, and the score $h$ picks it up.
Averaged over the sign, the product $b\,h(Y)$ is thus $\beta$ times the covariance between $g$ of the generated token and
$h$ of the recovered one, which in the basis is $a^\top Ac$ (Proposition~\ref{prop:signal}).

\paragraph{From one frame to a clip.} A clip has $n$ frames. The numerator of Eq.~\eqref{eq:detector} is a sum of $n$
terms $b\,h$, each with mean $\beta\,a^\top Ac$, so it is about $n\beta\,a^\top Ac$. The denominator is the square root of a
sum of $n$ squared scores, each about $c^\top Cc$ (Proposition~\ref{prop:noise}), so it is about $\sqrt{n\,c^\top Cc}$. The
mean of $Z$ is their ratio, $\sqrt n\,\beta\,a^\top Ac/\sqrt{c^\top Cc}$, and grows with the square root of the clip
length.

\paragraph{At a fixed cost.} We compare pairs at the same cost. For a budget of $\varepsilon$ nats per frame,
$\tfrac12\beta^2a^\top Ba=\varepsilon$ (Proposition~\ref{prop:cost}) fixes the strength at $\beta=\sqrt{2\varepsilon/a^\top
Ba}$, and substituting gives $\E[Z]\approx\sqrt{2\varepsilon n}\,J(a,c)$. The first factor depends only on the budget and
the clip length, and $J$ only on the pair. The next paragraph states the approximation behind these steps, and
Proposition~\ref{prop:svd} gives the maximum of $J$.

\paragraph{The approximation.} The steps above replace the mean of the ratio $Z$ by the ratio of the mean numerator to the
typical denominator. This is an approximation, not a derived result. It treats $\frac1n\sum_th(Y_t)^2$ as close to its
mean $c^\top Cc+O(\beta)$ (Proposition~\ref{prop:noise}), although the tokens of a clip are generated autoregressively and
are not independent, and it keeps only the first order in $\beta$. Under it, for one stream with $n$ frames at the correct
offset,
\begin{equation}
 \E[Z]\approx\sqrt n\,\beta\,\frac{a^\top Ac}{\sqrt{c^\top Cc}}=\sqrt{2\varepsilon n}\,J(a,c)
 \quad\text{at a first-order cost }\varepsilon=\tfrac12\beta^2a^\top Ba\text{ per frame.}
\label{eq:expected}
\end{equation}
Under this approximation, maximizing $J$ maximizes the predicted mean shift of $Z$ at a given cost and clip length. $J$ is
therefore a design objective. It does not guarantee an ordering of the TPR, which depends on the whole distribution of $Z$,
and the TPR is measured directly (Section~\ref{sec:results}).

\begin{proposition}[Maximum of $J$]
\label{prop:svd}
Suppose $B$ and $C$ are positive definite, and let $M=B^{-1/2}AC^{-1/2}$ with largest singular value $\sigma_1(M)$ and leading
left and right singular vectors $\tilde u_1,\tilde v_1$. Then
\begin{equation}
 \max_{a,c\ne0}J(a,c)=\sigma_1(M),
\end{equation}
attained at $a=B^{-1/2}\tilde u_1$ and $c=C^{-1/2}\tilde v_1$ (and their positive multiples). The same value and maximizers solve
$\max\{a^\top Ac:\ a^\top Ba\le1,\ c^\top Cc\le1\}$.
\end{proposition}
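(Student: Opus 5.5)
The plan is to whiten both quadratic forms so that $J$ becomes a bilinear form between unit vectors, and then read off the answer from the singular value decomposition. Since $B$ and $C$ are positive definite, the symmetric square roots $B^{1/2}$, $C^{1/2}$ and their inverses exist. We substitute $x=B^{1/2}a$ and $y=C^{1/2}c$, which is a bijection of $\mathbb R^K\setminus\{0\}$ onto itself in each variable. Then $a^\top Ba=\|x\|^2$, $c^\top Cc=\|y\|^2$ and $a^\top Ac=x^\top B^{-1/2}AC^{-1/2}y=x^\top My$, so
\begin{equation}
 J(a,c)=\frac{x^\top My}{\|x\|\,\|y\|}.
\end{equation}
This ratio is invariant under positive rescaling of $x$ and of $y$, so it suffices to maximize $x^\top My$ over unit vectors.

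For unit $x,y$, Cauchy--Schwarz gives $x^\top My\le\|x\|\,\|My\|\le\sigma_1(M)$, because the operator norm of $M$ is its largest singular value. Equality holds at $x=\tilde u_1$ and $y=\tilde v_1$, since $M\tilde v_1=\sigma_1\tilde u_1$ gives $\tilde u_1^\top M\tilde v_1=\sigma_1$. Mapping back gives $a=B^{-1/2}\tilde u_1$ and $c=C^{-1/2}\tilde v_1$, together with their positive multiples by the scale invariance. Negative multiples of both vectors also attain the maximum; the statement's parenthetical can be read as covering these, or we simply note them. If $\sigma_1$ is repeated, any pair of leading singular vectors works, so ``leading'' is interpreted up to that choice.

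For the constrained problem, the same substitution turns it into $\max\{x^\top My:\|x\|\le1,\|y\|\le1\}$. The bound $x^\top My\le\|x\|\|y\|\sigma_1\le\sigma_1$ still holds, and it is attained at the same unit vectors. The optimal value is therefore $\sigma_1(M)$, achieved at the same $(a,c)$ normalized so that $a^\top Ba=c^\top Cc=1$.

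The only delicate point is the case $\sigma_1(M)=0$ (that is, $A=0$). Then every feasible pair is optimal, and ``the same maximizers'' holds only in the weak sense that the stated pair is among them; this can be handled with a brief remark.
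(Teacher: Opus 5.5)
Your proof is correct and is essentially the paper's argument. You whiten with $B^{1/2}$ and $C^{1/2}$, bound $\tilde a^\top M\tilde c$ by Cauchy--Schwarz and the operator norm $\sigma_1(M)$, attain equality at the leading singular pair, and reuse the bound $\|\tilde a\|\,\|\tilde c\|\,\sigma_1(M)$ on the feasible set for the constrained form; your extra remarks on sign-flipped pairs, a repeated $\sigma_1$, and the degenerate case $A=0$ are accurate refinements that the paper does not spell out.
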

\begin{proof}
Substitute $\tilde a=B^{1/2}a$ and $\tilde c=C^{1/2}c$. Then $a^\top Ba=\|\tilde a\|^2$, $c^\top Cc=\|\tilde c\|^2$, and
$a^\top Ac=\tilde a^\top M\tilde c$, so
\begin{equation}
 J(a,c)=\frac{\tilde a^\top M\tilde c}{\|\tilde a\|\,\|\tilde c\|}\le\frac{\|\tilde a\|\,\|M\tilde c\|}{\|\tilde a\|\,\|\tilde c\|}\le\sigma_1(M).
\end{equation}
The first inequality is the Cauchy--Schwarz inequality and the second is the definition of $\sigma_1(M)$ as the largest value
of $\|M\tilde c\|/\|\tilde c\|$. Both are equalities at $\tilde a=\tilde u_1$ and $\tilde c=\tilde v_1$. For the constrained form,
$a^\top Ac=\tilde a^\top M\tilde c\le\|\tilde a\|\,\|\tilde c\|\,\sigma_1(M)\le\sigma_1(M)$ on the feasible set, with equality at
the same point.
\end{proof}
If $B$ or $C$ is singular, the same argument applies after restricting $a$ and $c$ to the ranges of $B$ and $C$.

\begin{corollary}[Best detection function for a fixed embedding function]
\label{cor:matched}
Suppose $C$ is positive definite and $A^\top a\ne0$. For a fixed $a$, $J(a,c)$ is maximized over $c\ne0$ by the positive
multiples of $c^\star=C^{-1}A^\top a$.
\end{corollary}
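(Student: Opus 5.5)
For fixed $a$ with $A^\top a\ne0$, the factor $1/\sqrt{a^\top Ba}$ in $J$ is a positive constant, so I only need to maximize $c\mapsto (a^\top Ac)/\sqrt{c^\top Cc}$ over $c\ne0$. Write $w=A^\top a$, so the numerator is $w^\top c$. I would follow the substitution used in the proof of Proposition~\ref{prop:svd}: set $\tilde c=C^{1/2}c$, which is a bijection of nonzero vectors because $C$ is positive definite. The ratio becomes
\[
\frac{w^\top C^{-1/2}\tilde c}{\|\tilde c\|}=\frac{(C^{-1/2}w)^\top\tilde c}{\|\tilde c\|}.
\]

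By Cauchy--Schwarz this is at most $\|C^{-1/2}w\|=\sqrt{w^\top C^{-1}w}$. Since $w\ne0$ and $C^{-1}$ is positive definite, this bound is strictly positive. Equality in Cauchy--Schwarz with a positive value holds exactly when $\tilde c$ is a positive multiple of $C^{-1/2}w$. Translating back, $c=C^{-1/2}\tilde c$ is a positive multiple of $C^{-1}w=C^{-1}A^\top a=c^\star$.

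So the supremum $\sqrt{a^\top AC^{-1}A^\top a}\,/\sqrt{a^\top Ba}$ is attained precisely on the positive ray through $c^\star$. Negative multiples give the negative of the value, and any other direction gives strictly less. I also need $a^\top Ba>0$ so that $J$ is defined. This either follows from $B$ being positive definite as in Proposition~\ref{prop:svd}, or can be taken as an implicit hypothesis; if $a^\top Ba=0$, $J$ is undefined.

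There is no real obstacle. The only point needing care is the characterization of the equality case: the maximizers should be exactly the positive multiples of $c^\star$, not merely some maximizer. This follows from strict Cauchy--Schwarz together with the positivity of the bound, which is where the hypothesis $A^\top a\ne0$ is used.
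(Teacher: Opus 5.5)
Your proposal is correct and follows the paper's own proof: the substitution $\tilde c=C^{1/2}c$, then Cauchy--Schwarz against $C^{-1/2}A^\top a$, then mapping back to $c\propto C^{-1}A^\top a$. You are also more careful than the paper on two points. You characterize the equality case exactly, using $A^\top a\ne0$ to make the bound strictly positive, and you note that $a^\top Ba>0$ is needed for $J$ to be defined.
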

\begin{proof}
With $\tilde c=C^{1/2}c$, we have $a^\top Ac=(C^{-1/2}A^\top a)^\top\tilde c$ and $c^\top Cc=\|\tilde c\|^2$, so $J(a,c)$ is
proportional to $(C^{-1/2}A^\top a)^\top\tilde c/\|\tilde c\|$. By the Cauchy--Schwarz inequality, this is largest when
$\tilde c$ is a positive multiple of $C^{-1/2}A^\top a$, that is, when $c$ is a positive multiple of $C^{-1}A^\top a$.
\end{proof}
The detection coefficients are thus a matched filter \citep{turin1960matched}. $A^\top a$ is the embedded signal as it arrives after
retokenization, and $C^{-1}$ discounts the directions in which the detection function is noisy on unwatermarked speech.

\subsection{Bounded problem and solver}
\label{app:bounded}
\paragraph{Why pointwise bounds are needed.} The cost $a^\top Ba$ averages $\Var_{p_t}(g)$ over frames. A token that the
language model rarely samples contributes almost nothing to it, so it can take a very large value at almost no cost. The
maximizer of $J$ uses this freedom. On Moshi, its largest values on streams 1--4 are 23, 217, 50, and 5.5 standard
deviations. When such a token enters the candidates of the language model, Eq.~\eqref{eq:write} multiplies its probability by
$e^{\pm\beta g(i)}$. At $\delta=0.07$ and $T=0.8$, for example, a value of 217 gives
\begin{equation}
 \beta\,g(i)=\frac{0.07}{0.8}\times217\approx19,\qquad e^{\pm19},
\end{equation}
which forces or excludes the token. The first-order cost of Proposition~\ref{prop:cost} no longer describes such a change.
Table~\ref{tab:uncapped} compares the unbounded and the bounded functions on the validation set. The unbounded functions
already lose 0.48 of predicted naturalness (UTMOSv2) at $\delta=0.07$, where they detect only 35\,\% of the clips without any
attack. The bounded functions at $\delta=0.7$ spend more KL divergence, keep predicted naturalness, and detect 97\,\% of the clips.
On the detection side, a token with an extreme value of $h$ would dominate the sum in Eq.~\eqref{eq:detector}, so that the
decision would depend on whether that token occurs rather than on evidence spread over many frames.
\begin{table}[ht]
\caption{\textbf{Unbounded and bounded functions.} All values are for Moshi's streams 1--4 on the validation set (250
prompts). KL is the divergence per frame, summed over the four streams. The UTMOSv2 difference is paired against
unwatermarked speech, and the TPR is at the fixed threshold without attack.}
\label{tab:uncapped}
\centering\small
\begin{tabular}{lcccc}
\toprule
Functions & $\delta$ & KL (nats/frame) & UTMOSv2 difference & TPR (\%)\\
\midrule
Unbounded & 0.05 & 0.05 & $-0.04$ & 2\\
Unbounded & 0.06 & 0.24 & $-0.05$ & 7\\
Unbounded & 0.07 & 0.94 & $-0.48$ & 35\\
Unbounded & 0.08 & 1.94 & $-1.00$ & 68\\
Bounded ($\kappa=5$) & 0.70 & 1.54 & $+0.03$ & 97\\
\bottomrule
\end{tabular}
\end{table}

\paragraph{Why the bounds are placed on the scale-fixed problem.} $J$ is unchanged when $a$ or $c$ is multiplied by a positive
constant. A bound such as $|g(i)|\le\kappa$ added to the maximization of $J$ would therefore constrain nothing, because any
direction can be shrunk until it satisfies the bound, and shrinking does not change $J$. Eq.~\eqref{eq:solve} instead first
fixes the scale with $a^\top Ba\le1$ and $c^\top Cc\le1$, and at this scale the bound has a meaning. At $a^\top Ba=1$, the
frame-averaged variance of $g$ under the sampling distribution is one, so the typical logit shift is $\delta$, and
$|g(i)|\le\kappa$ bounds each token's shift by $\kappa\delta$, that is, by $\kappa$ times the typical shift. Likewise, $c^\top Cc=1$ gives $h$ unit variance per frame under $H_0$.
By Proposition~\ref{prop:cost}, the first-order cost for the tempered softmax sampler is then at most $\tfrac12(\delta/T)^2$ nats
per stream and frame.

\paragraph{What the bounded problem optimizes.} Without the amplitude bounds, Eq.~\eqref{eq:solve} has the same maximizers as
$J$ (Proposition~\ref{prop:svd}). With them, it is a surrogate. It maximizes the signal $a^\top Ac$ under upper bounds on the
cost and the noise, and either quadratic constraint can remain slack, in which case $a^\top Ac$ and $J$ differ. On Moshi, the
cost constraint is slack on stream 1 ($a_1^\top B_1a_1=0.92$) and active on streams 2--4 (Appendix~\ref{app:strength}). The
bounded solution therefore does not inherit the optimality of Proposition~\ref{prop:svd}, and Table~\ref{tab:kappa} measures
how much of $\sigma_1(M)$ it retains.

\paragraph{Solver.} For fixed $c$, Eq.~\eqref{eq:solve} reduces to
\begin{equation}
 \max_a\ (Ac)^\top a\quad\text{subject to}\quad \|B^{1/2}a\|\le1,\quad |(\Phi a)_i|\le\kappa\ \text{for all } i,
\end{equation}
a linear objective over the intersection of a second-order cone and a polytope. For fixed $a$, the problem in $c$ has the same
form, with $A^\top a$, $C^{1/2}$, and $\Phi-\mathbf 1\mu_0^\top$ in place of $Ac$, $B^{1/2}$, and $\Phi$. The procedure is as follows.
\begin{enumerate}
\item Start from one of the three leading singular vector pairs of $M$ (Proposition~\ref{prop:svd}).
\item Solve the program in $a$ with $c$ fixed, then the program in $c$ with $a$ fixed. Both are solved with CVXPY
 \citep{diamond2016cvxpy} and Clarabel \citep{goulart2024clarabel}, with absolute and relative gap and feasibility tolerances of
 $10^{-9}$.
\item Repeat step 2 until the objective changes by less than a relative $10^{-7}$, or for at most 150 iterations.
\item Keep the best result over the three starting points.
\end{enumerate}

\begin{proposition}[Monotone ascent]
\label{prop:ascent}
Suppose each block step is solved exactly. From the first complete sweep on, the objective values $a^\top Ac$ produced by
the procedure are nondecreasing and converge.
\end{proposition}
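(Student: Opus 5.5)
The argument is the standard one for exact block-coordinate ascent. Write $F(a,c)=a^\top Ac$, and let $\mathcal A=\{a:\ a^\top Ba\le1,\ |(\Phi a)_i|\le\kappa\ \forall i\}$ and $\mathcal C=\{c:\ c^\top Cc\le1,\ |((\Phi-\mathbf 1\mu_0^\top)c)_i|\le\kappa\ \forall i\}$. The problem in Eq.~\eqref{eq:solve} is then the maximization of $F$ over $\mathcal A\times\mathcal C$, and the two constraint sets do not interact. A sweep from $(a^{(k)},c^{(k)})$ does two things: it sets $a^{(k+1)}\in\arg\max_{a\in\mathcal A}F(a,c^{(k)})$, and then it sets $c^{(k+1)}\in\arg\max_{c\in\mathcal C}F(a^{(k+1)},c)$. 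The plan has three steps: show that the iterates are feasible from the first complete sweep on, chain the block-optimality inequalities, and bound the objective above.

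\textbf{Feasibility.} The starting singular-vector pair of $M$ (Proposition~\ref{prop:svd}) need not satisfy the amplitude bounds. This is why the monotonicity claim starts only after the first sweep. After one complete sweep, both blocks are exact maximizers over $\mathcal A$ and $\mathcal C$, so $(a^{(k)},c^{(k)})\in\mathcal A\times\mathcal C$ for all $k\ge1$. Each block step is well posed: $\mathcal A$ and $\mathcal C$ are closed, convex and nonempty, since $0$ lies in both. They are also bounded in the relevant directions. Where $B$ or $C$ is singular, boundedness comes from the pointwise bounds together with the fact that $\Phi$ has full column rank on the support. So the linear objective attains its maximum.

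\textbf{Monotonicity.} For $k\ge1$, the choice of $a^{(k+1)}$ gives $F(a^{(k+1)},c^{(k)})\ge F(a^{(k)},c^{(k)})$, because $a^{(k)}$ is feasible for that program. The choice of $c^{(k+1)}$ then gives $F(a^{(k+1)},c^{(k+1)})\ge F(a^{(k+1)},c^{(k)})$, because $c^{(k)}$ is feasible. The same inequalities also hold within a sweep, so the full interleaved sequence of objective values is nondecreasing.

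\textbf{Convergence.} On $\mathcal A\times\mathcal C$, the objective satisfies $F(a,c)\le|a^\top Ac|\le\|a\|\,\|A\|\,\|c\|$, which is bounded because both sets are compact. Alternatively, when $B$ and $C$ are positive definite, the constrained part of Proposition~\ref{prop:svd} gives the bound $\sigma_1(M)$ directly. A nondecreasing sequence that is bounded above converges, which finishes the proof.

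The only delicate point is compactness when $B$ or $C$ is singular. If compactness were in doubt, I would argue that the pointwise bounds $|(\Phi a)_i|\le\kappa$ bound $a$ whenever $\Phi$ has full column rank, which holds because the columns of $\Phi$ are $D$-orthonormal by Proposition~\ref{prop:orthonormal}. The same argument applies to $c$: the matrix $\Phi-\mathbf 1\mu_0^\top$ also has full column rank, because $\Phi^\top D\mathbf 1=0$ means no column of $\Phi$ lies in the span of $\mathbf 1$ in a way that cancels. Note that the proposition claims only convergence of the objective values, not of the iterates or to a global maximum, and the proof should make no stronger claim.
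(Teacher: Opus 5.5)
Your proof is correct and follows the paper's argument. Both blocks are feasible after the first complete sweep, monotonicity comes from block optimality with the current iterate as a feasible competitor, and the objective values converge as a bounded nondecreasing sequence. The only difference is that you get the upper bound from compactness of $\mathcal A\times\mathcal C$, using the pointwise bounds and the full column rank of $\Phi$ and of $\Phi-\mathbf 1\mu_0^\top$ (the latter because $\Phi c=m\mathbf 1$ on the graph nodes forces $c=\Phi^\top D\Phi c=m\,\Phi^\top D\mathbf 1=0$), rather than from the paper's bound $\sigma_1(M)$, and this has the small advantage of not requiring $B$ and $C$ to be positive definite.
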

\begin{proof}
The singular vectors used as starting points need not satisfy the amplitude bounds, but after the first complete sweep both
blocks are feasible. From then on, each step maximizes the objective over one block of variables with the other fixed. The
current value of that block is feasible for the step, so the objective cannot decrease. The feasible set is contained in $\{a^\top Ba\le1,\ c^\top Cc\le1\}$,
where the objective is at most $\sigma_1(M)$ by Proposition~\ref{prop:svd}. A nondecreasing bounded sequence converges.
\end{proof}
With a numerical solver, these statements hold up to the solver's tolerance. The joint problem is not convex, so the limit
need not be the global maximum. At $\kappa=5$, the attained objective is 68--100\,\%
of $\sigma_1(M)$ per stream on Moshi (Table~\ref{tab:kappa}).

\subsection{Choice of $K$ and $\kappa$}
\label{app:hyper}
\paragraph{Basis dimension $K$.} For each stream and each $k=1,\ldots,16$, we solve Eq.~\eqref{eq:solve} with $\kappa=5$ inside the
first $k$ basis functions. To see whether a larger $k$ only fits noise, we split the fitting data in half. $A$ and $B$ are
computed from the sampling distributions recorded during generation, and $C$ from the tokens that the detector recovers
from the resulting audio, so the two kinds of data are split separately, the sampling distributions by generation shard
and the recovered tokens by clip. We solve on one half, evaluate the objective of the solution on the other half after
renormalizing it to $a^\top Ba=c^\top Cc=1$, and average the two directions. Figure~\ref{fig:kdim} shows the result. The held-out objective is 97--100\,\% of the in-sample objective at
every $k$, so a larger $K$ is not penalized by estimation noise. Streams 5--8 saturate by $k=6$ and streams 1--4 by $k=12$, and
every stream is within 1\,\% of its value at $k=16$ already at $k=15$. $K=16$ is thus the smallest power of two within 5\,\% of
the plateau on every stream. The objective does not use the modes in the order of their eigenvalues. The most useful single
mode is mode 9 on stream 1 and mode 14 on stream 2, so a cutoff on the eigenvalue would discard useful modes.
\begin{figure}[ht]
\centering
\includegraphics[width=\linewidth]{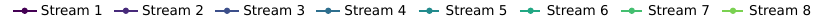}\\[2pt]
\begin{subfigure}[t]{0.418\linewidth}
\centering
\includegraphics[width=\linewidth]{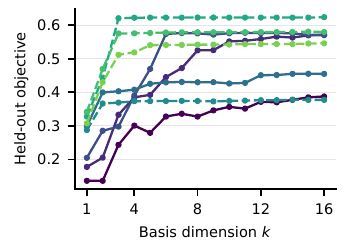}
\caption{Held-out objective}
\label{fig:kdim_a}
\end{subfigure}\hspace{0.03\linewidth}
\begin{subfigure}[t]{0.418\linewidth}
\centering
\includegraphics[width=\linewidth]{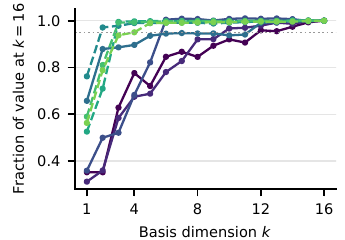}
\caption{Relative to $k=16$}
\label{fig:kdim_b}
\end{subfigure}
\caption{\textbf{Held-out objective against the basis dimension $k$.} The objective is estimated per Moshi stream with
$\kappa=5$, fitting on one half of the data and evaluating on the other. Solid lines are the watermarked streams 1--4. The dotted line in (b) marks 95\,\% of the value at $k=16$.}
\label{fig:kdim}
\end{figure}

\paragraph{Bound $\kappa$.} Table~\ref{tab:kappa} gives the objective of the bounded problem as a fraction of the unconstrained
optimum $\sigma_1(M)$. The mean over streams rises from 0.76 at $\kappa=3$ to 0.87 at $\kappa=5$, and only to 0.91 at $\kappa=10$.
Among the tested values, $\kappa=5$ is the smallest within 5\,\% of the value at $\kappa=10$. We also generated speech on 100
validation prompts for $\kappa\in\{3,4,4.5,5,5.5,6,7,10\}$ and $\delta\in\{0.6,0.7,0.9,1.0\}$. Detection per unit of KL divergence
is flat for $\kappa$ between 4.5 and 6, and quality drops sharply at $\kappa=10$ (UTMOSv2 difference of $-1.0$ at $\delta=0.9$).
\begin{table}[ht]
\caption{\textbf{Objective retained under the bound $\kappa$.} Each value is the attained objective divided by the
unconstrained optimum $\sigma_1(M)$, for one Moshi stream.}
\label{tab:kappa}
\centering\small
\begin{tabular}{lccccccccc}
\toprule
& \multicolumn{8}{c}{Stream} & \\
\cmidrule(lr){2-9}
$\kappa$ & 1 & 2 & 3 & 4 & 5 & 6 & 7 & 8 & Mean\\
\midrule
2 & 0.31 & 0.26 & 0.41 & 0.70 & 0.57 & 0.58 & 0.64 & 0.63 & 0.51\\
3 & 0.49 & 0.49 & 0.79 & 0.94 & 0.78 & 0.86 & 0.84 & 0.85 & 0.76\\
4 & 0.65 & 0.65 & 0.89 & 0.98 & 0.79 & 0.96 & 0.85 & 0.90 & 0.83\\
5 & 0.82 & 0.68 & 0.90 & 1.00 & 0.79 & 1.00 & 0.87 & 0.93 & 0.87\\
7 & 0.91 & 0.69 & 0.90 & 1.00 & 0.80 & 1.00 & 0.90 & 0.96 & 0.90\\
10 & 0.95 & 0.70 & 0.90 & 1.00 & 0.80 & 1.00 & 0.94 & 1.00 & 0.91\\
\bottomrule
\end{tabular}
\end{table}

\subsection{Detection statistic}
\label{app:detect}
\paragraph{Stream delays.} Some language models generate their streams with fixed delays, so that the token of stream $s$ at
output frame $t$ is generated at step $t+d_s$. Moshi uses $d_1=0$ and $d_2=\cdots=d_8=1$. MOSS-TTS uses $d_s=s-1$. CosyVoice3 has a
single stream with $d=0$. The detector indexes the key signs accordingly.

\paragraph{Null distribution.} The results below idealize the keyed pseudorandom signs as independent fair signs that are
independent of the tokens of unwatermarked audio. Probabilities are then over the signs, with the clip fixed. For one fixed,
deployed key this is a model rather than a guarantee, since the secrecy of the key alone does not make the signs
statistically independent of every clip. We also assume that the clip has at least one frame with $h_s(Y_{s,t})\ne0$, so
that the denominator of Eq.~\eqref{eq:detector} is positive. The empirical false-positive rates in
Appendix~\ref{app:fpr} check the model on real unwatermarked audio.

\begin{proposition}[Null moments]
\label{prop:nullmoments}
On unwatermarked audio, fix the recovered tokens and an offset $\tau$. Then $Z(\tau)$ has conditional mean zero and conditional
variance one.
\end{proposition}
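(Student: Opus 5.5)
Under the null model stated just before the proposition, the recovered tokens $Y_{s,t}$ are fixed and the key signs are independent fair signs, independent of the tokens. I will therefore condition on the clip and take all probabilities over the signs only. Fix $\tau$. Write the denominator of Eq.~\eqref{eq:detector} as $S=\big(\sum_{s\in\mathcal S}\sum_t h_s(Y_{s,t})^2\big)^{1/2}$. This is a deterministic positive number given the clip, by the assumption that some $h_s(Y_{s,t})\ne0$. The statistic is then $Z(\tau)=S^{-1}\sum_{(s,t)}w_{s,t}\,b_{s,t+\tau+d_s}$ with fixed weights $w_{s,t}=h_s(Y_{s,t})$. So $Z(\tau)$ is a fixed linear combination of signs.

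The mean follows at once from linearity, because each $b_{s,t+\tau+d_s}$ has mean zero.

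For the variance, I expand $\E[Z(\tau)^2]=S^{-2}\sum_{(s,t),(s',t')}w_{s,t}w_{s',t'}\E[b_{s,t+\tau+d_s}b_{s',t'+\tau+d_{s'}}]$. The step that needs care is that the index map $(s,t)\mapsto(s,t+\tau+d_s)$ is injective. The stream index is kept, and for fixed $s$ the frame shift is a translation. Distinct terms of the numerator therefore use distinct key signs, even though different streams have different delays $d_s$. This works because the signs are defined per stream and frame, so two streams never share a sign. By independence, the cross terms vanish, and the diagonal terms have $b^2=1$. What remains is $S^{-2}\sum w_{s,t}^2=1$.

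The only possible obstacle is this bookkeeping of delays and offsets, to make sure no sign is reused. A related point is that frames near the clip ends may index signs beyond the generated range. These are still well-defined keyed signs under the model, so they cause no issue. Note also that the result holds conditionally for each fixed $\tau$ only, not jointly across $\tau$. This is why $Z^\star$ is not covered by the proposition.
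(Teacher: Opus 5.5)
Your proposal is correct and follows the same route as the paper's proof. Conditioning on the recovered tokens makes the denominator a constant, and the numerator becomes a weighted sum of independent fair signs, one distinct sign per term, so its mean is zero and its variance is the squared denominator. Your explicit check that $(s,t)\mapsto(s,t+\tau+d_s)$ is injective spells out a step the paper only asserts.
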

\begin{proof}
Write $w_{s,t}=h_s(Y_{s,t})$ and $\sigma^2=\sum_{s,t}w_{s,t}^2$. The numerator of Eq.~\eqref{eq:detector} is
$\sum_{s,t}b_{s,t+\tau+d_s}w_{s,t}$. For a fixed $\tau$, distinct pairs $(s,t)$ use distinct signs, and on unwatermarked audio the
signs are independent of the tokens. The numerator is therefore a sum of independent terms $\pm w_{s,t}$ with mean zero and
variance $w_{s,t}^2$. Its conditional mean is zero and its conditional variance is $\sigma^2$, the square of the denominator.
\end{proof}

\begin{corollary}[Tail bound]
\label{cor:tail}
For every $z>0$ and every clip,
\begin{equation}
 \Pr\big(Z(\tau)>z\big)\le e^{-z^2/2},\qquad \Pr\big(Z^\star>z\big)\le(2R+1)\,e^{-z^2/2},
\end{equation}
where $R$ is the one-sided range of the offset search.
\end{corollary}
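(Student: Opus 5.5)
Under the null model of the preceding paragraph, fix the clip, so the recovered tokens and the weights $w_{s,t}=h_s(Y_{s,t})$ are deterministic. All randomness then sits in the signs. For a fixed offset $\tau$, the numerator is $S(\tau)=\sum_{s,t}b_{s,t+\tau+d_s}w_{s,t}$. As in the proof of Proposition~\ref{prop:nullmoments}, distinct pairs $(s,t)$ use distinct signs, so $S(\tau)$ is a sum of independent terms $\pm w_{s,t}$ with fixed coefficients. The plan is the Hoeffding/Chernoff bound for weighted Rademacher sums, followed by a union bound over offsets.

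\textbf{Step 1 (moment generating function).} For a fair sign $b$ and any real $x$, $\E[e^{xb}]=\cosh x\le e^{x^2/2}$. This follows by comparing the Taylor series term by term, since $(2k)!\ge 2^kk!$. By independence,
\begin{equation}
 \E\big[e^{\lambda S(\tau)}\big]=\prod_{s,t}\cosh(\lambda w_{s,t})\le \exp\Big(\tfrac12\lambda^2\sigma^2\Big),
\end{equation}
where $\sigma^2=\sum_{s,t}w_{s,t}^2>0$ by the standing assumption.

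\textbf{Step 2 (Chernoff).} The denominator $\sigma$ is fixed once the clip is fixed, which is the key point. Therefore $\{Z(\tau)>z\}=\{S(\tau)>z\sigma\}$. Markov's inequality gives
\begin{equation}
 \Pr(S(\tau)>z\sigma)\le e^{-\lambda z\sigma+\lambda^2\sigma^2/2}.
\end{equation}
Choosing $\lambda=z/\sigma$ yields $e^{-z^2/2}$. The bound holds conditionally for every clip, so it also holds after averaging over clips.

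\textbf{Step 3 (union bound).} $Z^\star>z$ requires $Z(\tau)>z$ for some $\tau\in\{-R,\ldots,R\}$. Summing the $2R+1$ bounds from Step 2 gives the second inequality. Dependence between offsets, which share signs, does not matter for the union bound.

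The only delicate point is Step 2. The self-normalizing denominator has to be treated as a constant, and this is legitimate only because the clip is fixed, so $\sigma$ does not depend on the signs. Once that is noted, the rest is the standard sub-Gaussian argument.
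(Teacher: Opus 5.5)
Your proof is correct and takes the same route as the paper: conditional on the clip, $\sigma$ is fixed, and Hoeffding's bound for the weighted sum of fair signs gives $e^{-z^2/2}$ at each offset, conditionally and hence unconditionally. A union bound over the $2R+1$ offsets then gives the second inequality; the only difference is that you derive the Hoeffding step (via $\cosh x\le e^{x^2/2}$ and the Chernoff choice $\lambda=z/\sigma$) where the paper simply cites it.
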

\begin{proof}
Hoeffding's inequality for a sum of independent fair signs with weights $w_{s,t}$ gives
$\Pr\big(\sum b\,w>z\sigma\big)\le\exp\!\big(-z^2\sigma^2/(2\sigma^2)\big)=e^{-z^2/2}$, conditionally on the tokens and therefore
also unconditionally. The second bound follows from the union bound over the $2R+1$ offsets.
\end{proof}
The tail bound holds for every clip and does not need a central limit theorem. $Z(\tau)$ is moreover asymptotically
standard normal when the largest normalized weight vanishes, $\max_{s,t}|w_{s,t}|/\sigma\to0$. Because $|h_s|\le\kappa$,
this holds whenever $\sigma$ grows without bound. More frames alone do not guarantee it, since frames with zero score add
nothing to $\sigma$. With $R=2$, Corollary~\ref{cor:tail} gives a
conservative 1\,\% threshold of $\sqrt{2\ln500}\approx3.53$. The empirical thresholds are lower (2.94 for Moshi), and we use the
empirical ones (Section~\ref{sec:experiments}). These bounds do not depend on the length of the clip, whereas on
watermarked clips the expected statistic at the correct offset grows as $\sqrt n$ with the number of frames $n$
(Eq.~\eqref{eq:expected}).

\paragraph{Offset search.} Table~\ref{tab:bestshift} gives the offset at which $Z(\tau)$ is largest. On watermarked clips it
concentrates on one value per model. It is $\tau=-1$ on Moshi, whose own decoding and re-encoding shifts the tokens by one
frame, and $\tau=0$ on CosyVoice3 and MOSS-TTS. On unwatermarked clips it is spread evenly over the searched offsets. A window of
$\pm2$ frames therefore covers the offsets at which watermarked clips carry evidence. A wider window would only raise the
maximum on unwatermarked clips, and with it the threshold. After eight passes through Mimi, EnCodec, SpeechTokenizer, DAC44, or
Opus, the Moshi peak stays at $\tau=-1$ on 94--99\,\% of watermarked clips.
\begin{table}[ht]
\caption{\textbf{Where the evidence peaks.} Each value is the fraction of test clips whose maximizing offset $\tau^\star$
takes the given value, when the search covers $\pm8$ frames and no attack is applied. An even spread would put 1/17, or
0.059, on each offset.}
\label{tab:bestshift}
\centering\small
\begin{tabular}{llccccc}
\toprule
Model & Clips & $\tau^\star=-2$ & $-1$ & $0$ & $+1$ & $+2$\\
\midrule
Moshi & watermarked (600) & 0.000 & 0.985 & 0.000 & 0.000 & 0.000\\
Moshi & unwatermarked (600) & 0.055 & 0.073 & 0.047 & 0.058 & 0.070\\
CosyVoice3 & watermarked (596) & 0.002 & 0.000 & 0.988 & 0.002 & 0.002\\
CosyVoice3 & unwatermarked (597) & 0.055 & 0.047 & 0.062 & 0.049 & 0.060\\
MOSS-TTS & watermarked (593) & 0.000 & 0.000 & 1.000 & 0.000 & 0.000\\
MOSS-TTS & unwatermarked (599) & 0.048 & 0.047 & 0.078 & 0.053 & 0.060\\
\bottomrule
\end{tabular}
\end{table}

\FloatBarrier
\section{Experimental details}
\label{app:protocol}
This appendix describes the three language models (Appendix~\ref{app:models}), the data and thresholds
(Appendix~\ref{app:data}), the choice of the strength $\delta$ (Appendix~\ref{app:strength}), the attacks
(Appendix~\ref{app:attacks}), the baselines (Appendix~\ref{app:baselines}), and the scoring of speech quality
(Appendix~\ref{app:qualityprotocol}).

\subsection{Language models}
\label{app:models}
\paragraph{Moshi.}
We use Moshiko with fp16 inference. Audio streams are sampled at temperature 0.8 with top-$k=250$, and the text stream at
temperature 0.7 with top-$k=25$, the defaults of the released generator. The watermark bias is added to the audio logits
before temperature scaling. Generation stops after at least 38 answer frames followed by 25 consecutive padding frames on
the text stream, or at 200 frames (16\,s). Unwatermarked and watermarked answers to a prompt share the generation seed.

\paragraph{CosyVoice3.}
CosyVoice3 generates one stream of 6,561 tokens at 25 frames per second. We use Fun-CosyVoice3-0.5B with its repetition-aware sampler: top-$p=0.8$, top-$k=25$, and a token is redrawn when it
fills more than a tenth of the last ten positions. After the watermark bias, the distribution is renormalized over speech
tokens. These steps are why the first-order cost of Proposition~\ref{prop:cost} underestimates the measured cost on this
model (Section~\ref{sec:experiments}). The detector tokenizes 24\,kHz audio with the model's speech tokenizer, and the
native resynthesis decodes with the flow-matching decoder and the HiFT vocoder.

\paragraph{MOSS-TTS.}
MOSS-TTS generates 32 streams of 1,024 tokens at 12.5 frames per second. We use MOSS-TTS Delay-8B with its released sampler: temperature 1.7, top-$k=25$ and top-$p=0.8$. Stream $s$ is delayed by
$d_s=s-1$ frames. We watermark only streams 1--16, because streams 17--32 carry little of the watermark through
resynthesis. In a validation run that watermarked all 32 streams, a detector that reads streams 1--16 detected 92.3\,\% of
the clips after eight native passes, as many as a detector that reads all 32. Streams 17--32 alone detected 24.6\,\%, and
adding them to the detector lowered the TPR after foreign codecs, for example after eight EnCodec passes from 72.2 to
62.5\,\%.

\subsection{Data and thresholds}
\label{app:data}
\paragraph{Data sets.}
The evaluation protocol is the same on every language model (Table~\ref{tab:splits}). Each model has 250 validation and
600 test prompts, drawn at random from a pool of 1,000. The strength is chosen on the validation set, every
reported result comes from the test set, and the thresholds come from calibration answers that are used for nothing
else. Moshi answers spoken questions from WildVoice, and CosyVoice3 and MOSS-TTS read texts in the voices of WildVoice
speakers. On every model, the unwatermarked answers that fit the statistics of Section~\ref{sec:moments} are disjoint
from the validation, calibration and test sets. The substitution counts come from
LibriSpeech for all three models (Appendix~\ref{app:counts}).

\begin{table}[h]
\caption{\textbf{Data sets.} The numbers count prompts for Moshi and texts for the TTS models, and the calibration sets
count unwatermarked answers. On the TTS models, a few generations failed
or exceeded the tokenizer's length limit and are excluded, and parentheses give the clips left.}
\label{tab:splits}
\centering\small
\begin{tabular}{llccc}
\toprule
Set & Used for & Moshi & CosyVoice3 & MOSS-TTS\\
\midrule
Substitution counts & graph, basis and $P$ & \multicolumn{3}{c}{20 hours of LibriSpeech}\\
Validation & strength $\delta$ and watermarked streams & 250 & 250 & 250\\
Calibration & thresholds & 1,200 & 1,000 (994) & 1,000 (993)\\
Test & all reported results & 600 & 600 (596--598) & 600 (593--599)\\
\bottomrule
\end{tabular}
\end{table}

\paragraph{Calibration and thresholds.}
The thresholds are set on unwatermarked answers that are used for nothing else. For Moshi, these are 1,200 answers, three
new generations for each fitting and validation prompt. For each TTS model, they are the answers to 1,000 separate texts.
Each detector scores its calibration answers, and its threshold is the $k$-th largest score, with $k=\lceil0.01n\rceil$ for
$n$ answers. A clip is flagged when its score is strictly above the threshold, so at most $k-1$ calibration answers are
flagged, for example 11 of 1,200 (0.92\,\%) on Moshi. The same threshold is then used under every attack. WMAR changes the
decoder, so its thresholds come from unwatermarked answers decoded and re-encoded by its own fine-tuned codec
(Appendix~\ref{app:baselines}).

\paragraph{Sampling noise in the FPR.}
Without attack, the FPR on the unwatermarked test answers is close to 1\,\% by construction. Under attack, nothing ties it
to 1\,\%, so it can move in either direction. With 600 clips, an FPR of 0.5\,\% corresponds to three clips and has a
95\,\% Wilson interval of 0.2 to 1.5\,\%.

\subsection{Strength and cost}
\label{app:strength}
\paragraph{Validation grids.}
The budget is the validation cost of KGW with $\delta=2$ on the same streams, and the strength is the largest value on the
grid whose validation cost stays within it. The grids are $\{0.5, 0.6, 0.7, 0.8, 0.9\}$ on Moshi, $\{0.7, 0.9, 1.1\}$ on CosyVoice3 and $\{0.7, 0.8, \ldots, 1.2\}$ on MOSS-TTS. The grids differ in size, but each one contains values on both sides of the point where the cost reaches the budget, and
that point is all the rule needs. At the selected strengths, the cost is 1.54 nats per frame against a budget of 1.76 on
Moshi, 0.60 against 0.61 on CosyVoice3, and 2.45 against 2.46 on MOSS-TTS. A coarser grid can only select a smaller $\delta$ than a finer one, which works against our
watermark. The ablations of Table~\ref{tab:ablation} use the same rule on their own grids.

\paragraph{Predicted cost.}
By Proposition~\ref{prop:cost}, when the sampler only divides the logits by a temperature $T$, the first-order cost is
$\tfrac12(\delta/T)^2\sum_{s\in\mathcal S}a_s^\top B_sa_s$ nats per frame. The sum is at most $|\mathcal S|$ because of the
constraint in Eq.~\eqref{eq:solve}, and our solutions reach this limit on every stream except the first stream of Moshi,
where an amplitude bound is active ($a_1^\top B_1a_1=0.92$). On Moshi ($T=0.8$), the prediction is 1.50 nats per frame at
$\delta=0.7$, against 1.54 measured. On MOSS-TTS ($T=1.7$), it is 2.24 at $\delta=0.9$, against 2.45 measured. On
CosyVoice3 ($T=1$, one stream), it is 0.41 where 0.60 is measured, so the prediction would reach KGW's budget only at
$\delta\approx1.1$, whereas the measured cost reaches it at $\delta=0.9$ (Figure~\ref{fig:tts_cost}). On CosyVoice3, the
difference comes from the sampler, which keeps only the most likely tokens, redraws a token that repeats too often, and
renormalizes over speech tokens after the bias is added. The first-order cost is therefore a guide to the design, and
the strength must be chosen from the cost measured on validation generations, as we do for every model.

\begin{figure}[h]
\centering
\begin{subfigure}[t]{0.355\linewidth}
\centering
\includegraphics[width=\linewidth]{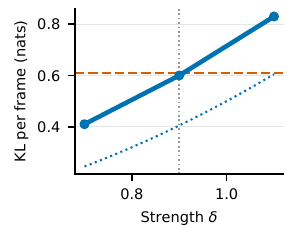}
\caption{CosyVoice3}
\label{fig:tts_cost_a}
\end{subfigure}\hspace{0.03\linewidth}
\begin{subfigure}[t]{0.355\linewidth}
\centering
\includegraphics[width=\linewidth]{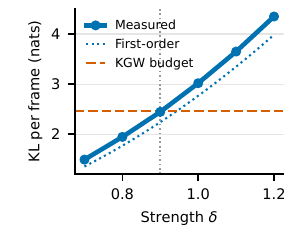}
\caption{MOSS-TTS}
\label{fig:tts_cost_b}
\end{subfigure}
\caption{\textbf{Cost of our watermark on the TTS models.} The plots show the measured cost against $\delta$ on the
validation set, together with the first-order prediction, KGW's budget and the selected $\delta$.}
\label{fig:tts_cost}
\end{figure}

\subsection{Attacks}
\label{app:attacks}
Every attack operates on the 24\,kHz waveform and keeps the length of the clip.

\paragraph{Codec attacks.} A codec pass encodes and decodes the
waveform with the codec of Table~\ref{tab:codec_list}, resampling to the codec's rate and back where it differs, and each
pass starts from the output of the previous one. The passes are stored as 32-bit floating point, so nothing is requantized
between them. SNAC's decoder is stochastic and is seeded per clip. SpeechTokenizer is a 16\,kHz model, but neither its
code nor the attack resamples the input, so it encodes the 24\,kHz waveform as it is, as in the attack suite
released with HiPT \citep{hipt}. This is the same for every method.

\paragraph{Signal-processing attacks.} The signal-processing attacks are applied once, with a seed per clip and attack,
and their settings are the rows of Table~\ref{tab:dsp}.
\begin{table}[h]
\caption{\textbf{Codecs.} The table lists the checkpoint and the setting of each codec used in the attacks.}
\label{tab:codec_list}
\centering\small
\begin{tabular}{l>{\raggedright\arraybackslash}p{0.42\linewidth}l}
\toprule
Codec & Implementation & Setting\\
\midrule
Mimi & Mimi of Moshi (\texttt{kyutai/moshiko-pytorch-bf16}) & 8 quantizers\\
CosyVoice3 native & speech tokenizer, flow-matching decoder and HiFT vocoder & clip as its own prompt\\
MOSS-TTS native & MOSS audio tokenizer & 32 quantizers\\
EnCodec & \texttt{facebook/encodec\_24khz} & 6\,kbps\\
SpeechTokenizer & \texttt{speechtokenizer\_hubert\_avg} & all quantizers, 24\,kHz input\\
SNAC & \texttt{hubertsiuzdak/snac\_24khz} & all quantizers\\
DAC16, DAC24, DAC44 & DAC 16, 24 and 44.1\,kHz models & all quantizers\\
Opus & ffmpeg, \texttt{libopus} & 24\,kbps\\
MP3 & ffmpeg, \texttt{libmp3lame} & 64\,kbps\\
AAC & ffmpeg, native AAC encoder & 64\,kbps\\
\bottomrule
\end{tabular}
\end{table}

\subsection{Baselines}
\label{app:baselines}
\paragraph{KGW.}
KGW adds $\delta=2$ to the logits of a green list that holds a quarter of the vocabulary. As in WMAR \citep{wmar}, one
green list, drawn once from a fixed seed, serves every frame, so the list does not depend on the previous tokens. On every
language model, KGW runs on the same streams as our watermark. Its detector counts the green tokens among the recovered
tokens, counting each distinct token once per stream, pools the counts over the watermarked streams, and computes a
$z$-score. The $z$-score is not centered on unwatermarked audio, so its threshold is set on the calibration answers like
ours.

\paragraph{WMAR.}
WMAR keeps KGW's watermark and fine-tunes Mimi so that more tokens survive decoding and re-encoding. We apply the fine-tuned
weights released by its authors, with (FT+Augs) and without (FT) augmentations during fine-tuning, to the stock Mimi
checkpoint. A WMAR answer is the KGW answer to the same prompt, with the same tokens, decoded by the fine-tuned decoder, so
WMAR has exactly KGW's cost. Its detector re-encodes the audio with the matching fine-tuned encoder and then applies KGW's
detector. Because the fine-tuned decoder changes the audio, the thresholds and the unwatermarked reference for quality come
from unwatermarked answers decoded and re-encoded by the same fine-tuned codec.

\paragraph{Post-hoc methods.}
AudioSeal, WavMark, Timbre and CRAW embed their watermark into the unwatermarked test answers with the released weights.
AudioSeal is scored by the fraction of frames it flags, and the other three by the bit accuracy of a fixed message. Their
thresholds follow the same rule as ours, on the same calibration clips. Bit-accuracy scores take few distinct values, so
the achieved calibration FPR lies between 0.08 and 0.9\,\% rather than at 1\,\%.

\paragraph{Aligned-IS.}
\label{app:alignedis}
Aligned-IS \citep{alignedis} is a distortion-free token-level watermark for speech language models. It groups the
vocabulary into $h$ clusters of nearly equal size by k-means on the token embeddings, so that retokenization keeps a token
within its cluster more often than it keeps the token itself. At each step, the key and the previous token select one
cluster, and the sampler moves probability toward that cluster in a way that leaves the sampling distribution unchanged on
average over keys. The detector counts the tokens that fall in their selected cluster.

The released code covers SpiritLM and SpeechGPT, which generate a single stream of about 500 units, so we ported it to
Moshi. The port follows the released code in the clustering, the keying, the reweighting and the detector, and agrees with
it in unit tests to numerical precision. As in the released code, the reweighting is applied before temperature and top-$k$.
For Moshi, we fit one clustering per stream on Mimi's codebook vectors, watermark all eight streams with a separate key per
stream, and pool the counts over the streams. The threshold is set on the calibration answers, as for every other method.

Table~\ref{tab:alignedis} gives the results. With the authors' setting of $h=20$ clusters, Aligned-IS is detected in
4.7\,\% of the test answers at identity, where 1.7\,\% of the unwatermarked answers are flagged, and in 2.0\,\% after eight
Mimi passes. As expected of a distortion-free method, it leaves quality unchanged, with a paired UTMOSv2 difference of
$+0.02$ on the test set. On the validation set, more clusters raise the detection at identity, to 32.0\,\% with 50 clusters and
50.0\,\% with 100, but after eight Mimi passes every setting stays at or below 2.4\,\%. Four variants of the setting with $h=20$ do not change this either. They
read only streams 1--4, apply the reweighting after temperature and top-$k$, cluster the input embeddings of the language
model instead of Mimi's codebook vectors, or key the selection on the clusters of the previous two tokens. The second and the last
variants are our own modifications. On SpiritLM, the original paper reports a TPR of 96\,\% without attack, close to the 99\,\% of
KGW. Our setting differs from theirs in several ways. Moshi generates eight streams of 2,048 tokens rather than one of about
500, it samples at temperature 0.8 with top-$k=250$, and the previous token, which keys the selection, survives Mimi's
retokenization in only 17--72\,\% of the counted frames, depending on the stream (Table~\ref{tab:selfloops}). We have not
established which of these differences explains the gap. Because Aligned-IS stays close to its false-positive rate on
Moshi, we do not include it in the main comparison.
\begin{table}[h]
\caption{\textbf{Aligned-IS on Moshi.} TPR (\%) at the fixed threshold without attack (Identity), after one and eight
Mimi passes, and after eight EnCodec passes. FPR is the rate on the unwatermarked answers of the same set without attack.
The indented variants use $h=20$, and KGW is shown for reference.}
\label{tab:alignedis}
\centering\small\setlength{\tabcolsep}{4pt}
\begin{tabular}{llccccc}
\toprule
Set & Method & FPR & Identity & Mimi $\times1$ & Mimi $\times8$ & EnCodec $\times8$\\
\midrule
Test & Aligned-IS, $h=20$ & 1.7 & 4.7 & 3.3 & 2.0 & 0.5\\
Test & KGW & 1.5 & 78.5 & 64.5 & 8.3 & 16.5\\
\midrule
Validation & Aligned-IS, $h=20$ & 2.0 & 4.4 & 4.0 & 2.4 & 0.4\\
Validation & Aligned-IS, $h=50$ & 1.2 & 32.0 & 9.2 & 2.4 & 1.6\\
Validation & Aligned-IS, $h=100$ & 0.8 & 50.0 & 14.0 & 1.2 & 6.4\\
Validation & \quad streams 1--4 only & 1.2 & 3.2 & 2.0 & 2.0 & 0.4\\
Validation & \quad reweighting after top-$k$ & 2.0 & 2.4 & 0.8 & 1.2 & 0.4\\
Validation & \quad language-model embeddings & 2.0 & 31.2 & 8.0 & 0.0 & 0.0\\
Validation & \quad keyed on two previous clusters & 0.4 & 10.8 & 2.8 & 0.8 & 0.8\\
Validation & KGW & 0.8 & 80.8 & 63.2 & 8.0 & 15.6\\
\bottomrule
\end{tabular}
\end{table}

\subsection{Speech quality}
\label{app:qualityprotocol}
\paragraph{Quality scoring.}
For each clip, the speech level is the 95th percentile of 20\,ms frame levels. Leading and trailing regions more than
35\,dB below it, or below $-60$\,dBFS, are removed with a 200\,ms margin. Silence inside the clip is kept. The trim is
applied only before quality scoring. UTMOSv2 averages five seeded evaluations per clip.

\paragraph{Empty answers.}
An answer of Moshi is empty when its text stream contains no word once byte tokens such as \texttt{<0x14>} are removed.
All such answers end at the minimum length of 38 frames (3\,s). Empty answers make up 15.3\,\% of the unwatermarked test
answers, 12.8\,\% for KGW and WMAR, which share KGW's tokens, and 7.5\,\% for ours, and 13.1\,\% of the calibration clips.

\FloatBarrier

\section{Additional results}
\label{app:results}
This appendix reports the codecs that preserve the waveform (Appendix~\ref{app:codecs}), the TPR after every pass
(Appendix~\ref{app:passes}), the signal-processing attacks (Appendix~\ref{app:dsp}), the false positives
(Appendix~\ref{app:fpr}), the effect of clip length (Appendix~\ref{app:duration}) and of empty answers
(Appendix~\ref{app:nonempty}), the paired differences in speech quality (Appendix~\ref{app:quality}), and the full
ablation (Appendix~\ref{app:ablation}).

\subsection{High-fidelity codecs}
\label{app:codecs}
Table~\ref{tab:hifi} lists the codecs that preserve the waveform closely. The post-hoc methods, which embed into the
waveform, are strongest here, although WavMark is lost after the DAC models and Opus, and Timbre after DAC44. Our
watermark keeps at least 90.7\,\% on every model and codec here except DAC24 on Moshi (64.0\,\%), and KGW and WMAR stay
below it in every cell.

\begin{table}[h]
\caption{\textbf{Detection after eight passes through high-fidelity codecs.} TPR (\%) on the test sets at the fixed
threshold, $\pm$ the half-width of the Wilson 95\,\% interval. DAC24 and DAC44 are the 24 and 44.1\,kHz DAC
models. Opus runs at 24\,kbps, and MP3 and AAC at 64\,kbps. Bold is as defined in Section~\ref{sec:experiments}, within
each model. None of these attacks flags more than 5\,\% of unwatermarked answers, so no cell is excluded from the bold as
in Table~\ref{tab:moshi_codecs}.}
\label{tab:hifi}
\centering\small\setlength{\tabcolsep}{4pt}
\input{tables/r_a_hifi_codecs}
\end{table}

\subsection{Detection after every pass}
\label{app:passes}
Figures~\ref{fig:passes_moshi}, \ref{fig:passes_cosyvoice3} and~\ref{fig:passes_moss} show the TPR after every pass of every
codec. Our watermark loses detection gradually from pass to pass rather than at the first pass, under every codec
and on every model. AudioSeal's TPR under repeated DAC16 and DAC24 passes rises and falls from one pass to the next.
This pattern has the same shape on all three models, and AudioSeal's FPR on unwatermarked answers stays flat over the same
passes, so it reflects how its watermark score responds to repeated DAC resynthesis rather than noise.

\begin{figure}[p]
\centering
\includegraphics[width=\linewidth]{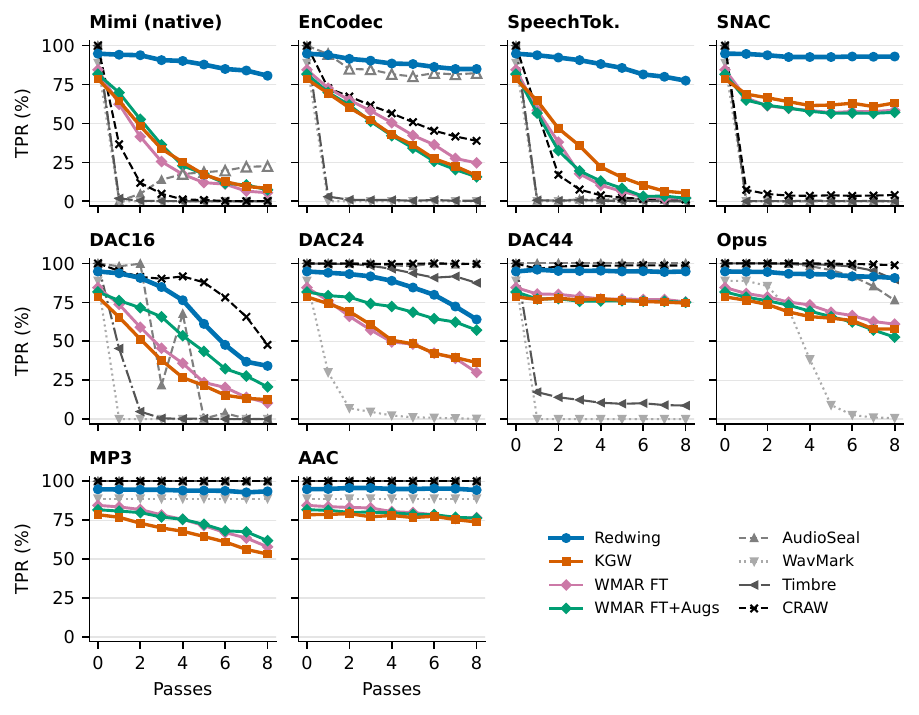}
\caption{\textbf{Detection on Moshi after every pass of each codec.} TPR (\%) at the fixed threshold on the test
set, where pass 0 is the identity condition. Open markers show where the same attack also flags more than 5\,\% of
unwatermarked answers (Table~\ref{tab:fpr}).}
\label{fig:passes_moshi}
\end{figure}

\begin{figure}[p]
\centering
\includegraphics[width=\linewidth]{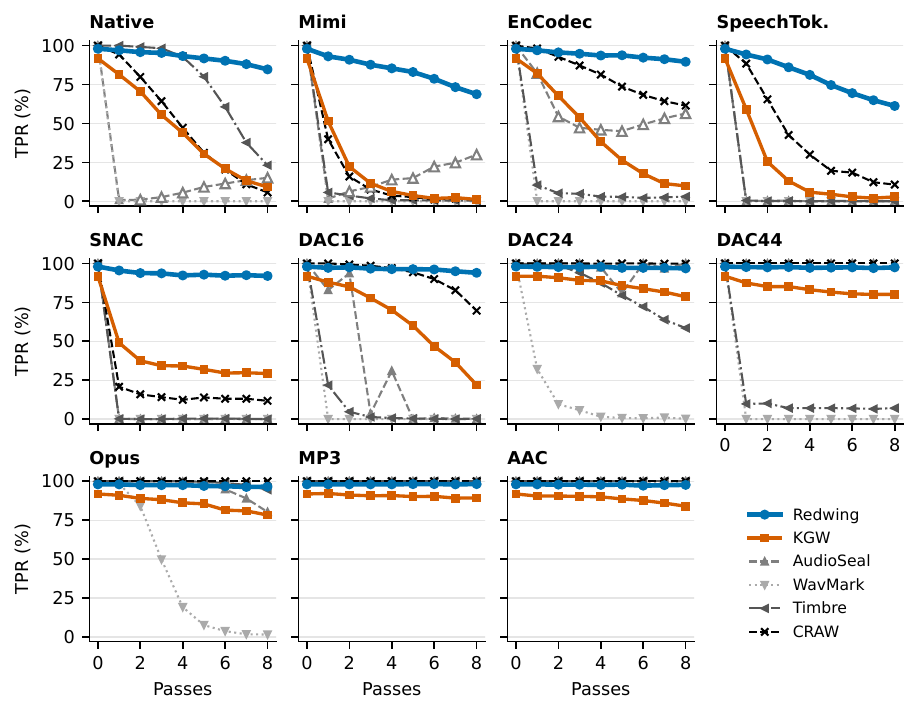}
\caption{\textbf{Detection on CosyVoice3 after every pass of each codec.} TPR (\%) at the fixed threshold on the test
set, where pass 0 is the identity condition. Native is the model's own codec. Open markers show where the same attack also flags more than 5\,\% of
unwatermarked answers (Table~\ref{tab:fpr}).}
\label{fig:passes_cosyvoice3}
\end{figure}

\begin{figure}[p]
\centering
\includegraphics[width=\linewidth]{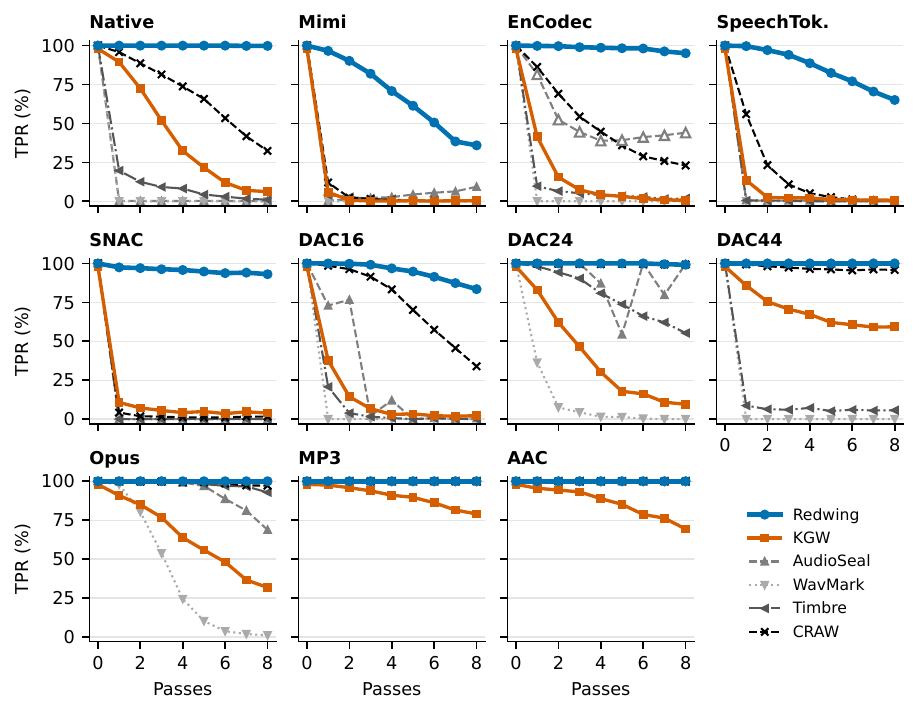}
\caption{\textbf{Detection on MOSS-TTS after every pass of each codec.} TPR (\%) at the fixed threshold on the test
set, where pass 0 is the identity condition. Native is the model's own codec. Open markers show where the same attack also flags more than 5\,\% of
unwatermarked answers (Table~\ref{tab:fpr}).}
\label{fig:passes_moss}
\end{figure}

\subsection{Signal-processing attacks}
\label{app:dsp}
Table~\ref{tab:dsp} lists the TPR after each signal-processing attack at the fixed threshold. The attacks fall
into two groups. The first group changes the sound but keeps its timing, as noise, filtering, quantization, gain and
reverberation do. Under these attacks most watermarks remain detectable, and the post-hoc methods, which are trained
against such distortions, are usually the strongest. \ours stays close to its TPR without attack under the milder
settings, and it loses the most under 10\,dB noise and 4-bit quantization on Moshi. Pitch shifts are the
exception in this group. They remove every post-hoc watermark, while ours keeps between 67 and 100\,\%.

The second group moves the audio in time. Our detector searches over offsets of up to two frames, so a shift by a whole
frame costs nothing, as for 80\,ms on Moshi and MOSS-TTS and 40\,ms on CosyVoice3, whose frames are 40\,ms long. A shift
by a fraction of a frame weakens the watermark strongly on Moshi and MOSS-TTS, and less on CosyVoice3. Cropping removes
an unknown number of frames from the start of the clip, which the offset search does not cover, and it removes almost
all of our watermark. Speed changes of 2\,\% or more remove most of every watermark except KGW's on CosyVoice3. KGW and
WMAR suffer from the same time-domain attacks as ours, while the post-hoc methods are unaffected by shifts and crops.
Under all of these attacks, the FPR of the token-level detectors on unwatermarked answers stays at or below 3.8\,\%,
that of Timbre and CRAW at or below 2.2\,\%, and that of WavMark at or below 0.2\,\%. AudioSeal's threshold, however, also flags many unwatermarked answers
after quantization, between 6.8 and 99.3\,\% depending on the model and the bit depth, and 6.5\,\% after smoothing on
CosyVoice3. Its TPR under these attacks therefore says little about its watermark, and we mark these cells in
Table~\ref{tab:dsp}.

\begin{table}[p]
\caption{\textbf{Detection after signal-processing attacks.} TPR (\%) on the test sets at the fixed threshold,
rounded to whole percent. The Wilson 95\,\% interval of every value is at most 8 percentage points wide. Within each row and model,
bold marks the highest TPR and every TPR whose interval overlaps it, and nothing when all overlap. $^\ast$The same attack
also flags more than 5\,\% of unwatermarked answers, so the cell is excluded from the bold. WavMark does not embed into 67 of Moshi's answers, which caps its TPR there at 89\,\%.}
\label{tab:dsp}
\centering\scriptsize\setlength{\tabcolsep}{2.5pt}
\input{tables/r_a_dsp}
\end{table}

\subsection{False positives}
\label{app:fpr}
\paragraph{Unwatermarked test answers.} Table~\ref{tab:fpr} gives the FPR on the unwatermarked test answers after each attack, at each method's fixed
threshold. For every token-domain method and for WavMark, Timbre and CRAW, it stays at or below 2.5\,\% under every
codec, within the sampling noise of 600 clips (Appendix~\ref{app:data}). AudioSeal is the exception. Some codecs
raise its score on unwatermarked audio as well, so that after eight passes its threshold flags 10.5\,\% of Moshi's
answers under Mimi, 15.1 and 20.8\,\% of CosyVoice3's under Mimi and its own codec, and between 45.9 and 75.3\,\% of the
answers of all three models under EnCodec. Under these codecs, the TPR of AudioSeal in Tables~\ref{tab:moshi_codecs}
and~\ref{tab:tts_codecs} therefore says little about its watermark.

\paragraph{Human recordings.} Our Moshi detector also rarely fires on human speech. On the 1,000 WildVoice recordings that serve as Moshi's prompts,
it flags 0.5\,\% without attack and at most 1.5\,\% after any of the 116 attack conditions, at the same fixed threshold.
KGW flags 0.7\,\% and at most 1.2\,\%. We scored the post-hoc detectors on the same recordings without attack and after
eight passes of each of the ten codecs (Table~\ref{tab:human}). On these eleven conditions, ours and KGW flag at most 0.8
and 1.0\,\%, and WavMark, Timbre and CRAW at most 1.6\,\%. AudioSeal's threshold, in contrast, does not carry over to human
speech. It flags 4.3\,\% of the recordings without attack, against 0.7\,\% of Moshi's unwatermarked answers, and 42.2, 79.1
and 7.0\,\% after eight passes of Mimi, EnCodec and Opus.
\begin{table}[h]
\caption{\textbf{False positives on human speech.} FPR (\%) on the 1,000 WildVoice recordings that serve as Moshi's prompts,
at each method's fixed threshold for Moshi, without attack (Identity) and after eight passes of each codec. $^\ast$Values
above 5\,\%.}
\label{tab:human}
\centering\small\setlength{\tabcolsep}{3pt}
\resizebox{\linewidth}{!}{\input{tables/r_a_human}}
\end{table}

\begin{table}[h]
\caption{\textbf{False positives after attack.} FPR (\%) on the unwatermarked test answers at each method's fixed
threshold, without attack (Identity) and after eight passes. Native is each model's own codec, which is Mimi on
Moshi. $^\ast$Values above 5\,\%.}
\label{tab:fpr}
\centering\small\setlength{\tabcolsep}{3pt}
\resizebox{\linewidth}{!}{\input{tables/r_a_fpr}}
\end{table}

\subsection{Clip length}
\label{app:duration}
The detection score sums over the frames of a clip, so a short clip carries less evidence. Table~\ref{tab:duration} splits
Moshi's test answers into bins of 50 frames, or 4\,s, of length. The first bin starts at 38 frames (3\,s), the minimum
length of an answer, and the last ends at the limit of 200 frames (16\,s). Without attack, our watermark is detected in
every answer longer than 100 frames (8\,s), and 29 of its 31 misses fall in the shortest bin. After eight Mimi passes, our
TPR rises from 39.5\,\% in the shortest bin to 100\,\% in the longest, whereas KGW stays at or below 13.5\,\% at every
length. The FPR on unwatermarked answers stays at or below 2.2\,\% in every bin.

\begin{table}[h]
\caption{\textbf{Detection on Moshi by answer length.} TPR (\%) at the fixed threshold on the test set $\pm$ the
95\,\% Wilson half-width, by answer length in frames. Each bin covers 50 frames, or 4\,s, and All is the TPR on all
test answers.}
\label{tab:duration}
\centering\small\setlength{\tabcolsep}{8pt}
\input{tables/r_a_duration}
\end{table}

\subsection{Non-empty answers}
\label{app:nonempty}
\paragraph{Detection.} Removing Moshi's empty answers from the watermarked test answers, at the same thresholds, raises the identity TPR of every
method: ours from 94.8 to 97.1\,\%, KGW from 78.5 to 88.3\,\%, and WMAR from 84.5 to 94.8\,\% (FT) and from 81.7 to
91.4\,\% (FT+Augs). After eight Mimi passes, ours rises from 80.7 to 83.8\,\% and KGW from 8.3 to 12.2\,\%. The FPR on
non-empty unwatermarked answers stays at 1.2\,\% for ours and 1.4\,\% for KGW. Re-deriving the thresholds from the
non-empty calibration clips moves the TPR by at most a few points, for example KGW's identity TPR from 88.3 to 91.0\,\%.
Empty answers therefore lower the TPR of KGW and WMAR more than ours, but do not change the comparison after resynthesis.

\paragraph{Quality.} On all answers, the quality predictors give our watermark a higher mean than the unwatermarked model (UTMOSv2 3.00 against
2.91), because it produces half as many empty answers and its empty answers contain speech-like audio (UTMOSv2 2.36 against
1.60). On prompts where the answers are non-empty, our watermark and KGW lower UTMOSv2 by about the same amount
(Appendix~\ref{app:quality}).

\subsection{Speech quality}
\label{app:quality}
Table~\ref{tab:quality_posthoc} gives the quality of the post-hoc methods, which do not change the sampling distribution
and therefore have no cost.

\begin{table}[h]
\caption{\textbf{Speech quality of the post-hoc methods.} Mean $\pm$ 95\,\% CI on the test sets, as in
Table~\ref{tab:quality}. Bold as in Table~\ref{tab:quality}.}
\label{tab:quality_posthoc}
\centering\small\setlength{\tabcolsep}{4pt}
\input{tables/r_a_quality_posthoc}
\end{table}

Table~\ref{tab:quality_delta} gives the difference of each method from the unwatermarked answer to the same prompt. On
Moshi, every row uses the 448 test prompts on which the unwatermarked model, ours and KGW all gave a non-empty answer. WMAR
shares KGW's tokens and the post-hoc methods watermark the unwatermarked answers, so these prompts are non-empty for them as
well. WMAR is compared with the unwatermarked answers decoded by its own fine-tuned decoder, so that its row shows the
effect of the watermark alone. On Moshi, our watermark lowers UTMOSv2 by 0.09, about as much as KGW and WMAR, and raises
DNSMOS Pro by 0.10. Section~\ref{sec:results} reports a drop of 0.10, which compares the means over each method's own
non-empty answers (Table~\ref{tab:quality}). Here, both answers of a pair come from the same prompt, which gives 0.09. On CosyVoice3, it changes every predictor by at most 0.01. On MOSS-TTS, it lowers UTMOSv2 by 0.12,
against 0.09 for KGW, and every other predictor by at most 0.04. Among the post-hoc methods, CRAW lowers every predictor on every model except NISQA-TTS on Moshi, and WavMark
lowers UTMOSv2 on Moshi by 0.27.

\begin{table}[h]
\caption{\textbf{Paired differences in speech quality.} Mean difference from the unwatermarked answer to the same prompt
$\pm$ the 95\,\% CI on the test sets. On Moshi, only prompts with non-empty answers from the unwatermarked model, ours and
KGW are used, and on the TTS models all clips. $^\dagger$Compared with the unwatermarked answers decoded by the same
fine-tuned decoder.}
\label{tab:quality_delta}
\centering\footnotesize\setlength{\tabcolsep}{3pt}
\input{tables/r_a_quality_delta}
\end{table}

\subsection{Ablations}
\label{app:ablation}
Table~\ref{tab:ablation} gives the full ablation of Section~\ref{sec:results}, with the TPR after every foreign codec and
the quality and cost of each variant.

\begin{table}[h]
\caption{\textbf{Ablations on Moshi.} TPR (\%) at the fixed threshold on the test set $\pm$ the 95\,\% Wilson half-width, and
UTMOSv2 on non-empty answers $\pm$ the 95\,\% CI. Each variant removes one component and runs at its own budget-matched
$\delta$, and KL is its cost per frame. The random basis uses 16 random orthonormal directions, and the transition basis
the leading singular vectors of the transition matrix $P$. The random function is one fixed random $g=h$ per stream,
without basis or solve. $^\ddagger$The same watermarked answers as
ours, but the detector scores them with the embedding function $g$ in place of the solved detection function $h$, which
tests whether a separate detection function is needed. Bold is as defined in Section~\ref{sec:experiments}.}
\label{tab:ablation}
\centering\footnotesize\setlength{\tabcolsep}{2.5pt}
\input{tables/r_t3_ablation}
\end{table}

\FloatBarrier

\section{Extended related work}
\label{app:related}
\paragraph{Post-hoc audio watermarks.}
Most audio watermarks are embedded after generation. WavMark \citep{wavmark}, AudioSeal \citep{audioseal}, Timbre
\citep{timbre} and SilentCipher \citep{singh2024silentcipher} train an embedding network and a detection network jointly, with a set of distortions applied between the
two, so that the watermark survives common edits. CRAW \citep{craw} adds neural-codec resynthesis to these distortions. Latent-Mark \citep{latentmark} targets neural codecs
differently. It optimizes the waveform so that the codec's latent representation shifts in a detectable direction, across
several surrogate codecs.
Post-hoc methods apply to the output of any generator, and they keep their
watermark under codecs that preserve the waveform closely. Benchmarks find that their robustness varies widely across
edits and compression \citep{liu2024audiomarkbench,wen2025sok}, and in our experiments repeated resynthesis through
low-bitrate neural codecs removes most of their watermark (Section~\ref{sec:results}).

\paragraph{Generation-time watermarks.}
Generation-time watermarks are embedded while the model generates, instead of into the finished output. Latent
watermarking \citep{sanroman2024latent} trains the generator on watermarked audio, so that its outputs carry the
watermark. Token-level watermarks instead bias the logits of the tokens that a language model samples.
KGW \citep{kgw} watermarks a text language model by adding a bias to the logits of a keyed subset of the vocabulary, the
green list, and detects an excess of green tokens with a $z$-test. Later work makes the watermark distortion-free or
undetectable \citep{kuditipudi2024robust,hu2024unbiased,christ2024undetectable}, studies its robustness to edits of the
text \citep{zhao2024provable,kirchenbauer2024reliability}, and deploys it in production \citep{dathathri2024synthid}. \citet{liu2024survey} survey this line of work. Because speech language models generate codec tokens,
the same scheme applies to them, but its detector sees only the tokens recovered from the audio. Aligned-IS
\citep{alignedis} is a distortion-free token-level watermark for audio generation. WMAR \citep{wmar} keeps KGW and instead
fine-tunes the codec so that more tokens survive decoding and re-encoding. HiPT \citep{hipt} is closest to our work. It
clusters the vocabulary by how often retokenization confuses two tokens and applies KGW to the cluster labels, so that a
substitution within a cluster leaves the watermark intact. Our basis also comes from the substitutions of retokenization,
but instead of a hard partition it gives every token a real value that varies smoothly over the substitution graph.
In KGW and HiPT, the green list plays two roles. The bias raises the logits of the green tokens, and the detector counts
them. Our embedding and detection functions play the same two roles, but they are real-valued, differ from each other,
and are solved for rather than drawn at random.

\paragraph{Neural codecs and speech language models.}
Neural codecs such as SoundStream \citep{zeghidour2022soundstream}, EnCodec \citep{encodec}, DAC \citep{dac}, SpeechTokenizer \citep{speechtokenizer}, SNAC
\citep{snac} and Mimi \citep{moshi} compress speech into streams of discrete tokens with residual vector quantization, a multi-stage form of the vector
quantization of \citet{vandenoord2017vqvae}, or with variants of it. Mimi, for example, quantizes its first level separately
and distills semantic information into it. Audio language models generate such tokens directly
\citep{borsos2023audiolm,wang2023valle,copet2023musicgen}. Codecs are trained to reconstruct the waveform rather than to
reproduce their own tokens, and re-encoding a decoded clip changes some of them.

\paragraph{Spectral bases on graphs.}
We turn these substitutions into a graph
over the vocabulary. The eigenvectors of its normalized Laplacian with the smallest eigenvalues vary slowly across strongly
connected tokens \citep{chung1997spectral}. This is the property on which spectral clustering and Laplacian eigenmaps
rely \citep{shi2000normalized,belkin2003laplacian,luxburg2007tutorial} and the usual notion of a smooth signal on a graph
\citep{shuman2013emerging}, and these eigenvectors form our basis.

%% file: tables/r_a_hifi_codecs.tex
\begin{tabular}{lrrrrr}
\toprule
Method & DAC24 & DAC44 & Opus & MP3 & AAC \\
\midrule
\multicolumn{6}{l}{\textbf{Moshi}} \\
Redwing & 64.0\,{\scriptsize$\pm$3.8} & 94.8\,{\scriptsize$\pm$1.8} & 90.7\,{\scriptsize$\pm$2.3} & 93.5\,{\scriptsize$\pm$2.0} & 94.3\,{\scriptsize$\pm$1.9} \\
KGW & 36.3\,{\scriptsize$\pm$3.8} & 74.5\,{\scriptsize$\pm$3.5} & 57.8\,{\scriptsize$\pm$3.9} & 53.2\,{\scriptsize$\pm$4.0} & 73.8\,{\scriptsize$\pm$3.5} \\
WMAR FT & 30.0\,{\scriptsize$\pm$3.7} & 75.3\,{\scriptsize$\pm$3.4} & 61.0\,{\scriptsize$\pm$3.9} & 57.8\,{\scriptsize$\pm$3.9} & 76.7\,{\scriptsize$\pm$3.4} \\
WMAR FT+Augs & 57.2\,{\scriptsize$\pm$3.9} & 75.0\,{\scriptsize$\pm$3.5} & 52.7\,{\scriptsize$\pm$4.0} & 61.8\,{\scriptsize$\pm$3.9} & 76.3\,{\scriptsize$\pm$3.4} \\
\midrule
AudioSeal & \textbf{100.0}\,{\scriptsize$\pm$0.3} & \textbf{100.0}\,{\scriptsize$\pm$0.3} & 76.3\,{\scriptsize$\pm$3.4} & \textbf{100.0}\,{\scriptsize$\pm$0.3} & \textbf{100.0}\,{\scriptsize$\pm$0.3} \\
WavMark & 0.2\,{\scriptsize$\pm$0.5} & 0.0\,{\scriptsize$\pm$0.3} & 0.7\,{\scriptsize$\pm$0.7} & 88.8\,{\scriptsize$\pm$2.5} & 88.8\,{\scriptsize$\pm$2.5} \\
Timbre & 87.5\,{\scriptsize$\pm$2.6} & 8.7\,{\scriptsize$\pm$2.3} & 89.7\,{\scriptsize$\pm$2.4} & \textbf{100.0}\,{\scriptsize$\pm$0.3} & \textbf{100.0}\,{\scriptsize$\pm$0.3} \\
CRAW & \textbf{99.7}\,{\scriptsize$\pm$0.6} & 98.7\,{\scriptsize$\pm$1.0} & \textbf{98.8}\,{\scriptsize$\pm$0.9} & \textbf{100.0}\,{\scriptsize$\pm$0.3} & \textbf{100.0}\,{\scriptsize$\pm$0.3} \\
\midrule[\heavyrulewidth]\addlinespace[2pt]
\multicolumn{6}{l}{\textbf{CosyVoice3}} \\
Redwing & 96.8\,{\scriptsize$\pm$1.4} & 97.5\,{\scriptsize$\pm$1.3} & 96.5\,{\scriptsize$\pm$1.5} & 98.2\,{\scriptsize$\pm$1.1} & 97.7\,{\scriptsize$\pm$1.2} \\
KGW & 78.6\,{\scriptsize$\pm$3.3} & 80.1\,{\scriptsize$\pm$3.2} & 78.3\,{\scriptsize$\pm$3.3} & 89.3\,{\scriptsize$\pm$2.5} & 83.9\,{\scriptsize$\pm$2.9} \\
\midrule
AudioSeal & \textbf{100.0}\,{\scriptsize$\pm$0.3} & \textbf{100.0}\,{\scriptsize$\pm$0.3} & 80.2\,{\scriptsize$\pm$3.2} & \textbf{100.0}\,{\scriptsize$\pm$0.3} & \textbf{100.0}\,{\scriptsize$\pm$0.3} \\
WavMark & 0.2\,{\scriptsize$\pm$0.5} & 0.0\,{\scriptsize$\pm$0.3} & 1.5\,{\scriptsize$\pm$1.0} & \textbf{99.7}\,{\scriptsize$\pm$0.6} & \textbf{99.7}\,{\scriptsize$\pm$0.6} \\
Timbre & 58.5\,{\scriptsize$\pm$3.9} & 7.0\,{\scriptsize$\pm$2.1} & 94.5\,{\scriptsize$\pm$1.8} & \textbf{100.0}\,{\scriptsize$\pm$0.3} & \textbf{100.0}\,{\scriptsize$\pm$0.3} \\
CRAW & \textbf{99.8}\,{\scriptsize$\pm$0.5} & \textbf{100.0}\,{\scriptsize$\pm$0.3} & \textbf{100.0}\,{\scriptsize$\pm$0.3} & \textbf{100.0}\,{\scriptsize$\pm$0.3} & \textbf{100.0}\,{\scriptsize$\pm$0.3} \\
\midrule[\heavyrulewidth]\addlinespace[2pt]
\multicolumn{6}{l}{\textbf{MOSS-TTS}} \\
Redwing & \textbf{99.2}\,{\scriptsize$\pm$0.8} & \textbf{100.0}\,{\scriptsize$\pm$0.3} & \textbf{100.0}\,{\scriptsize$\pm$0.3} & \textbf{100.0}\,{\scriptsize$\pm$0.3} & \textbf{100.0}\,{\scriptsize$\pm$0.3} \\
KGW & 9.5\,{\scriptsize$\pm$2.4} & 59.3\,{\scriptsize$\pm$3.9} & 31.7\,{\scriptsize$\pm$3.7} & 79.0\,{\scriptsize$\pm$3.3} & 69.3\,{\scriptsize$\pm$3.7} \\
\midrule
AudioSeal & \textbf{99.7}\,{\scriptsize$\pm$0.6} & \textbf{100.0}\,{\scriptsize$\pm$0.3} & 68.8\,{\scriptsize$\pm$3.7} & \textbf{100.0}\,{\scriptsize$\pm$0.3} & \textbf{100.0}\,{\scriptsize$\pm$0.3} \\
WavMark & 0.0\,{\scriptsize$\pm$0.3} & 0.0\,{\scriptsize$\pm$0.3} & 1.0\,{\scriptsize$\pm$0.9} & \textbf{99.7}\,{\scriptsize$\pm$0.6} & \textbf{99.7}\,{\scriptsize$\pm$0.6} \\
Timbre & 55.3\,{\scriptsize$\pm$4.0} & 5.7\,{\scriptsize$\pm$1.9} & 92.8\,{\scriptsize$\pm$2.1} & \textbf{100.0}\,{\scriptsize$\pm$0.3} & \textbf{100.0}\,{\scriptsize$\pm$0.3} \\
CRAW & \textbf{99.3}\,{\scriptsize$\pm$0.7} & 95.8\,{\scriptsize$\pm$1.6} & 97.2\,{\scriptsize$\pm$1.4} & \textbf{100.0}\,{\scriptsize$\pm$0.3} & \textbf{100.0}\,{\scriptsize$\pm$0.3} \\
\bottomrule
\end{tabular}

%% file: tables/r_a_dsp.tex
\begin{tabular}{lrrrrrrrr|rrrrrr|rrrrrr}
\toprule
 & \multicolumn{8}{c}{Moshi} & \multicolumn{6}{c}{CosyVoice3} & \multicolumn{6}{c}{MOSS-TTS} \\ \cmidrule(lr){2-9}\cmidrule(lr){10-15}\cmidrule(lr){16-21}
Attack & \rotatebox{90}{Redwing} & \rotatebox{90}{KGW} & \rotatebox{90}{WMAR FT} & \rotatebox{90}{WMAR FT+Augs} & \rotatebox{90}{AudioSeal} & \rotatebox{90}{WavMark} & \rotatebox{90}{Timbre} & \rotatebox{90}{CRAW} & \rotatebox{90}{Redwing} & \rotatebox{90}{KGW} & \rotatebox{90}{AudioSeal} & \rotatebox{90}{WavMark} & \rotatebox{90}{Timbre} & \rotatebox{90}{CRAW} & \rotatebox{90}{Redwing} & \rotatebox{90}{KGW} & \rotatebox{90}{AudioSeal} & \rotatebox{90}{WavMark} & \rotatebox{90}{Timbre} & \rotatebox{90}{CRAW} \\
\midrule
None & 95 & 78 & 84 & 82 & \textbf{100} & 89 & \textbf{100} & \textbf{100} & 98 & 92 & \textbf{100} & \textbf{100} & \textbf{100} & \textbf{100} & \textbf{100} & 98 & \textbf{100} & \textbf{100} & \textbf{100} & \textbf{100} \\
\multicolumn{21}{l}{\textit{White noise (SNR)}} \\
\quad 40 dB & 95 & 79 & 76 & 82 & \textbf{100} & 88 & \textbf{100} & \textbf{100} & 98 & 91 & \textbf{100} & 97 & \textbf{99} & \textbf{100} & \textbf{100} & 83 & \textbf{100} & 95 & \textbf{99} & \textbf{100} \\
\quad 30 dB & 95 & 78 & 45 & 81 & \textbf{100} & 75 & 89 & \textbf{100} & 98 & 89 & \textbf{100} & 61 & 86 & \textbf{100} & \textbf{100} & 53 & \textbf{100} & 66 & 85 & \textbf{100} \\
\quad 20 dB & 87 & 47 & 1 & 80 & \textbf{100} & 3 & 31 & \textbf{99} & 97 & 84 & 98 & 4 & 39 & \textbf{100} & \textbf{100} & 14 & 97 & 7 & 40 & \textbf{99} \\
\quad 10 dB & 38 & 11 & 0 & 69 & 92 & 0 & 10 & \textbf{96} & 96 & 65 & 70 & 0 & 5 & \textbf{100} & \textbf{98} & 2 & 60 & 0 & 6 & \textbf{98} \\
\multicolumn{21}{l}{\textit{Filtering}} \\
\quad High-pass 500 Hz & 76 & 13 & 36 & 67 & \textbf{100} & 89 & \textbf{100} & \textbf{100} & 97 & 74 & \textbf{100} & \textbf{100} & \textbf{100} & \textbf{100} & \textbf{100} & 20 & \textbf{100} & \textbf{100} & \textbf{100} & \textbf{100} \\
\quad Low-pass 3.5 kHz & 95 & 74 & 82 & 80 & \textbf{100} & 89 & \textbf{100} & \textbf{100} & 98 & 88 & \textbf{100} & \textbf{100} & \textbf{100} & \textbf{100} & \textbf{100} & 66 & \textbf{100} & \textbf{100} & \textbf{100} & \textbf{100} \\
\quad Smoothing 2 ms & 73 & 4 & 1 & 30 & 43 & 22 & \textbf{93} & \textbf{96} & \textbf{98} & 80 & 63$^\ast$ & 14 & 93 & \textbf{99} & \textbf{99} & 1 & 55 & 23 & 88 & 88 \\
\multicolumn{21}{l}{\textit{Quantization}} \\
\quad 8 bit & \textbf{91} & 71 & 37 & 70 & 93$^\ast$ & 44 & 78 & \textbf{89} & 98 & 88 & 100$^\ast$ & 65 & 94 & \textbf{100} & \textbf{100} & 54 & 100$^\ast$ & 67 & 92 & \textbf{100} \\
\quad 6 bit & \textbf{78} & 23 & 2 & 15 & 87$^\ast$ & 0 & 19 & \textbf{80} & 96 & 75 & 100$^\ast$ & 1 & 57 & \textbf{99} & \textbf{99} & 13 & 100$^\ast$ & 1 & 50 & \textbf{97} \\
\quad 4 bit & \textbf{41} & 1 & 2 & 1 & 82$^\ast$ & 0 & 1 & 0 & \textbf{87} & 14 & 99$^\ast$ & 0 & 2 & 2 & \textbf{89} & 1 & 99$^\ast$ & 0 & 2 & 1 \\
\quad $\mu$-law 256 & 95 & 79 & 79 & 81 & \textbf{100} & 89 & \textbf{100} & \textbf{100} & 98 & 91 & \textbf{100} & \textbf{99} & \textbf{100} & \textbf{100} & \textbf{100} & 91 & \textbf{100} & 99 & \textbf{100} & \textbf{100} \\
\quad $\mu$-law 64 & 94 & 68 & 48 & 77 & 96 & 78 & 95 & \textbf{100} & 98 & 89 & \textbf{100} & 87 & \textbf{99} & \textbf{100} & \textbf{100} & 58 & \textbf{100} & 83 & \textbf{100} & \textbf{100} \\
\multicolumn{21}{l}{\textit{Gain}} \\
\quad +6 dB & 93 & 72 & 58 & 53 & \textbf{100} & 89 & \textbf{100} & \textbf{100} & 97 & 91 & \textbf{100} & \textbf{100} & \textbf{100} & \textbf{100} & \textbf{100} & 88 & \textbf{100} & \textbf{100} & \textbf{100} & \textbf{100} \\
\quad $-$6 dB & 92 & 48 & 46 & 43 & \textbf{100} & 89 & \textbf{100} & \textbf{100} & 98 & 92 & \textbf{100} & \textbf{100} & \textbf{100} & \textbf{100} & \textbf{100} & 85 & \textbf{100} & \textbf{100} & \textbf{100} & \textbf{100} \\
\multicolumn{21}{l}{\textit{Reverberation ($T_{60}$)}} \\
\quad 0.3 s & 96 & 75 & 80 & 75 & \textbf{100} & 89 & \textbf{100} & \textbf{100} & 98 & 90 & \textbf{100} & \textbf{100} & \textbf{100} & \textbf{100} & \textbf{100} & 87 & \textbf{100} & \textbf{99} & \textbf{100} & \textbf{100} \\
\quad 0.6 s & 94 & 73 & 77 & 70 & \textbf{100} & 89 & \textbf{100} & \textbf{100} & 98 & 88 & \textbf{100} & 99 & \textbf{100} & \textbf{100} & \textbf{100} & 78 & \textbf{100} & 99 & \textbf{100} & \textbf{100} \\
\multicolumn{21}{l}{\textit{Pitch (semitones)}} \\
\quad $-$2 & \textbf{67} & 11 & 8 & 37 & 1 & 0 & 0 & 0 & \textbf{97} & 77 & 1 & 0 & 0 & 0 & \textbf{95} & 4 & 1 & 0 & 0 & 0 \\
\quad $-$1 & \textbf{73} & 15 & 15 & 48 & 1 & 0 & 0 & 0 & \textbf{97} & 83 & 2 & 0 & 0 & 0 & \textbf{99} & 6 & 2 & 0 & 0 & 0 \\
\quad +1 & \textbf{78} & 13 & 11 & 46 & 1 & 0 & 0 & 0 & \textbf{97} & 86 & 1 & 0 & 0 & 0 & \textbf{100} & 7 & 1 & 0 & 0 & 0 \\
\quad +2 & \textbf{69} & 11 & 9 & 32 & 0 & 0 & 0 & 0 & \textbf{97} & 86 & 0 & 0 & 0 & 0 & \textbf{98} & 5 & 0 & 0 & 0 & 0 \\
\multicolumn{21}{l}{\textit{Speed}} \\
\quad 1.01 & 17 & 15 & 11 & 17 & 0 & 0 & \textbf{100} & 27 & 81 & 58 & 1 & 0 & \textbf{100} & \textbf{99} & 82 & 3 & 1 & 0 & \textbf{100} & 72 \\
\quad 1.02 & 10 & \textbf{13} & 8 & \textbf{16} & 0 & 0 & 0 & 0 & 36 & \textbf{62} & 1 & 0 & 0 & 4 & \textbf{49} & 3 & 1 & 0 & 0 & 1 \\
\quad 1.05 & 2 & 9 & 5 & 11 & \textbf{17} & 0 & 0 & 0 & 8 & \textbf{61} & 9 & 0 & 0 & 0 & \textbf{13} & 2 & \textbf{13} & 0 & 0 & 0 \\
\multicolumn{21}{l}{\textit{Time shift}} \\
\quad 20 ms & 25 & 8 & 3 & 4 & \textbf{100} & 89 & \textbf{100} & \textbf{100} & 74 & 8 & \textbf{100} & \textbf{100} & \textbf{100} & \textbf{100} & 58 & 1 & \textbf{100} & \textbf{100} & \textbf{100} & \textbf{100} \\
\quad 40 ms & 9 & 4 & 6 & 3 & \textbf{100} & 89 & \textbf{100} & \textbf{100} & 98 & 93 & \textbf{100} & \textbf{100} & \textbf{100} & \textbf{100} & 2 & 1 & \textbf{100} & \textbf{100} & \textbf{100} & \textbf{100} \\
\quad 80 ms & 96 & 78 & 84 & 30 & \textbf{100} & 89 & \textbf{100} & \textbf{100} & 98 & 92 & \textbf{100} & \textbf{100} & \textbf{100} & \textbf{100} & \textbf{100} & 93 & \textbf{100} & \textbf{100} & \textbf{100} & \textbf{100} \\
\multicolumn{21}{l}{\textit{Crop}} \\
\quad Half of the clip & 2 & 9 & 6 & 12 & 83 & 79 & \textbf{100} & \textbf{100} & 3 & 31 & 77 & 91 & \textbf{100} & \textbf{100} & 3 & 3 & 75 & 88 & \textbf{100} & \textbf{100} \\
\bottomrule
\end{tabular}

%% file: tables/r_a_human.tex
\begin{tabular}{lrrrrrrrrrrr}
\toprule
 & & \multicolumn{10}{c}{After eight passes} \\ \cmidrule(lr){3-12}
Method & Identity & Mimi & EnCodec & SpeechTok. & SNAC & DAC16 & DAC24 & DAC44 & Opus & MP3 & AAC \\
\midrule
Redwing & 0.5 & 0.7 & 0.5 & 0.1 & 0.6 & 0.3 & 0.8 & 0.6 & 0.6 & 0.7 & 0.7 \\
KGW & 0.7 & 0.8 & 0.5 & 0.2 & 0.7 & 0.4 & 1.0 & 0.5 & 1.0 & 0.7 & 0.4 \\
\midrule
AudioSeal & 4.3 & 42.2$^\ast$ & 79.1$^\ast$ & 2.0 & 2.7 & 4.5 & 3.5 & 4.3 & 7.0$^\ast$ & 4.4 & 3.4 \\
WavMark & 0.0 & 0.0 & 0.0 & 0.0 & 0.0 & 0.0 & 0.0 & 0.0 & 0.0 & 0.0 & 0.0 \\
Timbre & 0.7 & 0.0 & 1.6 & 0.1 & 0.0 & 0.1 & 0.0 & 0.0 & 0.1 & 0.5 & 1.2 \\
CRAW & 0.1 & 0.0 & 0.1 & 0.1 & 0.0 & 0.0 & 0.0 & 0.0 & 0.0 & 0.2 & 0.1 \\
\bottomrule
\end{tabular}

%% file: tables/r_a_fpr.tex
\begin{tabular}{lrrrrrrrrrrrr}
\toprule
 & & \multicolumn{11}{c}{After eight passes} \\ \cmidrule(lr){3-13}
Method & Identity & Native & Mimi & EnCodec & SpeechTok. & SNAC & DAC16 & DAC24 & DAC44 & Opus & MP3 & AAC \\
\midrule
\multicolumn{13}{l}{\textbf{Moshi}} \\
Redwing & 1.0 & 0.5 & -- & 0.3 & 0.3 & 0.5 & 0.0 & 0.3 & 0.8 & 0.7 & 0.2 & 0.8 \\
KGW & 1.5 & 0.3 & -- & 1.7 & 0.3 & 1.2 & 0.5 & 1.2 & 2.5 & 0.5 & 1.2 & 1.0 \\
WMAR FT & 1.3 & 0.0 & -- & 0.5 & 0.3 & 0.3 & 0.2 & 0.5 & 0.5 & 0.3 & 0.2 & 0.3 \\
WMAR FT+Augs & 1.7 & 0.5 & -- & 1.3 & 0.7 & 1.7 & 1.5 & 1.8 & 1.2 & 1.2 & 1.0 & 1.8 \\
\midrule
AudioSeal & 0.7 & 10.5$^\ast$ & -- & 75.3$^\ast$ & 1.0 & 0.7 & 1.8 & 0.8 & 0.8 & 1.8 & 0.3 & 0.7 \\
WavMark & 0.0 & 0.0 & -- & 0.0 & 0.0 & 0.0 & 0.0 & 0.0 & 0.0 & 0.0 & 0.0 & 0.0 \\
Timbre & 0.5 & 0.0 & -- & 0.0 & 0.0 & 0.2 & 0.0 & 0.0 & 0.0 & 0.2 & 0.0 & 0.7 \\
CRAW & 0.2 & 0.0 & -- & 0.3 & 0.0 & 0.0 & 0.0 & 0.0 & 0.2 & 0.2 & 0.2 & 0.2 \\
\midrule[\heavyrulewidth]\addlinespace[2pt]
\multicolumn{13}{l}{\textbf{CosyVoice3}} \\
Redwing & 0.5 & 0.7 & 0.8 & 0.5 & 1.5 & 0.7 & 1.2 & 0.5 & 1.0 & 0.7 & 0.7 & 0.5 \\
KGW & 1.8 & 1.0 & 0.3 & 0.3 & 0.5 & 0.7 & 0.7 & 1.7 & 1.3 & 1.3 & 2.2 & 0.3 \\
\midrule
AudioSeal & 1.0 & 20.8$^\ast$ & 15.1$^\ast$ & 54.1$^\ast$ & 0.8 & 0.8 & 1.0 & 1.8 & 1.3 & 2.7 & 0.8 & 0.8 \\
WavMark & 0.0 & 0.0 & 0.0 & 0.0 & 0.0 & 0.0 & 0.0 & 0.0 & 0.0 & 0.0 & 0.0 & 0.0 \\
Timbre & 0.7 & 0.2 & 0.2 & 0.7 & 0.0 & 0.5 & 0.2 & 0.0 & 0.2 & 0.3 & 1.2 & 1.3 \\
CRAW & 0.5 & 0.3 & 0.2 & 0.2 & 1.3 & 1.0 & 0.5 & 2.0 & 0.2 & 0.3 & 1.8 & 1.3 \\
\midrule[\heavyrulewidth]\addlinespace[2pt]
\multicolumn{13}{l}{\textbf{MOSS-TTS}} \\
Redwing & 0.7 & 0.7 & 0.7 & 1.3 & 1.0 & 1.0 & 1.2 & 0.7 & 0.2 & 0.2 & 0.7 & 1.8 \\
KGW & 0.3 & 0.5 & 0.2 & 0.3 & 0.3 & 0.0 & 0.2 & 0.0 & 0.8 & 0.7 & 0.2 & 0.5 \\
\midrule
AudioSeal & 0.5 & 0.8 & 4.7 & 45.9$^\ast$ & 0.0 & 0.7 & 0.5 & 0.7 & 0.8 & 1.7 & 0.8 & 0.5 \\
WavMark & 0.0 & 0.0 & 0.0 & 0.0 & 0.0 & 0.0 & 0.0 & 0.0 & 0.0 & 0.0 & 0.0 & 0.0 \\
Timbre & 0.8 & 0.2 & 0.2 & 1.2 & 0.2 & 0.0 & 0.0 & 0.2 & 0.0 & 0.2 & 0.3 & 1.2 \\
CRAW & 0.0 & 0.0 & 0.0 & 0.2 & 0.5 & 0.0 & 0.0 & 0.0 & 0.0 & 0.0 & 0.0 & 0.0 \\
\bottomrule
\end{tabular}

%% file: tables/r_a_duration.tex
\begin{tabular}{lrrrrr}
\toprule
Method & 38--50 & 51--100 & 101--150 & 151--200 & All \\
\midrule
\multicolumn{6}{l}{\textit{No attack}} \\
Redwing & 75.6\,{\scriptsize$\pm$7.6} & 98.8\,{\scriptsize$\pm$2.0} & 100.0\,{\scriptsize$\pm$1.3} & 100.0\,{\scriptsize$\pm$1.1} & 94.8\,{\scriptsize$\pm$1.8} \\
KGW & 30.0\,{\scriptsize$\pm$6.8} & 93.6\,{\scriptsize$\pm$4.1} & 99.0\,{\scriptsize$\pm$2.5} & 100.0\,{\scriptsize$\pm$1.0} & 78.5\,{\scriptsize$\pm$3.3} \\
\multicolumn{6}{l}{\textit{After eight Mimi passes}} \\
Redwing & 39.5\,{\scriptsize$\pm$8.7} & 75.8\,{\scriptsize$\pm$6.6} & 96.5\,{\scriptsize$\pm$3.3} & 100.0\,{\scriptsize$\pm$1.1} & 80.7\,{\scriptsize$\pm$3.2} \\
KGW & 1.8\,{\scriptsize$\pm$2.2} & 8.5\,{\scriptsize$\pm$4.7} & 9.6\,{\scriptsize$\pm$5.7} & 13.5\,{\scriptsize$\pm$4.9} & 8.3\,{\scriptsize$\pm$2.2} \\
\bottomrule
\end{tabular}

%% file: tables/r_a_quality_posthoc.tex
\begin{tabular}{lrrrr}
\toprule
Method & UTMOSv2 & DNSMOS Pro & NISQAv2 & NISQA-TTS \\
\midrule
\multicolumn{5}{l}{\textbf{Moshi (non-empty answers)}} \\
Unwatermarked & 3.15\,{\scriptsize$\pm$0.03} & 4.38\,{\scriptsize$\pm$0.04} & 4.64\,{\scriptsize$\pm$0.04} & 3.56\,{\scriptsize$\pm$0.05} \\
AudioSeal & \textbf{3.14}\,{\scriptsize$\pm$0.03} & \textbf{4.37}\,{\scriptsize$\pm$0.04} & \textbf{4.50}\,{\scriptsize$\pm$0.04} & 3.32\,{\scriptsize$\pm$0.05} \\
WavMark & 2.88\,{\scriptsize$\pm$0.04} & 4.26\,{\scriptsize$\pm$0.04} & \textbf{4.54}\,{\scriptsize$\pm$0.04} & 3.39\,{\scriptsize$\pm$0.04} \\
Timbre & \textbf{3.11}\,{\scriptsize$\pm$0.03} & \textbf{4.31}\,{\scriptsize$\pm$0.04} & \textbf{4.55}\,{\scriptsize$\pm$0.05} & 3.30\,{\scriptsize$\pm$0.04} \\
CRAW & 2.92\,{\scriptsize$\pm$0.04} & 4.25\,{\scriptsize$\pm$0.05} & 3.96\,{\scriptsize$\pm$0.07} & \textbf{3.57}\,{\scriptsize$\pm$0.05} \\
\midrule[\heavyrulewidth]\addlinespace[2pt]
\multicolumn{5}{l}{\textbf{CosyVoice3}} \\
Unwatermarked & 2.91\,{\scriptsize$\pm$0.03} & 4.30\,{\scriptsize$\pm$0.03} & 3.78\,{\scriptsize$\pm$0.07} & 3.52\,{\scriptsize$\pm$0.05} \\
AudioSeal & \textbf{2.93}\,{\scriptsize$\pm$0.03} & \textbf{4.33}\,{\scriptsize$\pm$0.03} & \textbf{3.98}\,{\scriptsize$\pm$0.05} & \textbf{3.62}\,{\scriptsize$\pm$0.05} \\
WavMark & \textbf{2.91}\,{\scriptsize$\pm$0.03} & 4.24\,{\scriptsize$\pm$0.03} & 3.70\,{\scriptsize$\pm$0.06} & \textbf{3.56}\,{\scriptsize$\pm$0.05} \\
Timbre & \textbf{2.90}\,{\scriptsize$\pm$0.03} & \textbf{4.29}\,{\scriptsize$\pm$0.02} & 3.75\,{\scriptsize$\pm$0.06} & 3.48\,{\scriptsize$\pm$0.05} \\
CRAW & 2.67\,{\scriptsize$\pm$0.03} & 4.02\,{\scriptsize$\pm$0.04} & 3.30\,{\scriptsize$\pm$0.06} & 3.27\,{\scriptsize$\pm$0.05} \\
\midrule[\heavyrulewidth]\addlinespace[2pt]
\multicolumn{5}{l}{\textbf{MOSS-TTS}} \\
Unwatermarked & 2.85\,{\scriptsize$\pm$0.04} & 4.20\,{\scriptsize$\pm$0.03} & 3.61\,{\scriptsize$\pm$0.07} & 3.28\,{\scriptsize$\pm$0.05} \\
AudioSeal & \textbf{2.86}\,{\scriptsize$\pm$0.04} & \textbf{4.22}\,{\scriptsize$\pm$0.03} & \textbf{3.75}\,{\scriptsize$\pm$0.06} & \textbf{3.30}\,{\scriptsize$\pm$0.06} \\
WavMark & \textbf{2.91}\,{\scriptsize$\pm$0.03} & \textbf{4.21}\,{\scriptsize$\pm$0.03} & 3.59\,{\scriptsize$\pm$0.06} & \textbf{3.35}\,{\scriptsize$\pm$0.05} \\
Timbre & \textbf{2.86}\,{\scriptsize$\pm$0.04} & \textbf{4.18}\,{\scriptsize$\pm$0.03} & 3.54\,{\scriptsize$\pm$0.06} & \textbf{3.30}\,{\scriptsize$\pm$0.05} \\
CRAW & 2.57\,{\scriptsize$\pm$0.03} & 3.86\,{\scriptsize$\pm$0.04} & 3.17\,{\scriptsize$\pm$0.06} & 3.01\,{\scriptsize$\pm$0.05} \\
\bottomrule
\end{tabular}

%% file: tables/r_a_quality_delta.tex
\begin{tabular}{lrrrr}
\toprule
Method & $\Delta$UTMOSv2 & $\Delta$DNSMOS Pro & $\Delta$NISQAv2 & $\Delta$NISQA-TTS \\
\midrule
\multicolumn{5}{l}{\textbf{Moshi (prompts non-empty for all)}} \\
Redwing & $-0.09$\,{\scriptsize$\pm$0.04} & $+0.10$\,{\scriptsize$\pm$0.05} & $0.00$\,{\scriptsize$\pm$0.06} & $-0.03$\,{\scriptsize$\pm$0.06} \\
KGW & $-0.09$\,{\scriptsize$\pm$0.06} & $-0.04$\,{\scriptsize$\pm$0.07} & $-0.15$\,{\scriptsize$\pm$0.08} & $-0.14$\,{\scriptsize$\pm$0.07} \\
WMAR FT$^\dagger$ & $-0.07$\,{\scriptsize$\pm$0.05} & $-0.04$\,{\scriptsize$\pm$0.07} & $-0.12$\,{\scriptsize$\pm$0.07} & $-0.16$\,{\scriptsize$\pm$0.07} \\
WMAR FT+Augs$^\dagger$ & $-0.11$\,{\scriptsize$\pm$0.05} & $-0.05$\,{\scriptsize$\pm$0.07} & $-0.15$\,{\scriptsize$\pm$0.08} & $-0.15$\,{\scriptsize$\pm$0.07} \\
\midrule
AudioSeal & $-0.01$\,{\scriptsize$\pm$0.02} & $-0.01$\,{\scriptsize$\pm$0.01} & $-0.15$\,{\scriptsize$\pm$0.01} & $-0.24$\,{\scriptsize$\pm$0.02} \\
WavMark & $-0.27$\,{\scriptsize$\pm$0.02} & $-0.12$\,{\scriptsize$\pm$0.01} & $-0.10$\,{\scriptsize$\pm$0.02} & $-0.17$\,{\scriptsize$\pm$0.03} \\
Timbre & $-0.03$\,{\scriptsize$\pm$0.02} & $-0.06$\,{\scriptsize$\pm$0.01} & $-0.09$\,{\scriptsize$\pm$0.02} & $-0.25$\,{\scriptsize$\pm$0.03} \\
CRAW & $-0.23$\,{\scriptsize$\pm$0.03} & $-0.12$\,{\scriptsize$\pm$0.02} & $-0.68$\,{\scriptsize$\pm$0.05} & $+0.01$\,{\scriptsize$\pm$0.04} \\
\midrule[\heavyrulewidth]\addlinespace[2pt]
\multicolumn{5}{l}{\textbf{CosyVoice3}} \\
Redwing & $+0.01$\,{\scriptsize$\pm$0.02} & $-0.01$\,{\scriptsize$\pm$0.02} & $-0.01$\,{\scriptsize$\pm$0.05} & $0.00$\,{\scriptsize$\pm$0.04} \\
KGW & $0.00$\,{\scriptsize$\pm$0.02} & $-0.02$\,{\scriptsize$\pm$0.02} & $-0.05$\,{\scriptsize$\pm$0.05} & $-0.01$\,{\scriptsize$\pm$0.04} \\
\midrule
AudioSeal & $+0.03$\,{\scriptsize$\pm$0.02} & $+0.02$\,{\scriptsize$\pm$0.01} & $+0.20$\,{\scriptsize$\pm$0.03} & $+0.11$\,{\scriptsize$\pm$0.02} \\
WavMark & $0.00$\,{\scriptsize$\pm$0.02} & $-0.07$\,{\scriptsize$\pm$0.02} & $-0.09$\,{\scriptsize$\pm$0.04} & $+0.04$\,{\scriptsize$\pm$0.03} \\
Timbre & $0.00$\,{\scriptsize$\pm$0.02} & $-0.01$\,{\scriptsize$\pm$0.01} & $-0.03$\,{\scriptsize$\pm$0.03} & $-0.03$\,{\scriptsize$\pm$0.02} \\
CRAW & $-0.23$\,{\scriptsize$\pm$0.02} & $-0.28$\,{\scriptsize$\pm$0.02} & $-0.48$\,{\scriptsize$\pm$0.05} & $-0.25$\,{\scriptsize$\pm$0.04} \\
\midrule[\heavyrulewidth]\addlinespace[2pt]
\multicolumn{5}{l}{\textbf{MOSS-TTS}} \\
Redwing & $-0.12$\,{\scriptsize$\pm$0.03} & $-0.04$\,{\scriptsize$\pm$0.02} & $-0.01$\,{\scriptsize$\pm$0.05} & $-0.04$\,{\scriptsize$\pm$0.04} \\
KGW & $-0.09$\,{\scriptsize$\pm$0.03} & $-0.01$\,{\scriptsize$\pm$0.02} & $+0.04$\,{\scriptsize$\pm$0.04} & $-0.01$\,{\scriptsize$\pm$0.04} \\
\midrule
AudioSeal & $+0.01$\,{\scriptsize$\pm$0.02} & $+0.02$\,{\scriptsize$\pm$0.01} & $+0.15$\,{\scriptsize$\pm$0.03} & $+0.02$\,{\scriptsize$\pm$0.02} \\
WavMark & $+0.05$\,{\scriptsize$\pm$0.02} & $+0.01$\,{\scriptsize$\pm$0.02} & $-0.02$\,{\scriptsize$\pm$0.03} & $+0.07$\,{\scriptsize$\pm$0.02} \\
Timbre & $+0.01$\,{\scriptsize$\pm$0.02} & $-0.03$\,{\scriptsize$\pm$0.01} & $-0.06$\,{\scriptsize$\pm$0.02} & $+0.01$\,{\scriptsize$\pm$0.02} \\
CRAW & $-0.28$\,{\scriptsize$\pm$0.02} & $-0.35$\,{\scriptsize$\pm$0.03} & $-0.44$\,{\scriptsize$\pm$0.05} & $-0.27$\,{\scriptsize$\pm$0.04} \\
\bottomrule
\end{tabular}

%% file: tables/r_t3_ablation.tex
\begin{tabular}{lrrrrrrrr}
\toprule
 & & & & & \multicolumn{4}{c}{After eight passes} \\ \cmidrule(lr){6-9}
Variant & $\delta$ & KL & UTMOSv2 & Identity & Mimi & EnCodec & SpeechTok. & DAC16 \\
\midrule
Unwatermarked & -- & 0 & 3.15\,{\scriptsize$\pm$0.03} & -- & -- & -- & -- & -- \\
KGW & 2 & 1.73 & 3.06\,{\scriptsize$\pm$0.04} & 78.5\,{\scriptsize$\pm$3.3} & 8.3\,{\scriptsize$\pm$2.2} & 16.5\,{\scriptsize$\pm$3.0} & 5.3\,{\scriptsize$\pm$1.8} & 12.5\,{\scriptsize$\pm$2.6} \\
\midrule
Redwing & 0.7 & 1.52 & 3.05\,{\scriptsize$\pm$0.03} & \textbf{94.8}\,{\scriptsize$\pm$1.8} & \textbf{80.7}\,{\scriptsize$\pm$3.2} & \textbf{85.0}\,{\scriptsize$\pm$2.9} & \textbf{77.5}\,{\scriptsize$\pm$3.3} & 34.2\,{\scriptsize$\pm$3.8} \\
\multicolumn{9}{l}{\textit{Basis}} \\
\quad Random basis & 0.7 & 1.60 & 3.12\,{\scriptsize$\pm$0.03} & \textbf{92.0}\,{\scriptsize$\pm$2.2} & 52.3\,{\scriptsize$\pm$4.0} & 36.0\,{\scriptsize$\pm$3.8} & 24.5\,{\scriptsize$\pm$3.4} & 45.5\,{\scriptsize$\pm$4.0} \\
\quad Transition basis ($P$) & 0.65 & 1.55 & 2.99\,{\scriptsize$\pm$0.03} & \textbf{92.7}\,{\scriptsize$\pm$2.1} & 68.5\,{\scriptsize$\pm$3.7} & 73.8\,{\scriptsize$\pm$3.5} & 68.5\,{\scriptsize$\pm$3.7} & \textbf{56.5}\,{\scriptsize$\pm$4.0} \\
\multicolumn{9}{l}{\textit{Solved functions}} \\
\quad Random function, $h=g$ & 1 & 1.64 & 3.10\,{\scriptsize$\pm$0.03} & 79.7\,{\scriptsize$\pm$3.2} & 16.8\,{\scriptsize$\pm$3.0} & 19.5\,{\scriptsize$\pm$3.2} & 6.8\,{\scriptsize$\pm$2.0} & 14.8\,{\scriptsize$\pm$2.8} \\
\quad Detection with $h:=g$$^\ddagger$ & 0.7 & 1.52 & \textit{same} & \textbf{93.3}\,{\scriptsize$\pm$2.0} & \textbf{76.5}\,{\scriptsize$\pm$3.4} & \textbf{80.0}\,{\scriptsize$\pm$3.2} & \textbf{75.8}\,{\scriptsize$\pm$3.4} & \textbf{54.2}\,{\scriptsize$\pm$4.0} \\
\multicolumn{9}{l}{\textit{Amplitude bounds}} \\
\quad No bounds ($\kappa=\infty$) & 0.07 & 0.91 & 2.71\,{\scriptsize$\pm$0.05} & 36.3\,{\scriptsize$\pm$3.8} & 18.7\,{\scriptsize$\pm$3.1} & 25.2\,{\scriptsize$\pm$3.5} & 12.0\,{\scriptsize$\pm$2.6} & 15.2\,{\scriptsize$\pm$2.9} \\
\bottomrule
\end{tabular}